\newcommand*{\mysquare}{\rule[0.18em]{0.36em}{0.36em}}
\newcommand*{\mytriangle}{\raisebox{0.12em}{\resizebox{0.48em}{0.48em}{$\blacktriangleright$}}}
\newcommand*{\mybar}{\rule[0.32em]{0.62em}{0.08em}}
\newcommand*{\mydot}{\raisebox{0.14em}{\resizebox{0.44em}{!}{$\bullet$}}}
\setlist{%
  align=left,%
  labelindent=0mm, %
  leftmargin=!,%
  itemindent=0mm, %
  listparindent=\parindent,%
  parsep=0mm,%
  topsep=1mm,%
  itemsep=1mm%
}
\setlist[itemize,1]{label={\mysquare\ }, labelwidth=\widthof{\mysquare\ }}%
\setlist[itemize,2]{label={\mytriangle\ }, labelwidth=\widthof{\mytriangle\ }}%
\setlist[itemize,3]{label={\mybar\ }, labelwidth=\widthof{\mybar\ }}%
\setlist[itemize,4]{label={\mydot\ }, labelwidth=\widthof{\mydot\ }}%
\setlist[enumerate,1]{label=\arabic*), labelwidth=\widthof{9)}}%
\setlist[enumerate,2]{label=\arabic{enumi}.\arabic*), labelwidth=\widthof{9.9)}}%
\setlist[enumerate,3]{label=\arabic{enumi}.\arabic{enumii}.\arabic*), labelwidth=\widthof{9.9.9)}}%
\setlist[enumerate,4]{label=\arabic{enumi}.\arabic{enumii}.\arabic{enumiii}.\arabic*), labelwidth=\widthof{9.9.9.9)}}%
\newcommand*{\abstractnoindent}{}%
\let\abstractnoindent\abstract
\renewcommand*{\abstract}{\let\quotation\quote\let\endquotation\endquote
  \abstractnoindent}
\renewcommand*{\big}[1]{{\vcenter{\hbox{\scalebox{1.30}{\ensuremath#1}}}}}%
\definecolor{blue}{RGB}{58, 95, 205}%
\definecolor{red}{RGB}{205, 41, 144}%
\definecolor{orange}{RGB}{238, 118, 0}%
\definecolor{chocolate}{RGB}{205, 102, 29}%
\lstdefinestyle{input}{
  backgroundcolor=\color{black!12},%
  commentstyle=\itshape\color{black!50},%
  keywordstyle=\bfseries\color{black},%
  stringstyle=\color{black}%
}
\lstdefinestyle{output}{
  backgroundcolor=\color{black!6}%
}
\lstdefinestyle{codestyle}{
  language={},%
  keywords={},%
  otherkeywords={}%
}
\let\csname Sinput\endcsname\relax
\let\csname endSinput\endcsname\relax
\let\csname Soutput\endcsname\relax
\let\csname endSoutput\endcsname\relax
\lstdefinestyle{Rstyle}{
  language=R,%
  keywords={},%
  otherkeywords={}%
}
\lstdefinestyle{Cstyle}{
  language=C,%
  keywords={},%
  otherkeywords={}%
}
\lstdefinestyle{Bashstyle}{
  language=bash,%
  keywords={},%
  otherkeywords={}%
}
\lstdefinestyle{LaTeXstyle}{
  language=[LaTeX]TeX,%
  texcs={},%
  keywords={},%
  otherkeywords={}%
}
\renewcommand*{\cite}[2][]{\textcite[#1]{#2}}%
\newif\ifstarttheorem
\declaretheoremstyle[%
  spaceabove=0.5em,
  spacebelow=0.5em,
  headfont=\sffamily\bfseries\global\starttheoremtrue,
  notefont=\sffamily\bfseries,
  notebraces={(}{)},
  headpunct={},
  bodyfont=\normalfont,
  postheadspace=\newline%
]{myMainStyle}
\declaretheorem[style=myMainStyle, numberwithin=section]{definition}%
\declaretheorem[style=myMainStyle, sibling=definition]{proposition}
\declaretheorem[style=myMainStyle, sibling=definition]{lemma}
\declaretheorem[style=myMainStyle, sibling=definition]{theorem}
\declaretheorem[style=myMainStyle, sibling=definition]{corollary}
\declaretheorem[style=myMainStyle, sibling=definition]{remark}
\declaretheorem[style=myMainStyle, sibling=definition]{example}
\declaretheorem[style=myMainStyle, sibling=definition]{algorithm}
\preto\itemize{%
  \if@inlabel
    \ifstarttheorem
      \mbox{}\par\nobreak\vskip\glueexpr-\parskip-\baselineskip+0.25em\relax\hrule\@height\z@
    \fi%
  \fi%
  \global\starttheoremfalse%
 \def\tempa{proof}%
 \ifx\tempa\mycurrenvir
    \ifstarttheorem
      \mbox{}\par\nobreak\vskip\glueexpr-\parskip-\baselineskip+0.25em\relax\hrule\@height\z@
    \fi%
 \fi%
 \global\starttheoremfalse%
}
\preto\enditemize{\global\starttheoremfalse}
\preto\enumerate{%
  \if@inlabel
    \ifstarttheorem
      \mbox{}\par\nobreak\vskip\glueexpr-\parskip-\baselineskip+0.25em\relax\hrule\@height\z@
    \fi%
  \fi%
  \global\starttheoremfalse%
 \def\tempa{proof}%
 \ifx\tempa\mycurrenvir
    \ifstarttheorem
      \mbox{}\par\nobreak\vskip\glueexpr-\parskip-\baselineskip+0.25em\relax\hrule\@height\z@
    \fi%
 \fi%
 \global\starttheoremfalse%
}
\preto\endenumerate{\global\starttheoremfalse}
\newcommand{\mygrey}[1]{{\color{gray}#1}}
\NewDocumentCommand{\tmb}{O{0.1mm} O{0.1mm} O{0.88} m m m}{%
  \mathrel{%
    \vbox{\offinterlineskip\m@th
      \ialign{%
        \hfil##\hfil\cr
        $\scriptscriptstyle\text{\scalebox{#3}{#4}}\mathstrut$\cr%
        \noalign{\vspace{#1}}%
        \vtop{%
          \ialign{%
            \hfil##\hfil\cr
            $#5$\cr\noalign{\vspace{#2}}%
            $\scriptscriptstyle\text{\scalebox{#3}{#6}}\mathstrut$\cr%
          }%
        }\cr
      }%
    }%
  }%
}
\NewDocumentCommand{\tmbc}{O{0.1mm} O{0.1mm} O{0.88} m m m}{
  \mathrel{%
    \vbox{\offinterlineskip\m@th
      \ialign{%
        \hfil##\hfil\cr
        $\scriptscriptstyle\mathclap{\text{\scalebox{#3}{#4}}}\mathstrut$\cr%
        \noalign{\vspace{#1}}%
        \vtop{%
          \ialign{%
            \hfil##\hfil\cr
            $#5$\cr\noalign{\vspace{#2}}%
            $\scriptscriptstyle\mathclap{\text{\scalebox{#3}{#6}}}\mathstrut$\cr%
          }%
        }\cr
      }%
    }%
  }%
}
\newcommand*{\T}{^{\top}}
\newcommand*{\isim}{\tmb{\tiny{ind.}}{\sim}{}}
\newcommand*{\IN}{\mathbb{N}}
\newcommand*{\IR}{\mathbb{R}}
\newcommand*{\Exp}{\operatorname{Exp}}
\newcommand*{\Beta}{\operatorname{Beta}}
\newcommand*{\U}{\operatorname{Unif}}
\newcommand*{\B}{\operatorname{Ber}}
\newcommand*{\M}{\operatorname{M}}
\newcommand*{\I}{\mathbbm{1}}
\newcommand*{\rd}{\mathrm{d}}
\newcommand*{\LS}{\mathcal{LS}}
\newcommand*{\LSi}{\LS^{-1}}
\renewcommand*{\P}{\mathbb{P}}
\newcommand*{\E}{\mathbb{E}}
\newcommand*{\R}{\textsf{R}}
\newcommand*{\eps}{\varepsilon}
\def\calK{\mathcal{K}}
\newcommand*{\sort}[1]{\left[#1\right]}
\begin{document}
\thispagestyle{plain}
\begin{center}
  \sffamily
  {\bfseries\LARGE Index-mixed copulas\par}
  \bigskip\smallskip
  {\Large Klaus Herrmann\footnote{D{\'e}partement de
      Math{\'e}matiques, Universit{\'e} de Sherbrooke, \href{mailto:klaus.herrmann@usherbrooke.ca}{\nolinkurl{klaus.herrmann@usherbrooke.ca}}.},
    Marius Hofert\footnote{Department of Statistics and Actuarial Science, The University of
      Hong Kong, \href{mailto:mhofert@hku.hk}{\nolinkurl{mhofert@hku.hk}}.},
    Nahid Sadr\footnote{D{\'e}partement de
      Math{\'e}matiques, Universit{\'e} de Sherbrooke, \href{mailto:nahid.sadr@usherbrooke.ca}{\nolinkurl{nahid.sadr@usherbrooke.ca}}}
    \par
    \bigskip
    \today\par}
\end{center}
\par\smallskip
\begin{abstract}\textbf{Abstract}\newline\noindent
  The class of index-mixed copulas is introduced and its properties are
  investigated. Index-mixed copulas are constructed from given base copulas and
  a random index vector, and show a rather remarkable degree of analytical
  tractability. The analytical form of the copula and, if it exists, its density
  are derived. As the construction is based on a stochastic representation,
  sampling algorithms can be given. Properties investigated include bivariate
  and trivariate margins, mixtures of index-mixed copulas, symmetries such as
  radial symmetry and exchangeability, tail dependence, measures of concordance
  such as Blomqvist's beta, Spearman's rho or Kendall's tau and concordance
  orderings. Examples and illustrations are provided, and applications to the
  distribution of sums of dependent random variables as well as the stress
  testing of general dependence structures are given. A particularly interesting
  feature of index-mixed copulas is that they allow one to provide a revealing
  interpretation of the well-known family of Eyraud--Farlie--Gumbel--Morgenstern
  (EFGM) copulas. Through the lens of index-mixing, one can explain why EFGM
  copulas can only model a limited range of concordance and are tail
  independent, for example. Index-mixed copulas do not suffer from such
  restrictions while remaining analytically tractable.
\end{abstract}
\minisec{Keywords}
Copulas, index-mixing, stochastic representation, properties, mixtures,
Eyraud--Farlie--Gumbel--Morgenstern, sums of dependent random variables %
\minisec{MSC2010}
60E05, %
60G09, %
62E15, %
62H99, %
62H86, %
62H20%

\section{Introduction}\label{sec:intro}
The notion of copulas often provides a convenient tool for describing stochastic
dependence between random variables. For $d\ge 2$, a $d$-dimensional
\emph{copula} is a $d$-dimensional distribution function with standard uniform
univariate margins, restricted to the $d$-dimensional unit hypercube. By Sklar's
theorem, see \cite{sklar1959} or \cite{sklar1996}, any multivariate distribution
function can be decomposed into its margins and a copula, and the composition of
any copula with univariate marginal distribution functions leads to a valid
multivariate distribution function.  In applications under stochastic
dependence, it is typically required to be able to sample from a given copula,
for which stochastic representations are commonly used. A \emph{stochastic
  representation} is a representation of a random vector by transformations of
simpler or more fundamental random variables from well-known distributions (such
as uniform, exponential or normal; hence easy to simulate from). Stochastic
representations of random vectors are also important for understanding how
dependence arises among the components of a random vector and thus for modeling
purposes.

In this paper, we introduce the idea of index-mixing stochastic representations
to construct stochastic representations of new copulas termed index-mixed
copulas. By combining a number of initial base copulas into a
    new one, our approach falls into the wider literature of copula-to-copula
    transformations.  Such transformations factor into bivariate and fully
    multivariate approaches.  For example, bivariate approaches include the use
    of co-copulas \parencite{Girard2018,MANSTAVICIUS201948}, integral
    transformation approaches \parencite{popovic2023integral}, individual
    transformations of arguments \textcite{XIE201920}, binary aggregation
    operators \parencite{durante2009construction,kolesarova2013new} or weighted
    geometric means \parencite{cuadras2009constructing}. Approaches in the
    multivariate case include transformations based on univariate distortions
    \parencite{Morillas2005}, generalizations of Bernstein copula constructions
    \parencite{YangChenWangWang2015,XieFangYangBu2022} leading to multivariate
    composite copulas, generalizing product constructions
    \parencite{Khoudraji1995,Liebscher2008,MAZO2015363}, generalizations of
    distortions and product constructions \parencite{fischer2012constructing} or
    uniformity preserving transformations
    \parencite{McNeil2021modelling,hofert2025w}.

As the construction in this article is based on a stochastic representation,
efficient sampling algorithms can be provided. We investigate various
properties of index-mixed copulas. In particular, we derive the analytical form
of index-mixed copulas and of their densities (if the latter
exist). Furthermore, the analytical form of bivariate and trivariate margins can
be derived, as well as that of mixtures of index-mixed copulas. We also
investigate symmetries such as radial symmetry and exchangeability. In terms of
measures of association, tail dependence coefficients and the measures of
concordance Blomqvist's beta, Spearman's rho and Kendall's tau are addressed,
some even in the multivariate case. Further results include orthant dependence
and concordance orderings. Moreover, the distribution of the sum of
index-mixed-dependent exponential random variables can be identified. Examples
and illustrations are also provided, and an application to stress testing of
general dependence structures is outlined. Through the idea of index-mixing, we
can also revisit the construction of Eyraud--Farlie--Gumbel--Morgenstern (EFGM)
copulas from a different point of view, explaining why they can only model a
limited range of concordance and are tail independent, and why they have a
restricted parameter space -- none of these issues exist for index-mixed
copulas.

Section~\ref{sec:index:mixing} introduces the construction principle of
index-mixed copulas and provides insight into how they can be
interpreted. Section~\ref{sec:index:mixed:properties} presents various
properties of index-mixed copulas, including the aforementioned
ones. Section~\ref{sec:examples} provides example illustrations and outlines an
application. Finally, Section~\ref{sec:concl} contains concluding remarks and
avenues for future research on index-mixed copulas. The appendix provides
the aforementioned details about EFGM copulas.

\section{Construction of index-mixed copulas}\label{sec:index:mixing}
Let $e_k^K=(0,\dots,0,1,0,\dots,0)\T$, $k=1,\dots,K$, denote the $k$th standard
basis (column-)vector of $\IR^K$, $K\in\IN$. We now define the main
ingredients to construct and comprehend $d$-dimensional index-mixed copulas;
note that the symbol $\biguplus$ is used to denote a disjoint union throughout this work.

The eventual definition of index-mixed copulas and their properties are based on certain random matrices and related quantities which we now introduce.
To this end let $K\in\IN$ and $d\in\IN$, $d\ge 2$, and let
  $\bm{I}=(I_1,\ldots,I_d)\in\{1,\dots,K\}^d$ be the random \emph{index vector} with \emph{index
    distribution} $F_{\bm{I}}$.
  Furthermore, let $I^{\text{mat}}=(e_{I_1}^{K},\dots,e_{I_d}^{K})\T \in\{0,1\}^{d\times K}$
  be the associated random \emph{index matrix}, where the $j$th row of $I^{\text{mat}}$ is
  $(e^K_{I_j})\T=(0,\dots,0,1,0,\dots,0)$ (with the $1$ being located in position $I_j$
  of this vector of length $K$),
  $j=1,\dots,d$. %
  Associated with $I^{\text{mat}}$ comes the random \emph{index partition}
  $\biguplus_{k=1}^{K} J_k$ of $\{1,\dots,d\}$, with $k$th partition
  element $J_k=\{j\in\{1,\dots,d\}:I_{j,k}^{\text{mat}}=1\}$ being the set of all row indices
  $j\in\{1,\dots,d\}$ such that $I_{j,k}^{\text{mat}}=1$, and the \emph{size} of the $k$th partition
  element $J_k$ is $D_k=|J_k|=\sum_{j=1}^d I_{j,k}^{\text{mat}}$, $k=1,\dots,K$, with
  $\sum_{k=1}^KD_k=d$; note that $D_k$ is the number of $1$s in the $k$th
  column of $I^{\text{mat}}$ and also the number of times the index vector $\bm{I}$ contains $k$.
Even though some quantities (such as the index partition $\{J_1,\dots,J_K\}$) are not needed for Definition~\ref{def:index:mixed:copulas}, they will turn
out to be useful in proofs of statements about index-mixed copulas later.
We calculate the introduced  quantities explicitly in the following example which
is continued in Example~\ref{ex:realization:1.5} below and Example~\ref{ex:realization:2} in
Section~\ref{sec:analytical:form}.
\begin{example}[Index vector, index matrix and index partition]\label{ex:realization:1}
This example calculates the index matrix $I^{\text{mat}}$ and related quantities for a given realization of the index vector $\bm{I}$.
Assume $d=4$ and $K=5$ and that the realization of $\bm{I} = (I_1,I_2,I_3,I_4)$ is $(3,1,2,1)$.
In this case, the index matrix $I^{\text{mat}}$ is given by
\begin{align}
I^{\text{mat}} =
\begin{pmatrix}\label{eq:index:matrix}
0 & 0 & 1 & 0 & 0\\
1 & 0 & 0 & 0 & 0\\
0 & 1 & 0 & 0 & 0\\
1 & 0 & 0 & 0 & 0
\end{pmatrix},
\end{align}
where we can see that each row of $I^{\text{mat}}$ is a base vector of $\IR^5$.
Concerning the associated partition of $\{1,2,3,4\}$ we have
$J_1 = \{2,4\}$, $J_2 = \{3\}$, $J_3 = \{1\}$ and $J_4=J_5=\emptyset$, leading to $D_1 = 2$, $D_2=D_3=1$ and $D_4=D_5=0$.
\end{example}
The concept of index vectors now leads to the following copula construction.
\begin{definition}[Index-mixed copulas]\label{def:index:mixed:copulas}
For $K\in\IN$ and $d\in\IN$, $d\ge 2$, let $\bm{I}=(I_1,\ldots,I_d)\in\{1,\dots,K\}^d$ be a random index vector.
  Moreover, let $C_k$, $k=1,\dots,K$, be $d$-dimensional copulas referred to as
  \emph{base copulas}, and, for independent $\tilde{\bm{U}}_k\sim C_k$,
  $k=1,\dots,K$, let the $(d,K)$-matrix $\tilde{U}=(\tilde{\bm{U}}_k)_{k=1}^K$ be
  the random \emph{copula matrix} with $k$th column being $\tilde{\bm{U}}_k$ for
  all $k=1,\dots,K$.
  The $d$-dimensional \emph{index-mixed copula (of order $K$)} is defined as
  the $d$-dimensional distribution function of
  \begin{align}
    \bm{U}=\tilde{U}_{\bm{I}}:=\begin{pmatrix}\tilde{U}_{1,I_1}\\ \vdots \\ \tilde{U}_{d,I_d}\end{pmatrix},\label{U:index:mixed}
  \end{align}
  restricted to the unit hypercube.
\end{definition}
Starting with copula-dependent columns in the copula matrix provides a natural
way to introduce dependence. The index vector selects, in each dimension
$j=1,\dots,d$, precisely one element of the $j$th row of the copula matrix.
This guarantees that $U_j\sim\U(0,1)$ and thus that $\bm{U}$ indeed follows a
copula (the index-mixed copula). We therefore also have no restriction on the
index distribution, another advantage of the construction.

\begin{example}[Realization of index-mixed copulas; continuation of Example~\ref{ex:realization:1}]\label{ex:realization:1.5}
Continuing with the data in Example~\ref{ex:realization:1}, where $\bm{I}=(3,1,2,1)$, consider now a realization
\begin{align*}
\tilde{u} =
\begin{pmatrix}
\mygrey{0.3901599} & \mygrey{0.6312906} & 0.8859627 & \mygrey{0.4008808} & \mygrey{0.3877381}\\
0.4303661 & \mygrey{0.5132919} & \mygrey{0.6224755} & \mygrey{0.2833943} & \mygrey{0.3537322}\\
\mygrey{0.6463193} & 0.2398525 & \mygrey{0.9671771} & \mygrey{0.4003459} & \mygrey{0.3565862}\\
0.9121791 & \mygrey{0.3758194} & \mygrey{0.8885885} & \mygrey{0.7227110} & \mygrey{0.5034753}
\end{pmatrix}
\end{align*}
of the copula matrix
$\tilde{U}_{\bm{I}}\in[0,1]^{4\times 5}$. The realization of the index-mixed copula $\bm{U}$ is then given as
$(\tilde{u}_{1,3},\tilde{u}_{2,1},\tilde{u}_{3,2},\tilde{u}_{4,1})=(
0.8859627,
0.4303661,
0.2398525,
0.9121791
)$, where
$\tilde{u}_{1,3}$ is a realization of the first margin of $C_3$,
$(\tilde{u}_{2,1},\tilde{u}_{4,1})$ is a realization of the $(2,4)$-margin of
$C_1$ and $\tilde{u}_{3,2}$ is a realization of the third margin of $C_2$.
\end{example}

We conclude this section with an example where we illustrate index-mixing by means of scatter plots.
The necessary sampling algorithms are discussed in Section~\ref{sec_sampling} below, while the computation of the presented tail dependence coefficients is discussed in Section~\ref{sec_tail_dependence}.
More illustrations highlighting the construction of index-mixed copulas can be found in Figures~\ref{fig:base:cop:samples}--\ref{fig:IMC:2d:I:comon} in Section~\ref{sec:examples}.
\begin{example}[Illustrating the effect of index-mixing for $d=4$]\label{ex:ill:index:mixing:4d}
  Figure~\ref{fig:IMC:4d} shows the scatter-plot matrix of a sample of size
  10\,000 from the $4$-dimensional index-mixed copula of order $K=2$ with base
  copulas being the Gumbel copula $C^{\text{G}}$ and the (homogeneous) Gaussian copula $C^{\text{Ga}}$,
  with correlation parameter such that
  Kendall's tau equals $0.5$. The distribution of the index vector is specified
  via $p_{\bm{I}}(1,1,2,2)=1/2$, $p_{\bm{I}}(1,2,1,2)=1/3$,
  $p_{\bm{I}}(2,2,1,1)=1/6$.
  \begin{figure}[htbp]
    \hfill
    \includegraphics[width=0.8\textwidth]{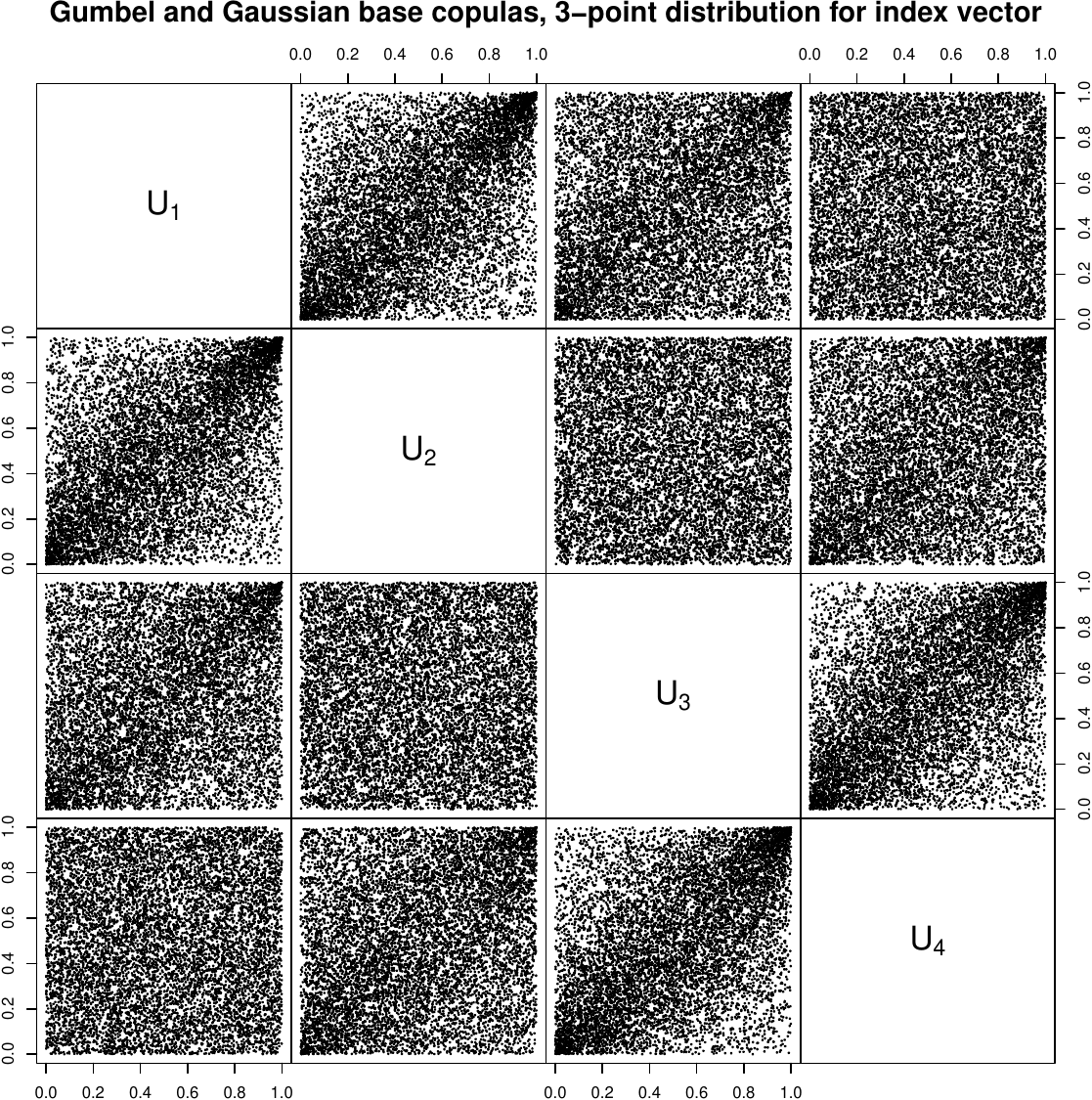}%
    \hspace*{\fill}
    \caption{Scatter-plot matrix of a sample of size 10\,000 from the
      index-mixed copula with $p_{\bm{I}}(1,1,2,2)=1/2$,
      $p_{\bm{I}}(1,2,1,2)=1/3$, $p_{\bm{I}}(2,2,1,1)=1/6$ and base copulas
      being the Gumbel copula $C^{\text{G}}$ and the Gaussian copula $C^{\text{Ga}}$.}
    \label{fig:IMC:4d}
  \end{figure}
  The bivariate margins of this index-mixed copula are given in Example~\ref{ex:biv:index:mixed}, while the pairwise upper tail dependence coefficients are computed just after Proposition~\ref{prop:tail:dependence}.
\end{example}

\section{Properties of index-mixed copulas}\label{sec:index:mixed:properties}
\subsection{Analytical form of index-mixed copulas}\label{sec:analytical:form}
  In the following Theorem~\ref{thm:index:mixed:form:C} we derive the analytical
  form of index-mixed copulas, while Corollary~\ref{cor:index:mixed:form:c}
  derives the corresponding density function.
  \begin{theorem}[Index-mixed copulas]\label{thm:index:mixed:form:C}
  Consider the setup of Definition~\ref{def:index:mixed:copulas} and let
  $I_{\cdot,k}^{\text{mat}}$ denote the $k$th column of the index matrix $I^{\text{mat}}$ associated to $\bm{I}$, as well as
  $\bm{u}^{I_{\cdot,k}^{\text{mat}}}=(u_1^{I_{1,k}^{\text{mat}}},\dots,u_d^{I_{d,k}^{\text{mat}}})$ (with the usual
  convention that $0^0=1$). Then
  $\bm{U}$ in~\eqref{U:index:mixed} follows the copula
  \begin{align}
    C(\bm{u})=\E_{\bm{I}}\biggl[\,\prod_{k=1}^KC_k(\bm{u}^{I_{\cdot, k}^{\text{mat}}})\biggr],\quad\bm{u}\in[0,1]^d.\label{eq:index:mixed:cop}
  \end{align}
\end{theorem}

\begin{remark}[Interpretation as mixture with partition of arguments]\label{rem:constr}
  In the remainder of the paper, we derive a rather remarkable number of results
  analytically. %
  It often helps to realize that each product
  $\prod_{k=1}^KC_k(\bm{u}^{I_{\cdot,k}^{\text{mat}}})$ in
  $C(\bm{u})=\E_{\bm{I}}\bigl[\,\prod_{k=1}^KC_k(\bm{u}^{I_{\cdot,k}^{\text{mat}}})\bigr]$ is
  a product of factors that depend on $u_1,\dots,u_d$ in such a way that each
  $u_j$ appears in precisely one factor (so for precisely one
  $k=1,\dots,K$). This holds because the (random) index matrix $I^{\text{mat}}$ generates a
  (random) index partition of $\{1,\ldots,d\}$; see Section~\ref{sec:index:mixing}.
 With the help of $J_1,\ldots,J_K$ we
  can more precisely say that $\prod_{k=1}^KC_k(\bm{u}^{I_{\cdot,k}^{\text{mat}}})$ in
  $C(\bm{u})=\E_{\bm{I}}\bigl[\,\prod_{k=1}^KC_k(\bm{u}^{I_{\cdot,k}^{\text{mat}}})\bigr]$ is
  a product of factors that depend on $u_1,\dots,u_d$ in such a way that $u_j$
  appears in the $k$th factor if and only if $j\in J_k$.  In the special case
  where there exists an integer $m$ that divides $d$, the index distribution is
  such that the only index partitions $\biguplus_{k=1}^{K} J_k$ occurring with
  positive probability are those with $D_k=|J_k|=m$ for all $k=1,\dots,K$, so
  the resulting index-mixed copula $C$ is a mixture where all components are of
  equal dimension $m$.
\end{remark}

To express the index distribution $F_{\bm{I}}$, we use the notation
$p_{\bm{I}}(\bm{k})=\P(\bm{I}=\bm{k})$, $\bm{k}\in\{1,\dots,K\}^d$, for its
probability mass function in what follows. As a simple example of
Theorem~\ref{thm:index:mixed:form:C}, we consider the form of bivariate index-mixed copulas,
and express them in terms of $p_{(I_1,I_2)}(k_1,k_2)=\P(I_1=k_1,I_2=k_2)$,
$k_1,k_2 \in \{1\dots,K\}$. For an illustration, see Example~\ref{ex:ill:index:mixing:2d} later.
\begin{example}[Bivariate index-mixed copulas]\label{ex:biv:index:mixed}
  If $d=2$ and $p_{(I_1,I_2)}(k_1,k_2)=\P(I_1=k_1,I_2=k_2)$, $k_1,k_2 \in \{1\dots,K\}$, then
  \begin{align}\label{eq:Copula:2D}
    C(u_1,u_2)=\sum_{k=1}^Kp_{(I_1,I_2)}(k,k)C_k(u_1,u_2)+\biggl(\,\sum_{k_1=1}^K
    \sum_{\substack{k_2=1\\ k_2\neq k_1}}^K p_{(I_1,I_2)}(k_1,k_2)\biggr)\Pi(u_1,u_2),
  \end{align}
  where $\Pi(u_1,u_2)=u_1u_2$ denotes the \emph{independence copula}, which is
  (in any dimension) simply the product of its
  components;  %
  see also (the proof of) Lemma~\ref{lem:biv:mar} Part~\ref{lem:biv:mar:1}
  below.
  In Example~\ref{ex:ill:index:mixing:4d}, the bivariate marginal copulas are given by
  \begin{align*}
    \begin{pmatrix}
      U_1 & \frac{C^{\text{G}}}{2}+\frac{\Pi}{3}+\frac{C^{\text{Ga}}}{6} & \frac{C^{\text{G}}}{3} + \frac{2\Pi}{3} & \Pi \\
      \frac{C^{\text{G}}}{2}+\frac{\Pi}{3}+\frac{C^{\text{Ga}}}{6} & U_2 & \Pi & \frac{C^{\text{Ga}}}{3} + \frac{2\Pi}{3} \\
      \frac{C^{\text{G}}}{3} + \frac{2\Pi}{3} & \Pi & U_3 & \frac{C^{\text{Ga}}}{2}+\frac{\Pi}{3}+\frac{C^{\text{G}}}{6} \\
      \Pi & \frac{C^{\text{Ga}}}{3} + \frac{2\Pi}{3} & \frac{C^{\text{Ga}}}{2}+\frac{\Pi}{3}+\frac{C^{\text{G}}}{6} & U_4
    \end{pmatrix},
  \end{align*}
  where $\frac{C^{\text{G}}}{2}+\frac{\Pi}{3}+\frac{C^{\text{Ga}}}{6}$ stands for the
  $(1/2,1/3,1/6)$-mixture of the appearing Gumbel copula, independent components and the Gaussian copula, for example.
\end{example}

The following corollary addresses the case of absolutely continuous index-mixed
copulas, that is index-mixed copulas with a probability density function.
In this context it is important to recall that not all copulas have densities, where the comonotonic copula $M(\bm{u}):=\min(u_1,\dots,u_d)$ serves as a fundamental example.
Concerning the following discussion note that
$C_k(\bm{u}^{I_{\cdot,k}^{\text{mat}}})=C_k^{I_{\cdot,k}^{\text{mat}}}(\bm{u}_{I_{\cdot,k}^{\text{mat}}})$, where
$C_k^{I_{\cdot,k}^{\text{mat}}}$ is the $D_k$-dimensional marginal copula of $C_k$
corresponding to all components of $I_{\cdot,k}^{\text{mat}}$ that are $1$ and where
$\bm{u}_{I_{\cdot,k}^{\text{mat}}}$ is the $D_k$-dimensional vector consisting of all
components of $\bm{u}$ for which $I_{\cdot,k}^{\text{mat}}$ is $1$ (in the same order); the
notation $\bm{u}_{I_{\cdot,k}^{\text{mat}}}$ will later also be used for
$\bm{x}_{I_{\cdot,k}^{\text{mat}}}$ and $\bm{t}_{I_{\cdot,k}^{\text{mat}}}$ when indexing arguments of
multivariate distribution functions or Laplace--Stieltjes transforms, for
example.  In case $D_k=0$ we set $C_k^{I_{\cdot,k}^{\text{mat}}}(\bm{u}_{I_{\cdot,k}^{\text{mat}}}) = 1$
to obtain again
$C_k(\bm{u}^{I_{\cdot,k}^{\text{mat}}})=C_k^{I_{\cdot,k}^{\text{mat}}}(\bm{u}_{I_{\cdot,k}^{\text{mat}}})$.
With the notation $C_k^{I_{\cdot,k}^{\text{mat}}}(\bm{u}_{I_{\cdot,k}^{\text{mat}}})$, we can now express
the densities of index-mixed copulas.
\begin{corollary}[Densities of index-mixed copulas]\label{cor:index:mixed:form:c}
  If the base copula $C_k$ is absolutely continuous with density $c_k$ for all
  $k=1,\dots,K$, then the index-mixed copula $C$ is absolutely continuous with
  density
  \begin{align*}
    c(\bm{u})=\E_{\bm{I}}\biggl[\,\prod_{k=1}^Kc_k^{I_{\cdot,k}^{\text{mat}}}(\bm{u}_{I_{\cdot,k}^{\text{mat}}})\biggr],\quad\bm{u}\in(0,1)^d,
  \end{align*}
  where $c_k^{I_{\cdot,k}^{\text{mat}}}$ denotes the density of $C_k^{I_{\cdot,k}^{\text{mat}}}$, where we set
$c_k^{I_{\cdot,k}^{\text{mat}}} = 1$ in case $D_k=0$.
\end{corollary}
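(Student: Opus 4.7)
The plan is to differentiate the formula for $C$ obtained in Theorem~\ref{thm:index:mixed:form:C} and verify that the resulting $d$-fold mixed partial derivative exists on $(0,1)^d$ and equals the stated expression. Since $\bm{I}$ takes only finitely many values, the expectation in~\eqref{eq:index:mixed:cop} is in fact a finite sum,
\[
C(\bm{u})=\sum_{\bm{i}\in\{1,\dots,K\}^d}\P(\bm{I}=\bm{i})\prod_{k=1}^K C_k^{i_{\cdot,k}}(\bm{u}_{i_{\cdot,k}}),
\]
so interchanging $\partial^d/(\partial u_1\cdots\partial u_d)$ with the summation is trivially justified and the problem reduces to differentiating a single summand for each fixed realization $\bm{i}$.

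The key observation is that, for fixed $\bm{i}$, the associated index partition $\{J_1,\dots,J_K\}$ of Definition~\ref{def:index:mixed:copulas} splits $\{1,\dots,d\}$ into disjoint blocks, and the factor $C_k^{i_{\cdot,k}}(\bm{u}_{i_{\cdot,k}})$ depends only on those coordinates $u_j$ with $j\in J_k$. The product is therefore of functions in disjoint groups of variables, so the $d$-fold mixed partial separates into a product of $D_k$-fold partials, one per base copula:
\[
\frac{\partial^d}{\partial u_1\cdots\partial u_d}\prod_{k=1}^K C_k^{i_{\cdot,k}}(\bm{u}_{i_{\cdot,k}})=\prod_{k=1}^K\frac{\partial^{D_k}\,C_k^{i_{\cdot,k}}(\bm{u}_{i_{\cdot,k}})}{\prod_{j\in J_k}\partial u_j}.
\]
Absolute continuity of $C_k$ implies the same for each of its lower-dimensional marginals $C_k^{i_{\cdot,k}}$ (whenever $D_k\ge 1$), so the $k$th factor on the right equals $c_k^{i_{\cdot,k}}(\bm{u}_{i_{\cdot,k}})$ almost everywhere. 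Re-assembling the sum into an expectation over $\bm{I}$ yields the claimed formula for $c(\bm{u})$.

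The only bookkeeping point is the degenerate case $D_k=0$: by the convention stated just before the corollary, $C_k^{i_{\cdot,k}}(\bm{u}_{i_{\cdot,k}})=1$ in that case, so the corresponding factor contributes neither a variable to differentiate nor a nontrivial multiplicative term, and the same convention $c_k^{i_{\cdot,k}}(\bm{u}_{i_{\cdot,k}})=1$ is used on the density side. I do not foresee a genuine obstacle; the entire content of the argument is the factorization of the derivative across the disjoint blocks of the index partition, which is exactly why this statement appears as a corollary to Theorem~\ref{thm:index:mixed:form:C}.
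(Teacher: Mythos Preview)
Your proposal is correct and follows essentially the same approach as the paper: the paper's proof simply notes that $\E_{\bm{I}}$ is a finite weighted sum, invokes linearity of differentiation, and uses that each $u_j$ appears in precisely one factor of $\prod_{k=1}^KC_k(\bm{u}^{I_{\cdot,k}})$. Your version is more explicit about the partition structure and handles the $D_k=0$ convention and the absolute continuity of marginals, but the core argument is identical.
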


We also immediately obtain the form of distribution functions $H$ with
index-mixed copulas, and diagonals of index-mixed copulas; the latter are
important for computing tail dependence coefficients, for example.
\begin{corollary}[Meta index-mixed models and index-mixed copula diagonals]
  \begin{enumerate}
  \item By Sklar's Theorem, Theorem~\ref{thm:index:mixed:form:C} implies that
    a joint distribution function $H$ with margins $F_1,\dots,F_d$ and index-mixed copula takes the form
    \begin{align}
      H(\bm{x})=\E_{\bm{I}}\biggl[\,\prod_{k=1}^KC_k(F_1(x_1)^{I_{1,k}^{\text{mat}}},\dots,F_d(x_d)^{I_{d,k}^{\text{mat}}})\biggr]=\E_{\bm{I}}\biggl[\,\prod_{k=1}^KH_k^{I_{\cdot,k}^{\text{mat}}}(\bm{x}_{I_{\cdot,k}^{\text{mat}}})\biggr],\quad \bm{x}\in\IR^d,\label{eq:joint:df}
    \end{align}
    where, as before, $H_k^{I_{\cdot,k}^{\text{mat}}}(\bm{x}_{I_{\cdot,k}^{\text{mat}}})$ denotes the $D_k$-dimensional marginal
    distribution of $H_k$ corresponding to those $D_k$ entries in $I_{\cdot,k}^{\text{mat}}$ which are $1$ (and $H_k^{I_{\cdot,k}^{\text{mat}}}(\bm{x}_{I_{\cdot,k}^{\text{mat}}}) = 1$ if $D_k=0$).
  \item  By Theorem~\ref{thm:index:mixed:form:C}, the diagonal
    $\delta_C$ of an index-mixed copula $C$ is given by
    \begin{align*}
      \delta_C(u) = C(u,\ldots,u) = \E_{\bm{I}}\biggl[\,\prod_{k=1}^K\delta_k^{I_{\cdot, k}^{\text{mat}}}(u)\biggr],
    \end{align*}
    where $\delta_k^{I_{\cdot, k}^{\text{mat}}}$ is the diagonal of $C_k^{I_{\cdot,k}^{\text{mat}}}$.
    Therefore, the diagonal of an index-mixed copula is the index-mixture of the
    diagonals of the underlying base copulas.
  \end{enumerate}
\end{corollary}

We finish this section with a continuation of Example~\ref{ex:realization:1},
where we exemplify the calculation of the quantities utilized in
Theorem~\ref{thm:index:mixed:form:C} and Corollary~\ref{cor:index:mixed:form:c}.
\begin{example}[Continuation of Example~\ref{ex:realization:1}]\label{ex:realization:2}
  Continuing with the data in Example~\ref{ex:realization:1}, where
  $\bm{I}=(3,1,2,1)$ and therefore $J_1 = \{2,4\}$, $J_2 = \{3\}$, $J_3 = \{1\}$
  and $J_4=J_5=\emptyset$, we now consider a given evaluation point
  $\bm{u}=(u_1,u_2,u_3,u_4)$ for $C$.  Referring to the realization of the index
  matrix $I^{\text{mat}}$ in \eqref{eq:index:matrix} we then have
  \begin{align*}
    \bm{u}^{I_{\cdot, 1}^{\text{mat}}} &= (u_1^0,u_2^1,u_3^0,u_4^1) = (1,u_2,1,u_4),\\
    \bm{u}^{I_{\cdot, 2}^{\text{mat}}} &= (u_1^0,u_2^0,u_3^1,u_4^0) = (1,1,u_3,1),\\
    \bm{u}^{I_{\cdot, 3}^{\text{mat}}} &= (u_1^1,u_2^0,u_3^0,u_4^0) = (u_1,1,1,1),\\
    \bm{u}^{I_{\cdot, 4}^{\text{mat}}} &= (u_1^0,u_2^0,u_3^0,u_4^0) = (1,1,1,1),\\
    \bm{u}^{I_{\cdot, 5}^{\text{mat}}} &= (u_1^0,u_2^0,u_3^0,u_4^0) = (1,1,1,1),
  \end{align*}
  and hence
  $C_1(\bm{u}^{I_{\cdot, 1}^{\text{mat}}}) = C_1(1,u_2,1,u_4) = C_1^{2,4}(u_2,u_4)$.  Using
  the notation introduced after Theorem~\ref{thm:index:mixed:form:C}, we equally
  have $C_1^{I_{\cdot, 1}^{\text{mat}}} = C_1^{2,4}$ and $\bm{u}_{I_{\cdot, 1}^{\text{mat}}} = (u_2,u_4)$
  and therefore, as intended,
  $C_1(\bm{u}^{I_{\cdot, 1}^{\text{mat}}}) = C_1^{I_{\cdot, 1}^{\text{mat}}}(\bm{u}_{I_{\cdot, 1}^{\text{mat}}})$.
  Along the same lines, we have
  $C_2(\bm{u}^{I_{\cdot, 2}^{\text{mat}}}) = C_2(1,1,u_3,1) = C_2^3(u_3) = u_3 =
  C_2^{I_{\cdot, 2}^{\text{mat}}}(\bm{u}_{I_{\cdot, 2}^{\text{mat}}})$ as well as
  $C_3(\bm{u}^{I_{\cdot, 3}^{\text{mat}}}) = u_1 = C_3^{I_{\cdot, 3}^{\text{mat}}}(\bm{u}_{I_{\cdot,
      3}^{\text{mat}}})$.  Lastly, noting that $D_4=D_5=0$, we have
  $C_4(\bm{u}^{I_{\cdot, 4}^{\text{mat}}}) = C_4(1,1,1,1) = 1 = C_4^{I_{\cdot,
      4}^{\text{mat}}}(\bm{u}_{I_{\cdot, 4}^{\text{mat}}})$ and equally
  $C_5(\bm{u}^{I_{\cdot, 5}^{\text{mat}}}) = 1 = C_5^{I_{\cdot, 5}^{\text{mat}}}(\bm{u}_{I_{\cdot, 5}^{\text{mat}}})$.

Concerning the product $\prod_{k=1}^KC_k(\bm{u}^{I_{\cdot, k}^{\text{mat}}})$ we can now compute
\begin{align}\label{eq:concrete:product}
\prod_{k=1}^KC_k(\bm{u}^{I_{\cdot, k}^{\text{mat}}})
&= C_1(1,u_2,1,u_4) \cdot C_2(1,1,u_3,1) \cdot C_3(u_1,1,1,1) \cdot 1 \cdot 1 = C_1^{2,4}(u_2,u_4) u_3 u_1,
\end{align}
highlighting again the fact that each $u_j$, $j\in\{1,\ldots,d\}$, is in exactly one factor of the product, or, phrased more precisely as in Remark~\ref{rem:constr}, $u_j$ appears in factor $k$ if and only if $j\in J_k$ which can easily be verified with $J_1,\ldots,J_5$ given above.

Finally, when computing the value $C(\bm{u})$ it is now necessary to evaluate the expectation in \eqref{eq:index:mixed:cop}.
For every possible realization of $\bm{I}$ we therefore have to add up terms formed similarly to \eqref{eq:concrete:product}, weighted by their respective probabilities.
From this we see that $C$ is in fact a \emph{mixture} of products of the lower dimensional margins of the base copulas $C_k$, $k=1,\dots,K$.
\end{example}

\subsection{Sampling}\label{sec_sampling}
As we started the construction of index-mixed copulas from a stochastic
representation, sampling algorithms are available. The following algorithm
outlines the procedure how to obtain $n$ (pseudo-random) realization of $\bm{U}$
from an index-mixed copula $C$ with base copulas $C_1,\ldots,C_K$ and index
vector $\bm{I}$. %
\begin{algorithm}[Sampling $\bm{U}_1,\dots,\bm{U}_n$ sequentially]\label{algo:sample:one}
  \begin{enumerate}
  \item For $i=1,\dots,n$, do:
    \begin{enumerate}
    \item Sample the copula matrix $\tilde{U}_i$ via $\tilde{U}_i=(\tilde{\bm{U}}_{i,k})_{k=1}^K$,
      where $\tilde{\bm{U}}_{i,k}\isim C_k$ for all $k=1,\dots,K$.
    \item Independently of $\tilde{U}_i$, sample $\bm{I}_i\sim F_{\bm{I}}$.
    \end{enumerate}
  \item Return
    $\bm{U}_i=\tilde{U}_{i,\bm{I}_i}=(\tilde{U}_{i,1,I_{i,1}},\dots,\tilde{U}_{i,d,I_{i,d}})$,
    $i=1,\dots,n$, where $I_{i,j}$ denotes the $j$th entry of
    $\bm{I}_i$ and $\tilde{U}_{i,j,I_{i,j}}$ the entry $(j,I_{i,j})$ in
    $\tilde{U}_i$.
  \end{enumerate}
\end{algorithm}
Algorithm~\ref{algo:sample:one} iterates over $i=1,\dots,n$. The following
algorithm provides a vectorized (parallel) version by first drawing all copula
matrices, then all index vectors, and both of these parts can be done in
parallel.
\begin{algorithm}[Sampling $\bm{U}_1,\dots,\bm{U}_n$ vectorized]
  \begin{enumerate}
  \item For $i=1,\dots,n$, sample copula matrices $\tilde{U}_i$ via $\tilde{U}_i=(\tilde{\bm{U}}_{i,k})_{k=1}^K$,
    where $\tilde{\bm{U}}_{i,k}\isim C_k$ for all $k=1,\dots,K$, $i=1,\dots,n$.
  \item Independently of $\tilde{U}_i$, $i=1,\dots,n$,
    sample $\bm{I}_i\sim F_{\bm{I}}$, $i=1,\dots,n$.
  \item Return
    $\bm{U}_i=\tilde{U}_{i,\bm{I}_i}=(\tilde{U}_{i,1,I_{i,1}},\dots,\tilde{U}_{i,d,I_{i,d}})$,
    $i=1,\dots,n$, where $I_{i,j}$ denotes the $j$th entry of
    $\bm{I}_i$ and $\tilde{U}_{i,j,I_{i,j}}$ the entry $(j,I_{i,j})$ in
    $\tilde{U}_i$.
  \end{enumerate}
\end{algorithm}

The above algorithms should be fairly fast for typical choices as base copulas.
However, for a fixed $d$, the larger the $K$ and the smaller the support of $\bm{I}$
(the set of all elements of $\{1,\dots,K\}^d$ with positive probability under
the index distribution), the more wasteful the above general-purpose algorithms
are for sampling index-mixed copulas. For example, consider $K>2$ and an index
vector with support in $\{1,2\}^d$, then one does not need to sample $C_k$,
$k=3,\dots,K$, at all. Depending on $K$ and the support of $\bm{I}$, the
following algorithm may thus be faster.
\begin{algorithm}[Efficiently sampling $\bm{U}_1,\dots,\bm{U}_n$ sequentially]%
  \begin{enumerate}
  \item For $i=1,\dots,n$, do:
    \begin{enumerate}
    \item Sample $\bm{I}_i\sim F_{\bm{I}}$.
    \item Independently of $\bm{I}_i$, for $k=1,\dots,K$,
      sample the $D_{i,k}$-dimensional
      $\tilde{U}_{i,k}\sim C_k^{I_{i,\cdot,k}^{\text{mat}}}$, where $I_{i,\cdot,k}^{\text{mat}}$ denotes
      the $k$th column of the index matrix corresponding to the index vector
      $\bm{I}_i$.
    \item Concatenate the elements of the $K$ vectors $\tilde{U}_{i,1}$,
      \ldots, $\tilde{U}_{i,K}$ (possibly of length $1$) according to the
      (on $i$ dependent) index partition $\biguplus_{k=1}^{K} J_{i,k}$
      to build $\bm{U}_i$ in the sense that the components of
      $\tilde{U}_{i,k}$ appear in the positions given by $J_{i,k}$ in the order of
      $k=1,\dots,K$.
    \end{enumerate}
  \item Return $\bm{U}_i$, $i=1,\dots,n$.
  \end{enumerate}
\end{algorithm}

The following remark provides various options for the index distribution $F_{\bm{I}}$.
\begin{remark}[Different approaches for sampling $\bm{I}$]
  \begin{enumerate}
  \item For $K=2$, one way to specify a distribution for $\bm{I}$ is via
    $\bm{I}=\bm{1}+\bm{B}$, where $\bm{B}\sim\B^{C_{\bm{B}}}_d(\bm{p})$ denotes
    a random vector from a $d$-dimensional distribution function with copula
    $C_{\bm{B}}$ and $j$th margin $\B(p_j)$, $j=1,\dots,d$.  In \R\
    realizations of $C_{\bm{B}}$ can be generated via the \texttt{copula} package, see
    \cite{hofertkojadinovicmaechleryan2018, copula}.  Applying Bernoulli
    quantile functions via \texttt{qbinom} then yields a realization of
    $\bm{I}$.  By adding up to $K-1$ $d$-dimensional Bernoulli vectors (possibly
    with different $\bm{p}$), one can obtain index vector distributions for
    $K>2$.
  \item For general $K$, another way to specify a distribution for $\bm{I}$ is via
    $\bm{I}=\bm{1}+\bm{B}$, where $\bm{B}\sim\M_d(K-1,\bm{q})$ is multinomial (a
    $d$-dimensional random vector with components in $\{0,\dots,K-1\}$ summing to
    $K-1$) for a $d$-dimensional vector of probabilities $\bm{q}$; the components
    of $\bm{I}$ sum to $K-1+d$ then. In \R, $n$ vectors $\bm{I}$ can be drawn via
    \texttt{1+t(rmultinom(n, size = K-1, prob = probs))}, where $\texttt{K}$ is $K$
    and \texttt{probs} is the $d$-dimensional vector of probabilities $\bm{q}$.
  \item Realizing that $\sum_{j=1}^dI_j\in\{d,d+1,\dots,dK\}$ one could alternatively
    first sample $\sum_{j=1}^dI_j$ from a discrete distribution with support $\{d,d+1,\dots,dK\}$
    and then draw uniformly from all $\bm{I}$ with this given sum.
  \item Yet another way to sample $\bm{I}$ is to draw from any
    $d$-dimensional copula and map the components to any marginal
    distributions with support $\{1,\dots,K\}$. This can be done with
    the \R\ package \texttt{copula}.
  \item One can also specify $p_{\bm{I}}(\bm{i})\ge 0$ for multiple
    $\bm{i}\in\{1,\dots,K\}^d$ such that $\sum_{\bm{i}} p_{\bm{I}}(\bm{i})=1$. Then draw
    $\bm{i}$'s according to their probabilities $p_{\bm{I}}(\bm{i})$. If all $\bm{i}$'s
    are stored as rows in a matrix \texttt{I}, then one can draw $n$ $\bm{I}$'s
    in \R\ via \texttt{I[sample(1:length(probs), size = n, replace = TRUE, prob
      = probs),]}, where \texttt{probs} is a vector of probabilities of length
    equal to the number of rows of \texttt{I} (so of the number of different
    $\bm{i}$ considered in this construction).
  \end{enumerate}
\end{remark}

\subsection{Comonotone index vectors}
In case of comonotone index vectors, the following result shows that
index-mixed copulas are convex combinations of their base copulas,
the construction thus generalizes traditional discrete copula mixtures.
For an illustration, see Example~\ref{ex:ill:index:mixing:2d} later.
\begin{corollary}[Comonotone index vectors]\label{cor:invar:equal:comon}
  If $p_{\bm{I}}(k,\dots,k)\ge 0$, $\sum_{k=1}^Kp_{\bm{I}}(k,\dots,k)=1$, then
  the index-mixed copula is the convex combination
  $C(\bm{u})=\sum_{k=1}^Kp_{\bm{I}}(k,\dots,k)C_k(\bm{u})$, $\bm{u}\in[0,1]^d$.
  In particular, if $C_1=\dots=C_K=:C_0$, then $C(\bm{u})=C_0(\bm{u})$,
  $\bm{u}\in[0,1]^d$, so any (single) copula $C_0$ is invariant under comonotone
  index-mixing with itself as base copula(s).
\end{corollary}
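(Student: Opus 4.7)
The plan is to apply Theorem~\ref{thm:index:mixed:form:C} directly and evaluate the expectation on the finite support of $\bm{I}$, which by assumption is concentrated on the $K$ constant vectors $\bm{k}=(k,\dots,k)$, $k=1,\dots,K$. On the event $\{\bm{I}=(k,\dots,k)\}$, the index matrix $I$ has its $k$th column equal to $(1,\dots,1)\T$ and every other column equal to $(0,\dots,0)\T$. Consequently $\bm{u}^{I_{\cdot,k}}=\bm{u}$ while $\bm{u}^{I_{\cdot,\ell}}=(1,\dots,1)$ for $\ell\neq k$, so that
\begin{align*}
\prod_{\ell=1}^K C_\ell(\bm{u}^{I_{\cdot,\ell}})
= C_k(\bm{u})\prod_{\ell\neq k} C_\ell(1,\dots,1)
= C_k(\bm{u}),
\end{align*}
using that $C_\ell$ evaluated at the all-ones vector equals $1$ since it is a copula. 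In the language of Remark~\ref{rem:constr}, the index partition degenerates to $J_k=\{1,\dots,d\}$ and $J_\ell=\emptyset$ for $\ell\neq k$, so only a single factor in the product is non-trivial.

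Next I would take the expectation over $\bm{I}$, which is simply a finite weighted sum:
\begin{align*}
C(\bm{u})=\E_{\bm{I}}\biggl[\,\prod_{\ell=1}^K C_\ell(\bm{u}^{I_{\cdot,\ell}})\biggr]=\sum_{k=1}^K p_k\, C_k(\bm{u}),\quad \bm{u}\in[0,1]^d,
\end{align*}
which establishes the convex combination formula. The ``in particular'' claim then follows by substituting $C_k=C_0$ for all $k$ and factoring $C_0(\bm{u})$ out of the sum, leaving $\sum_{k=1}^K p_k = 1$.

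There is no real obstacle here: the result is essentially a sanity check that the general formula in Theorem~\ref{thm:index:mixed:form:C} specializes correctly when the random index partition is deterministic up to the choice of which single block is non-empty. The only care needed is the convention $0^0=1$ (already fixed in Theorem~\ref{thm:index:mixed:form:C}) so that $\bm{u}^{I_{\cdot,\ell}}=(1,\dots,1)$ is interpreted correctly for the ``empty'' columns, and recognizing that $C_\ell(1,\dots,1)=1$ collapses all but one factor in each realization.
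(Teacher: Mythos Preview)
Your proof is correct and follows essentially the same approach as the paper: both apply Theorem~\ref{thm:index:mixed:form:C}, observe that on $\{\bm{I}=(k,\dots,k)\}$ the index matrix has $k$th column $\bm{1}$ and all other columns $\bm{0}$, so $\bm{u}^{I_{\cdot,\ell}}=\bm{1}$ for $\ell\neq k$ collapses the product to $C_k(\bm{u})$, and then write the expectation as the weighted sum $\sum_k p_k C_k(\bm{u})$.
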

By Corollary~\ref{cor:invar:equal:comon}, index-mixed copulas can reach the full
range of concordance. For example, for $K=2$, we can choose the independence
copula and the comonotone copula as base copulas.  What allows us then to attain
the whole range of concordance is a suitable choice of $p_{\bm{I}}(1,\dots,1)$ (or
$p_{\bm{I}}(2,\dots,2)=1-p_{\bm{I}}(1,\dots,1)$). In contrast, EFGM copulas (see Section~\ref{sec:EFGM}) are
restricted to the symmetric case $p_{\bm{I}}(1,\dots,1)=p_{\bm{I}}(2,\dots,2)=1/2$ and thus, as we explain through
the lens of index-mixing in Section~\ref{sec:EFGM:limited:dep}, EFGM copulas
only attain a limited range of concordance.

As will become clear from the arguments given in the proof of
Lemma~\ref{lem:biv:mar} below, also the converse statement of
Corollary~\ref{cor:invar:equal:comon} is true. If the index vector $\bm{I}$ is
not comonotone, then the index-mixed copula $C$ is not a convex combination of
the base copulas $C_1,\dots,C_K$ alone anymore, but a convex combination that
additionally may involve (depending on the distribution of $\bm{I}$) all margins
(including one-dimensional ones) of the base copulas $C_1,\dots,C_K$.
For $d=2$, $C$ can still be expressed as a classical convex
combination, namely that of $C_1,\dots,C_K$ and
$\Pi$ as we have
seen before in \eqref{eq:Copula:2D}.  %

\subsection{Bivariate and trivariate margins}\label{sec:margins:2d:3d}
It is immediate from~\eqref{eq:index:mixed:cop} that margins of index-mixed
copulas are again index-mixed copulas, where the associated base copulas are the
respective margins of the original base copulas.  The following result provides
the explicit functional form in the case of bivariate and trivariate margins.
In principle (depending on the index distribution), higher-dimensional margins
are mixtures of the base copulas and all their lower-dimensional margins,
including those of dimension $1$. However, the combinatorial nature of the problem leads
to rather intractable formulas for dimensions higher than three. If necessary, these margins can be evaluated via
\eqref{eq:index:mixed:cop}, which is just a finite weighted sum.
\begin{lemma}[Bivariate and trivariate margins]\label{lem:biv:mar}
  Consider $\bm{U}\sim C$ for a $d$-dimensional index-mixed copula.
We then have the following results concerning the bivariate and trivariate margins of $C$:
  \begin{enumerate}
  \item\label{lem:biv:mar:1} For $1\le j_1<j_2\le d$, let $p_{(I_{j_1},I_{j_2})}(k_1,k_2)=\P(I_{j_1}=k_1, I_{j_2}=k_2)$, $k_1,k_2\in\{1,\dots,K\}$.
    Then $(U_{j_1},U_{j_2})$ follows the $(j_1,j_2)$-margin $C^{j_1,j_2}$ of $C$ given by
    \begin{align*}
      C^{j_1,j_2}(u_{j_1},u_{j_2})=\sum_{k=1}^Kp_{(I_{j_1},I_{j_2})}(k,k)C_k^{j_1,j_2}(u_{j_1},u_{j_2})+\biggl(\,\sum_{\substack{k_1,k_2=1\\ k_1\neq k_2}}^Kp_{(I_{j_1},I_{j_2})}(k_1,k_2)\biggr)\Pi(u_{j_1},u_{j_2})
    \end{align*}
    for all $u_{j_1},u_{j_2}\in[0,1]$, where $C_k^{j_1,j_2}$ denotes the $(j_1,j_2)$-margin of $C_k$, $k=1,\dots,K$.
    Note that the second sum in the above expression for $C^{j_1,j_2}$ is simply $1-\sum_{k=1}^Kp_{(I_{j_1},I_{j_2})}(k,k)$ here. %
    Furthermore, for order $K=2$, we obtain the bivariate margins
    \begin{align*}
      C^{j_1,j_2}(u_{j_1},u_{j_2})&=p_{(I_{j_1},I_{j_2})}(1,1)C_1(u_{j_1},u_{j_2})+p_{(I_{j_1},I_{j_2})}(2,2)C_2(u_{j_1},u_{j_2})\\
      &\phantom{={}}+(p_{(I_{j_1},I_{j_2})}(1,2)+p_{(I_{j_1},I_{j_2})}(2,1))\Pi(u_{j_1},u_{j_2}).
    \end{align*}
  \item\label{lem:biv:mar:2} For $1\le j_1<j_2<j_3\le d$, let $p_{(I_{j_1},I_{j_2},I_{j_3})}(k_1,k_2,k_3)=\P(I_{j_1}=k_1, I_{j_2}=k_2, I_{j_3}=k_3)$, $k_1,k_2,k_3\in\{1,\dots,K\}$.
    Then $(U_{j_1},U_{j_2},U_{j_3})$ follows the $(j_1,j_2,j_3)$-margin $C^{j_1,j_2,j_3}$ of $C$ given by
    \begin{align*}
      C^{j_1,j_2,j_3}(u_{j_1},u_{j_2},u_{j_3})
      &=\sum_{k=1}^Kp_{(I_{j_1},I_{j_2},I_{j_3})}(k,k,k)C_k^{j_1,j_2,j_3}(u_{j_1},u_{j_2},u_{j_3})\\
      &\phantom{={}}+\sum_{k=1}^K\biggl\{\biggl(\,\sum_{\substack{k_3=1\\k_3\neq k}}^Kp_{(I_{j_1},I_{j_2},I_{j_3})}(k,k,k_3)\biggr)C_k^{j_1,j_2}(u_{j_1},u_{j_2})u_{j_3}\\
      &\phantom{={}+\sum_{k=1}^K\biggl\{}+\biggl(\,\sum_{\substack{k_2=1\\k_2\neq k}}^Kp_{(I_{j_1},I_{j_2},I_{j_3})}(k,k_2,k)\biggr) C_k^{j_1,j_3}(u_{j_1},u_{j_3})u_{j_2}\\
      &\phantom{={}+\sum_{k=1}^K\biggl\{}+ \biggl(\,\sum_{\substack{k_1=1\\k_1\neq k}}^Kp_{(I_{j_1},I_{j_2},I_{j_3})}(k_1,k,k)\biggr) C_k^{j_2,j_3}(u_{j_2},u_{j_3})u_{j_1}\biggr\}\\
      &\phantom{={}}+\biggl(\,\sum_{\substack{k_1,k_2,k_3=1\\\text{all $k_l$ distinct}}}^K p_{(I_{j_1},I_{j_2},I_{j_3})}(k_1,k_2,k_3)\biggr)\Pi(u_{j_1},u_{j_2},u_{j_3})
    \end{align*}
    for all $u_{j_1},u_{j_2},u_{j_3}\in[0,1]$, where $C_k^{j_1,j_2,j_3}$ denotes the $(j_1,j_2,j_3)$-margin of $C_k$, $k=1,\dots,K$.
    Furthermore, for order $K=2$, we obtain the trivariate margins
    \begin{align*}
      C^{j_1,j_2,j_3}(u_{j_1},u_{j_2},u_{j_3})
      &=p_{(I_{j_1},I_{j_2},I_{j_3})}(1,1,1)C_1(u_{j_1},u_{j_2},u_{j_3})+p_{(I_{j_1},I_{j_2},I_{j_3})}(2,2,2)C_2(u_{j_1},u_{j_2},u_{j_3})\\
      &\phantom{{}=}+p_{(I_{j_1},I_{j_2},I_{j_3})}(1,1,2)C_1^{j_1,j_2}(u_{j_1},u_{j_2})u_{j_3}
        +p_{(I_{j_1},I_{j_2},I_{j_3})}(1,2,1) C_1^{j_1,j_3}(u_{j_1},u_{j_3})u_{j_2}\\
        &\phantom{{}=}+p_{(I_{j_1},I_{j_2},I_{j_3})}(2,1,1) C_1^{j_2,j_3}(u_{j_2},u_{j_3})u_{j_1}\\
      &\phantom{{}=}+p_{(I_{j_1},I_{j_2},I_{j_3})}(2,2,1) C_2^{j_1,j_2}(u_{j_1},u_{j_2})u_{j_3}
        +p_{(I_{j_1},I_{j_2},I_{j_3})}(2,1,2) C_2^{j_1,j_3}(u_{j_1},u_{j_3})u_{j_2}\\
        &\phantom{{}=}+p_{(I_{j_1},I_{j_2},I_{j_3})}(1,2,2) C_2^{j_2,j_3}(u_{j_2},u_{j_3})u_{j_1}.
    \end{align*}
  \end{enumerate}
\end{lemma}
Note that for Corollary~\ref{cor:index:mixed:form:c} we introduced the notation
$C_k^{I_{\cdot,k}^{\text{mat}}}$ for the marginal copula of $C_k$ corresponding to all
entries in $I_{\cdot,k}^{\text{mat}}$ that are equal to $1$, while in the current
Section~\ref{sec:margins:2d:3d} we wrote the indices ($j_1,j_2,j_3$) of these
entries as exponents.  Both notations can be convenient to work with, as much as
it is convenient to have both the notion of the index vector $\bm{I}$ and its
implied index matrix $I^{\text{mat}}$ available.

\subsection{Mixtures}
The following result considers the case where the base copulas are themselves mixtures.
\begin{corollary}[Closure with respect to mixtures of the base copulas]\label{cor:index:mixed:mixtures}
  If, for $k=1,\dots,K$, the base copula $C_k$ is a mixture $C_k(\bm{u})=\int_{\IR}C_k(\bm{u};\theta_k)\,\rd G_k(\theta_k)$,
  where $C_k(\cdot;\theta_k)$ is a copula for all $\theta_k$ in the domain of the distribution function $G_k$,
  then
  \begin{align*}
    C(\bm{u})=\int_{\IR}\dots\int_{\IR} \E_{\bm{I}}\biggl[\,\prod_{k=1}^KC_k(\bm{u}^{I_{\cdot,k}^{\text{mat}}};\theta_k)\biggr]\,\rd G_1(\theta_1)\cdots\,\rd G_K(\theta_K),\quad\bm{u}\in[0,1]^d,
  \end{align*}
  that is the index-mixed copula is a mixture of index-mixed copulas. In
  particular, if, for $k=1,\dots,K$, $G_k(\theta_k)=\sum_{l_k=1}^{m_k}q_k(l_k)\I_{[\theta_{k,l_k},\infty)}(\theta_k)$ with
  $q_k(l_k)\ge 0$, $\sum_{l_k=1}^{m_k}q_k(l_k)=1$, and $-\infty<\theta_{k,1}<\dots<\theta_{k,m_k}<\infty$
  (the case of discrete mixtures), then we have
  \begin{align*}
    C(\bm{u})=\sum_{l_K=1}^{m_K}\dots\sum_{l_1=1}^{m_1}q_1(l_1)\cdot\ldots\cdot q_K(l_K)\,\E_{\bm{I}}\biggl[\,\prod_{k=1}^KC_k(\bm{u}^{I_{\cdot,k}^{\text{mat}}};\theta_{k,l_k})\biggr].
  \end{align*}
\end{corollary}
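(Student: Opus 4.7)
The plan is to start from the formula in Theorem~\ref{thm:index:mixed:form:C}, substitute the mixture representation for each base copula, and then interchange the product, the expectation over $\bm{I}$, and the $K$ integrals against $G_1,\dots,G_K$. Concretely, the starting point is
\begin{align*}
C(\bm{u}) = \E_{\bm{I}}\biggl[\,\prod_{k=1}^K C_k(\bm{u}^{I_{\cdot,k}})\biggr].
\end{align*}
Replacing each factor $C_k(\bm{u}^{I_{\cdot,k}})$ by $\int_{\IR} C_k(\bm{u}^{I_{\cdot,k}};\theta_k)\,\rd G_k(\theta_k)$ turns the product of $K$ univariate integrals (each in its own dummy variable $\theta_k$) into a single iterated integral over $\IR^K$ with product measure $\rd G_1(\theta_1)\cdots\rd G_K(\theta_K)$, since the integrals involve disjoint variables.

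Next, I would justify swapping $\E_{\bm{I}}$ with the multiple integral via Fubini's theorem. This is painless: the expectation $\E_{\bm{I}}$ is a finite sum (there are at most $K^d$ realizations of $\bm{I}$), and the integrand $\prod_{k=1}^K C_k(\bm{u}^{I_{\cdot,k}};\theta_k)$ is bounded in $[0,1]$ because each factor is a copula evaluated at an argument in $[0,1]^d$. Hence all integrals are absolutely convergent and the interchange yields exactly the claimed formula
\begin{align*}
C(\bm{u}) = \int_{\IR}\cdots\int_{\IR}\E_{\bm{I}}\biggl[\,\prod_{k=1}^K C_k(\bm{u}^{I_{\cdot,k}};\theta_k)\biggr]\,\rd G_1(\theta_1)\cdots\rd G_K(\theta_K),
\end{align*}
and the inner expectation is precisely the index-mixed copula built from the parameter-indexed base copulas $C_k(\cdot;\theta_k)$ by Theorem~\ref{thm:index:mixed:form:C}.

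For the discrete case with $G_k(\theta)=\sum_{l_k=1}^{m_k} q_{k,l_k}\I_{[\theta_{k,l_k},\infty)}(\theta)$, the integrals against $\rd G_k$ collapse to finite weighted sums over $l_k\in\{1,\dots,m_k\}$. Expanding the $K$-fold product of these sums yields the stated closed form
\begin{align*}
C(\bm{u})=\sum_{l_K=1}^{m_K}\cdots\sum_{l_1=1}^{m_1} q_{1,l_1}\cdots q_{K,l_K}\,\E_{\bm{I}}\biggl[\,\prod_{k=1}^K C_k(\bm{u}^{I_{\cdot,k}};\theta_{k,l_k})\biggr].
\end{align*}
I do not expect any real obstacle: the entire argument is a Fubini-style interchange made trivial by the boundedness of copulas and the finiteness of $\E_{\bm{I}}$. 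The only mild subtlety is keeping the bookkeeping straight so that the $K$ independent mixing variables $\theta_1,\dots,\theta_K$ are not conflated with the single index vector $\bm{I}$, but Remark~\ref{rem:constr} already makes clear that each $\theta_k$ enters only the $k$th factor, so the separation is clean.
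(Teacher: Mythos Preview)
Your proposal is correct and follows essentially the same route as the paper: start from Theorem~\ref{thm:index:mixed:form:C}, substitute the mixture form of each $C_k$, convert the product of integrals into an iterated integral, and interchange with $\E_{\bm{I}}$ (the paper invokes Tonelli's theorem, you invoke Fubini with explicit boundedness, which is equivalent here). The discrete case is handled identically as a specialization.
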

Corollary~\ref{cor:index:mixed:mixtures} remains valid if $C_k$ is an
$r_k$-dimensional mixture for $r_k>1$, $k=1,\dots,K$, so when $\theta_k$ is
replaced by an $r_k$-dimensional vector $\bm{\theta}_k$ and $G_k$ is now an
$r_k$-dimensional distribution function.

One can also consider the case where index-mixed copulas are mixed in the
ordinary sense that $\int_{\IR}C(\bm{u};\theta)\,\rd G(\theta)$ for
$C(\bm{u};\theta)$ being an index-mixed copula with index vector $\bm{I}$ for
every $\theta$ (all sharing the same index distribution) and $\theta$ being a
parameter of every base copula. In this case
$\int_{\IR}C(\bm{u};\theta)\,\rd
G(\theta)=\int_{\IR}\E_{\bm{I}}\bigl[\,\prod_{k=1}^KC_k(\bm{u}^{I_{\cdot,k}^{\text{mat}}};\theta)\bigr]\,\rd
G(\theta)=\E_{\bm{I}}\bigl[\,\int_{\IR}\prod_{k=1}^KC_k(\bm{u}^{I_{\cdot,k}^{\text{mat}}};\theta)\,\rd
G(\theta)\bigr]$ is a mixture with respect to the index distribution
where each component is a mixture of products of copulas (with disjoint argument
indices).

\subsection{Symmetries}
The following result shows that the survival copulas of index-mixed copulas are
index-mixed copulas of the survival copulas of the underlying base copulas. As
a result, index-mixed copulas are radially symmetric if the underlying base
copulas are.
\begin{proposition}[Survival copula and radial symmetry]\label{prop:surv:cop}
  With the setup of Definition~\ref{def:index:mixed:copulas}, the survival copula of the index-mixed copula $C$ is
  \begin{align*}
    \hat{C}(\bm{u})=\E_{\bm{I}}\biggl[\,\prod_{k=1}^K\hat{C}_k(\bm{u}^{I_{\cdot,k}^{\text{mat}}})\biggr],\quad\bm{u}\in[0,1]^d,
  \end{align*}
  where $\hat{C}_k$ is the survival copula of the base copula $C_k$, $k=1,\dots,K$.
  In particular, if all base copulas are radially symmetric ($\hat{C}_k=C_k$, $k=1,\dots,K$),
  so is the corresponding index-mixed copula.
\end{proposition}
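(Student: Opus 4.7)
The plan is to exploit the stochastic representation directly rather than manipulating the analytical form~\eqref{eq:index:mixed:cop} by inclusion–exclusion. The key observation is that the index-mixing construction commutes with the componentwise reflection $\bm{u}\mapsto\bm{1}-\bm{u}$ because the random index vector $\bm{I}$ is independent of the copula matrix $\tilde{U}$ and therefore indifferent to any deterministic transformation applied to $\tilde{U}$.

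First, I recall that $\hat{C}$ is, by definition, the distribution function of $\bm{1}-\bm{U}$ whenever $\bm{U}\sim C$, and likewise that $\hat{C}_k$ is the distribution function of $\tilde{\bm{V}}_k:=\bm{1}-\tilde{\bm{U}}_k$ when $\tilde{\bm{U}}_k\sim C_k$. I would then set $\tilde{V}_{j,k}:=1-\tilde{U}_{j,k}$, so that the columns $\tilde{\bm{V}}_k$, $k=1,\ldots,K$, of the resulting $(d,K)$-matrix $\tilde{V}$ are independent (because the $\tilde{\bm{U}}_k$ are) and satisfy $\tilde{\bm{V}}_k\sim\hat{C}_k$. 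Moreover, since $\bm{I}$ is independent of $\tilde{U}$, it is also independent of $\tilde{V}$.

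The main step is then the identity
\begin{align*}
  \bm{1}-\bm{U}=\bm{1}-\tilde{U}_{\bm{I}}=\bigl(1-\tilde{U}_{j,I_j}\bigr)_{j=1}^d=\bigl(\tilde{V}_{j,I_j}\bigr)_{j=1}^d=\tilde{V}_{\bm{I}},
\end{align*}
which shows that $\bm{1}-\bm{U}$ has exactly the stochastic representation of an index-mixed copula with the same index vector $\bm{I}$ but base copulas $\hat{C}_1,\ldots,\hat{C}_K$. Applying Theorem~\ref{thm:index:mixed:form:C} to this new representation immediately gives the claimed expression for $\hat{C}$.

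For the radial symmetry part, if $\hat{C}_k=C_k$ for all $k=1,\dots,K$, I simply substitute into the formula just derived to obtain $\hat{C}(\bm{u})=\E_{\bm{I}}[\prod_{k=1}^KC_k(\bm{u}^{I_{\cdot,k}})]=C(\bm{u})$ by~\eqref{eq:index:mixed:cop}, so $C$ is radially symmetric. There is no real obstacle here; the only point that deserves explicit mention is the independence of $\bm{I}$ from $\tilde{U}$ (hence from $\tilde{V}$), which is what allows the reflection to be pushed past the random selection without altering the index distribution.
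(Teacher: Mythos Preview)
Your proposal is correct and follows essentially the same route as the paper: both argue directly from the stochastic representation by observing that $\bm{1}-\bm{U}=(E-\tilde U)_{\bm{I}}$ (your $\tilde V_{\bm I}$), whose columns are independent draws from $\hat C_1,\dots,\hat C_K$, and then invoke Theorem~\ref{thm:index:mixed:form:C}. Your version is slightly more explicit about the independence of $\bm I$ from the reflected matrix, but the underlying argument is identical.
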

Note that proving Proposition~\ref{prop:surv:cop} analytically via the Poincar{\'e}--Sylvester sieve formula is much harder.
In contrast, our proof of Proposition~\ref{prop:surv:cop} based on the
stochastic representation of index-mixed copulas is straightforward to follow.

Another form of symmetry is exchangeability.  A copula is \emph{exchangeable} if
$C(u_{\sigma(1)},\dots,$ $u_{\sigma(d)})=C(u_1,\dots,u_d)$ for all permutations
$\sigma\in\Sigma_d := \{\sigma \colon \{1,\ldots,d\}\to\{1,\ldots,d\} : \sigma
\text{ is bijective}\}$; see \cite[Definition~1.7.12]{durantesempi2015}. If $\bm{I}$ is comonotone and thus, by
Corollary~\ref{cor:invar:equal:comon}, the index-mixed copula is a mixture of
the base copulas, it is immediate that if all base copulas are exchangeable, so
is the corresponding index-mixed copula.
However, for index-mixed copulas without comonotone index vector, lower-dimensional margins of the base copulas appear in the index-mixed copula construction.
Following permutation, these margins may now be evaluated at arguments that they previously did not depend on.
As such, their values may differ, irrespectively of whether the base copulas are exchangeable or not.
Example~\ref{ex:ill:index:mixing:4d} in Section~\ref{sec:index:mixing} provides a counterexample in
this sense: even though both base copulas are exchangeable, the resulting
index-mixed copula is not (as pairwise margins differ; see
Figure~\ref{fig:IMC:4d}). Exchangeability is lost because of the index
distribution in this case.

For deriving sufficient conditions for the index distribution and base copulas
to result in an exchangeable index-mixed copula, we need to introduce some
notation and an auxiliary result. For a vector $\bm{x} = (x_1,\ldots,x_d)$, we
define $\sigma(\bm{x}):=(x_{\sigma(1)},\ldots,x_{\sigma(d)})$ for
$\sigma\in\Sigma_d$. Building on the set of all possible values
$\mathcal{K}^d:=\{1,\ldots,K\}^d$ of index vectors, we also consider the set
$\mathcal{K}_{\text{ord}}^d:= \{\sort{\bm{i}}:\bm{i}\in\mathcal{K}^d\}$ consisting of
all $\binom{K+d-1}{d}$ possible (non-decreasingly) \emph{ordered} vectors
$\sort{\bm{i}}\in\mathcal{K}^d$. Lastly, for
$\bm{k}_{\text{ord}}\in \mathcal{K}_{\text{ord}}^d$ we denote by
$\calK_{\bm{k}_{\text{ord}}}^d := \{\bm{i}\in\mathcal{K}^d :
\sort{\bm{i}} = \bm{k}_{\text{ord}}\}$ all vectors $\bm{i}\in\mathcal{K}^d$ such that
$\sort{\bm{i}}= \bm{k}_{\text{ord}}$. The problem when determining exchangeability of
an index-mixed copula $C$ is that for a given $\sigma\in\Sigma_d$ and
$\bm{i}\in\calK_{\bm{k}_{\text{ord}}}^d$, it is not necessarily
true that $\prod_{k=1}^K C_k\bigl(\sigma(\bm{u})^{I_{\cdot, k}^{\text{mat}}(\bm{i})}\bigr) =
\prod_{k=1}^K C_k\bigl(\bm{u}^{I_{\cdot, k}^{\text{mat}}(\bm{i})}\bigr)$,
where, for $\bm{i}=(i_1,\ldots,i_d)$, $I^{\text{mat}}(\bm{i}):=(e_{i_1}^{K},\dots,e_{i_d}^{K})\T$, that is the $j$th row of
$I^{\text{mat}}(\bm{i})$ is $e^K_{i_j}$.
However, the equality holds for the sum of all terms in a given partition element
$\calK_{\bm{k}_{\text{ord}}}^d$. This leads to the following
auxiliary result.
\begin{lemma}[Exchangeability of specific sums of products of base copulas]\label{lem:exch:sums:prods:base:copulas}
Denote by $C_k$, $k=1,\dots,K$, a collection of exchangeable $d$-copulas.
  For $\sigma\in\Sigma_d$ and $\bm{k}_{\text{ord}}\in \mathcal{K}_{\text{ord}}^d$ we have
  \begin{align*}
    \sum_{\bm{i}\in \calK_{\bm{k}_{\text{ord}}}^d} \prod_{k=1}^K C_k\bigl(\sigma(\bm{u})^{I_{\cdot, k}^{\text{mat}}(\bm{i})}\bigr) &=
                                                                                                                                         \sum_{\bm{i}\in \calK_{\bm{k}_{\text{ord}}}^d} \prod_{k=1}^K C_k\bigl(\bm{u}^{I_{\cdot, k}^{\text{mat}}(\bm{i})}\bigr).
  \end{align*}
\end{lemma}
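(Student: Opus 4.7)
The plan is to fix $\sigma\in\Sigma_d$ and $\bm{k}_{\text{ord}}\in\mathcal{K}_{\text{ord}}^d$, and to prove the identity by exhibiting an explicit bijective reindexing of the summation that matches factor-by-factor. The key structural observation is that $\calK_{\bm{k}_{\text{ord}}}^d$ is, by construction, exactly the set of $d$-tuples in $\mathcal{K}^d$ obtained by permuting the entries of $\bm{k}_{\text{ord}}$, and hence is invariant under the action of $\Sigma_d$ on $d$-tuples.

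I would first trace the effect of applying the coordinate permutation $\sigma^{-1}$ to the vector $\sigma(\bm{u})^{I_{\cdot,k}(\bm{i})}$, whose $j$-th entry is $u_{\sigma(j)}^{I_{j,k}(\bm{i})}$. A direct unwinding shows that the $l$-th entry of $\sigma^{-1}\bigl(\sigma(\bm{u})^{I_{\cdot,k}(\bm{i})}\bigr)$ equals $u_l^{I_{\sigma^{-1}(l),k}(\bm{i})}$. Defining the permuted index vector $\bm{i}' := (i_{\sigma^{-1}(1)},\ldots,i_{\sigma^{-1}(d)})$, we have $I_{l,k}(\bm{i}')=I_{\sigma^{-1}(l),k}(\bm{i})$ for every $l$ and every $k$, so that $\sigma^{-1}\bigl(\sigma(\bm{u})^{I_{\cdot,k}(\bm{i})}\bigr) = \bm{u}^{I_{\cdot,k}(\bm{i}')}$. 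Applying exchangeability of $C_k$ then gives
\begin{align*}
C_k\bigl(\sigma(\bm{u})^{I_{\cdot,k}(\bm{i})}\bigr) = C_k\bigl(\sigma^{-1}\bigl(\sigma(\bm{u})^{I_{\cdot,k}(\bm{i})}\bigr)\bigr) = C_k\bigl(\bm{u}^{I_{\cdot,k}(\bm{i}')}\bigr),
\end{align*}
and taking the product over $k=1,\ldots,K$ yields
\begin{align*}
\prod_{k=1}^K C_k\bigl(\sigma(\bm{u})^{I_{\cdot,k}(\bm{i})}\bigr) = \prod_{k=1}^K C_k\bigl(\bm{u}^{I_{\cdot,k}(\bm{i}')}\bigr).
\end{align*}

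The remaining step is to sum over $\bm{i}\in\calK_{\bm{k}_{\text{ord}}}^d$ and note that the map $\bm{i}\mapsto\bm{i}'$ is a bijection of $\calK_{\bm{k}_{\text{ord}}}^d$ onto itself, since $\bm{i}'$ and $\bm{i}$ share the same multiset of entries and therefore the same non-decreasingly ordered representative $\bm{k}_{\text{ord}}$. Relabelling $\bm{i}'$ back as $\bm{i}$ in the resulting sum then produces the right-hand side of the claimed identity.

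The main obstacle I anticipate is not the underlying idea but the notational bookkeeping: keeping the $\sigma$-action on $\bm{u}$ cleanly separated from the induced $\sigma^{-1}$-action on $\bm{i}$, and carefully verifying the pointwise identity $I_{l,k}(\bm{i}') = I_{\sigma^{-1}(l),k}(\bm{i})$ so that the coordinate-permuted expression really recombines into $\bm{u}^{I_{\cdot,k}(\bm{i}')}$. Once this bijective bookkeeping is in place, exchangeability of the base copulas does the rest.
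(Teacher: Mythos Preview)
Your proof is correct and follows essentially the same approach as the paper: both arguments combine exchangeability of the $C_k$ with the fact that the $\Sigma_d$-action $\bm{i}\mapsto\sigma^{\pm1}(\bm{i})$ is a bijection of $\calK_{\bm{k}_{\text{ord}}}^d$, so that permuting $\bm{u}$ can be absorbed into a reindexing of the sum. The only cosmetic difference is that the paper first reindexes via $\bm{i}\mapsto\sigma(\bm{i})$ (using the permutation-matrix identity $I_{\cdot,k}(\sigma(\bm{i}))=\sigma(I_{\cdot,k}(\bm{i}))$) and then applies exchangeability, whereas you apply exchangeability first (via $\sigma^{-1}$) and then reindex via $\bm{i}\mapsto\bm{i}'=\sigma^{-1}(\bm{i})$; these are the same argument in reverse order.
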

We are now ready to derive sufficient conditions for index-mixed copulas to be exchangeable.
\begin{proposition}[Sufficient conditions for exchangeability]\label{prop:exchangeable}
  Let $d\in\IN$, $d\geq 2$. Denote by $C_k$, $k=1,\dots,K$, $d$-dimensional base
  copulas and by $\bm{I}$ an index vector distributed on $\mathcal{K}^d$,
  $K\in\IN$.
  \begin{enumerate}
  \item\label{prop:exchangeable:1} If every base copula $C_k$ is exchangeable,
    then
    \begin{align*}
      C(\bm{u})&=\E_{\bm{I}}\biggl[\,\prod_{k=1}^KC_k(\bm{u}_{J_k},1,\dots,1)\biggr],
    \end{align*}
    where $J_k$ is the $k$th element of the index partition (thus depending on
    $\bm{I}$) and $\bm{u}_{J_k}$ is the $D_k$-dimensional vector consisting of
    those elements of $\bm{u}$ with indices in $J_k$ (in the same order).
  \item\label{prop:exchangeable:2} If, additionally, $\bm{I}$ is uniformly
    distributed on $\calK_{\bm{k}_{\text{ord}}}^d$ for
    every $\bm{k}_{\text{ord}}\in \mathcal{K}_{\text{ord}}^d$, that is
    $p_{\bm{I}}(\bm{i})= p_{\bm{I},\bm{k}_{\text{ord}}}$ for all
    $\bm{i}\in \calK_{\bm{k}_{\text{ord}}}^d$, then $C$ is
    exchangeable.
  \end{enumerate}
\end{proposition}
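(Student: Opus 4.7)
The plan is to handle the two parts in turn, leaning heavily on Theorem~\ref{thm:index:mixed:form:C} and on the auxiliary Lemma~\ref{lem:exch:sums:prods:base:copulas} that was just proved.

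For part~\ref{prop:exchangeable:1}, I would start from the representation $C(\bm{u})=\E_{\bm{I}}\bigl[\prod_{k=1}^K C_k(\bm{u}^{I_{\cdot,k}})\bigr]$ given in Theorem~\ref{thm:index:mixed:form:C}. The vector $\bm{u}^{I_{\cdot,k}}$ has entry $u_j$ in position $j$ whenever $I_{j,k}=1$ (that is, whenever $j\in J_k$) and entry $1$ otherwise. Exchangeability of $C_k$ lets me freely permute the $d$ arguments, so I can move all the ``active'' entries $\{u_j:j\in J_k\}$ to the first $D_k$ positions (keeping their original order) and push all the $1$s to the end. This transforms $C_k(\bm{u}^{I_{\cdot,k}})$ into $C_k(\bm{u}_{J_k},1,\dots,1)$ pointwise for every realization of $\bm{I}$, and taking expectation over $\bm{I}$ yields the claimed identity. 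This step is essentially notational bookkeeping.

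For part~\ref{prop:exchangeable:2}, I need to show $C(\sigma(\bm{u}))=C(\bm{u})$ for every $\sigma\in\Sigma_d$. I would expand the expectation as a sum over $\bm{i}\in\mathcal{K}^d$ and then group the terms according to their ordered version $(\bm{i})\in\mathcal{K}_{\text{ord}}^d$, yielding
\begin{align*}
C(\sigma(\bm{u}))=\sum_{\bm{k}_{\text{ord}}\in\mathcal{K}_{\text{ord}}^d}\sum_{\bm{i}\in\calK_{\bm{k}_{\text{ord}}}^d}\P(\bm{I}=\bm{i})\prod_{k=1}^K C_k\bigl(\sigma(\bm{u})^{I_{\cdot,k}(\bm{i})}\bigr).
\end{align*}
Here the hypothesis that $\bm{I}$ is uniform on each orbit $\calK_{\bm{k}_{\text{ord}}}^d$ enters crucially: it allows me to replace $\P(\bm{I}=\bm{i})$ by the common value $p_{\bm{k}_{\text{ord}}}$ and pull it outside the inner sum. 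Then Lemma~\ref{lem:exch:sums:prods:base:copulas} identifies the resulting inner sum with its $\sigma$-free counterpart, and reassembling the double sum gives back $C(\bm{u})$.

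The only genuine subtlety is the bookkeeping in part~\ref{prop:exchangeable:2}: without the uniformity assumption one cannot factor the probability out of the inner sum, and without the exchangeability of the base copulas Lemma~\ref{lem:exch:sums:prods:base:copulas} is unavailable. Both hypotheses are used exactly once and in a minimal way, so I expect the proof to be short, with the heavy lifting already done in the auxiliary lemma.
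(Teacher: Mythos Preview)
Your proposal is correct and follows essentially the same approach as the paper: part~\ref{prop:exchangeable:1} is handled by permuting arguments of each exchangeable $C_k$ to move the active entries $\bm{u}_{J_k}$ to the front, and part~\ref{prop:exchangeable:2} proceeds by partitioning $\mathcal{K}^d$ into the orbits $\calK_{\bm{k}_{\text{ord}}}^d$, factoring out the common probability $p_{\bm{k}_{\text{ord}}}$ via the uniformity hypothesis, and invoking Lemma~\ref{lem:exch:sums:prods:base:copulas} on each inner sum. The paper's proof is organized identically.
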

Situations in which Condition~\ref{prop:exchangeable:2} in
Proposition~\ref{prop:exchangeable} holds are readily available.  The following
Example~\ref{ex:exchangeable:uniform} provides three constructions to this
effect.
\begin{example}[Exchangeable uniform distribution for index vectors]\label{ex:exchangeable:uniform}
  The following three examples outline constructions that guarantee that
  $p_{\bm{I}}(\bm{i}_1) = p_{\bm{I}}(\bm{i}_2)$ whenever
  $ \bm{i}_1, \bm{i}_2 \in \calK_{\bm{k}_{\text{ord}}}^d$ for some
  $\bm{k}_{\text{ord}}\in \mathcal{K}_{\text{ord}}^d$.  As such, these
  constructions guarantee that Condition~\ref{prop:exchangeable:2} in
  Proposition~\ref{prop:exchangeable} holds.
  \begin{enumerate}
  \item\label{ex:exchangeable:uniform:1} A uniform distribution of $\bm{I}$ on
    $\mathcal{K}^d$ directly leads to a uniform distribution of $\bm{I}$ on
    $\calK_{\bm{k}_{\text{ord}}}^d$ for every
    $\bm{k}_{\text{ord}}\in \mathcal{K}_{\text{ord}}^d$.  Indeed, when
    $p_{\bm{I}}(\bm{i}) = \frac{1}{K^d}$, we clearly have
    $p_{\bm{I}}(\bm{i}_1) = p_{\bm{I}}(\bm{i}_2)$ for all
    $\bm{i}_1, \bm{i}_2 \in \calK_{\bm{k}_{\text{ord}}}^d$.  The total
    probability attributed to $\calK_{\bm{k}_{\text{ord}}}^d$ is then consequently
    \begin{align*}
      \P(\bm{I} \in \calK_{\bm{k}_{\text{ord}}}^d) = \sum_{\bm{i}\in \calK_{\bm{k}_{\text{ord}}}^d}\!\!\!\! p_{\bm{I}}(\bm{i}) = \frac{1}{K^d}\sum_{\bm{i}\in \calK_{\bm{k}_{\text{ord}}}^d}\!\!\!\! 1 = \frac{\left|\calK_{\bm{k}_{\text{ord}}}^d\right|}{K^d}.
    \end{align*}
    With
    $\P(\sigma(\bm{I}) = \bm{k}_{\text{ord}}) = \P(\bm{I} \in
    \calK_{\bm{k}_{\text{ord}}}^d)$ this implies that $\sigma(\bm{I})$ is not
    uniformly distributed on $\mathcal{K}_{\text{ord}}^d$, even though $\bm{I}$ is
    uniformly distributed on $\mathcal{K}^d$.
  \item\label{ex:exchangeable:uniform:2} A second possibility is to attribute the probability mass of $\bm{I}$ on $\calK_{\bm{k}_{\text{ord}}}^d$ according to the size of $\calK_{\bm{k}_{\text{ord}}}^d$.
    When defining $\bm{I}$ such that
    \begin{align*}
      p_{\bm{I}}(\bm{i}) = \frac{1}{|\calK_{\bm{k}_{\text{ord}}}^d|\binom{K+d-1}{d}},\quad \bm{i}\in \calK_{\bm{k}_{\text{ord}}}^d,\ \bm{k}_{\text{ord}}\in \mathcal{K}_{\text{ord}}^d,
    \end{align*}
    we have again trivially that $p_{\bm{I}}(\bm{i}_1)=p_{\bm{I}}(\bm{i}_2)$ for all $ \bm{i}_1, \bm{i}_2 \in \calK_{\bm{k}_{\text{ord}}}^d$.
    However, different from Part~\ref{ex:exchangeable:uniform:1}, the probability for each partition is now given as
    \begin{align*}
      \P(\bm{I} \in \calK_{\bm{k}_{\text{ord}}}^d) = \sum_{\bm{i}\in \calK_{\bm{k}_{\text{ord}}}^d}\!\!\!\!p_{\bm{I}}(\bm{i}) = \frac{1}{|\calK_{\bm{k}_{\text{ord}}}^d|\binom{K+d-1}{d}} \sum_{\bm{i}\in \calK_{\bm{k}_{\text{ord}}}^d}\!\!\!\! 1 = \frac{1}{\binom{K+d-1}{d}}.
    \end{align*}
    The distribution of $\sigma(\bm{I})$ on $\mathcal{K}_{\text{ord}}^d$ is hence
    uniform, even though the distribution of $\bm{I}$ on $\mathcal{K}^d$ is clearly not
    (except in trivial cases). Assuming that each of the base copulas
    $C_k$, $k=1,\dots,K$, is exchangeable we then have from~\eqref{eq:exchangeability:reformulate} that
    \begin{align*}
      C(\bm{u}) &= \sum_{\bm{k}_{\text{ord}}\in\mathcal{K}_{\text{ord}}^d} \frac{1}{|\calK_{\bm{k}_{\text{ord}}}^d|\binom{K+d-1}{d}} \sum_{\bm{i}\in \calK_{\bm{k}_{\text{ord}}}^d} \prod_{k=1}^KC_k(\bm{u}^{I_{\cdot, k}^{\text{mat}}(\bm{i})})\\
                &= \frac{1}{\binom{K+d-1}{d}}\sum_{\bm{k}_{\text{ord}}\in\mathcal{K}_{\text{ord}}^d} \frac{1}{|\calK_{\bm{k}_{\text{ord}}}^d|} \sum_{\bm{i}\in \calK_{\bm{k}_{\text{ord}}}^d} \prod_{k=1}^KC_k(\bm{u}_{J_k(\bm{i})},1,\ldots,1).
    \end{align*}
    Due to the uniform distribution of $\bm{I}$ on each
    $\calK_{\bm{k}_{\text{ord}}}^d$, we obtain that $C$ is
    exchangeable.

    Note that other distributions on $\mathcal{K}^d$, even those not uniform
    on the partitions $\calK_{\bm{k}_{\text{ord}}}^d$ and thus not
    covered by Proposition~\ref{prop:exchangeable}, can also induce a uniform
    distribution on $\mathcal{K}_{\text{ord}}^d$. It is therefore not sufficient
    to only consider the induced distribution on $\mathcal{K}_{\text{ord}}^d$, but
    instead the underlying distribution on $\mathcal{K}^d$ needs to be considered.
  \item Lastly, exchangeability of $\bm{I}$ is also sufficient:
    For $\bm{i}_1,\bm{i}_2 \in \calK_{\bm{k}_{\text{ord}}}^d$ there is a permutation $\sigma\in\Sigma$ such that $\bm{i}_2 = \sigma(\bm{i}_1)$.
    In case $\bm{I}$ is exchangeable this leads to
    \begin{align*}
      p_{\bm{I}}(\bm{i}_1)=\P(\bm{I}=\bm{i}_1)
      = \P(\sigma(\bm{I})=\sigma(\bm{i}_1))
        = \P(\bm{I}=\sigma(\bm{i}_1)) = \P(\bm{I}=\bm{i}_2)=p_{\bm{I}}(\bm{i}_2).
    \end{align*}
  \end{enumerate}
\end{example}

While the conditions in Proposition~\ref{prop:exchangeable} are sufficient, they
are not necessary.  As shown in the following example, even for non-exchangeable
base copulas, the resulting index-mixed copula can be exchangeable, via
a suitable index distribution.
\begin{example}[Non-exchangeable base copulas, exchangeable index-mixed copula]
  \begin{enumerate}
  \item For $k=1,2=:K$, let $C_k$ be a bivariate,
    non-exchangeable copula (for example, a Marshall--Olkin copula with unequal
    parameters) and $p_{\bm{I}}(1,2)=p_{\bm{I}}(2,1)=1/2$. Then the resulting
    index-mixed copula $C$ is the bivariate independence copula $\Pi$, which is exchangeable.
  \item For $k=1,2=:K$, set
    $C_1(u_1,u_2,u_3,u_4)=C_{0,1}(u_1,u_2)C_{0,2}(u_3,u_4)$ and
    $C_2(u_1,u_2,u_3,$ $u_4)=C_{0,2}(u_1,u_2) \cdot C_{0,1}(u_3,u_4)$ for two different
    bivariate copulas $C_{0,1},C_{0,2}$ not equal to the independence copula.
    Furthermore, let $p_{\bm{I}}(1,2,1,2)=p_{\bm{I}}(2,1,2,1)=1/2$. Then the
    resulting index-mixed copula $C$ is the four-dimensional independence copula $\Pi$, which is
    exchangeable. Similarly, one could choose $C_{0,1} = \Pi$ and
    $p_{\bm{I}}(1,1,2,2)=1$. Lastly, it is straightforward to adapt the discussed constructions for dimensions $d\geq 5$.
  \end{enumerate}
\end{example}
We finish this section with an example that highlights the calculation of the
objects referenced in Lemma~\ref{lem:exch:sums:prods:base:copulas} and
Proposition~\ref{prop:exchangeable}.

\begin{example}[Calculations for exchangeable base copulas]
  To exemplify the objects used in Lemma~\ref{lem:exch:sums:prods:base:copulas}
  and Proposition~\ref{prop:exchangeable}, consider now $d=3$ and $K=2$.  In this
  case there are $K^d = 8$ possible index vectors leading to
  $\binom{K+d-1}{d} = 4$ unique sorted configurations:
  \begin{alignat*}{5}
    \bm{i}_1 &= (1,1,1) &\ \ \tmb{}{\Longrightarrow}{partition}\ \ & J_1 = \{1,2,3\},\ J_2 = \emptyset &\ \ \tmb{}{\Longrightarrow}{\text{sort}}\ \ & (1,1,1)\\[2mm]
    \bm{i}_2 &= (1,1,2) &\ \ \tmb{}{\Longrightarrow}{partition}\ \ & J_1 = \{1,2\},\ J_2 = \{3\} &\ \ \tmb{}{\Longrightarrow}{\text{sort}}\ \ & (1,1,2)\\
    \bm{i}_3 &= (1,2,1) &\ \ \tmb{}{\Longrightarrow}{partition}\ \ & J_1 = \{1,3\},\ J_2 = \{2\} &\ \ \tmb{}{\Longrightarrow}{\text{sort}}\ \ & (1,1,2)\\
    \bm{i}_4 &= (2,1,1) &\ \ \tmb{}{\Longrightarrow}{partition}\ \ & J_1 = \{2,3\},\ J_2 = \{1\} &\ \ \tmb{}{\Longrightarrow}{\text{sort}}\ \ & (1,1,2)\\[2mm]
    \bm{i}_5 &= (1,2,2) &\ \ \tmb{}{\Longrightarrow}{partition}\ \ & J_1 = \{1\},\ J_2 = \{2,3\} &\ \ \tmb{}{\Longrightarrow}{\text{sort}}\ \ & (1,2,2)\\
    \bm{i}_6 &= (2,1,2) &\ \ \tmb{}{\Longrightarrow}{partition}\ \ & J_1 = \{2\},\ J_2 = \{1,3\} &\ \ \tmb{}{\Longrightarrow}{\text{sort}}\ \ & (1,2,2)\\
    \bm{i}_7 &= (2,2,1) &\ \ \tmb{}{\Longrightarrow}{partition}\ \ & J_1 = \{3\},\ J_2 = \{1,2\} &\ \ \tmb{}{\Longrightarrow}{\text{sort}}\ \ & (1,2,2)\\[2mm]
    \bm{i}_8 &= (2,2,2) &\ \ \tmb{}{\Longrightarrow}{partition}\ \ & J_1 = \emptyset,\ J_2 = \{1,2,3\} &\ \ \tmb{}{\Longrightarrow}{\text{sort}}\ \ & (2,2,2)
  \end{alignat*}
  so that $\mathcal{K}_{\text{ord}}^d = \{(1,1,1),(1,1,2),(1,2,2),(2,2,2)\}$.
  For the inverse images of elements in $\mathcal{K}_{\text{ord}}^d$ we
  have
\begin{align*}
\calK_{(1,1,1)_{\text{ord}}}^d &= \{(1,1,1)\},\\
\calK_{(1,1,2)_{\text{ord}}}^d &= \{(1,1,2),(1,2,1),(2,1,1)\},\\
\calK_{(1,2,2)_{\text{ord}}}^d &= \{(1,2,2),(2,1,2),(2,2,1)\},\\
\calK_{(2,2,2)_{\text{ord}}}^d &= \{(2,2,2)\}.
\end{align*}
  This specifically shows that
  $\calK_{(1,1,1)_{\text{ord}}}^d$, $\calK_{(1,1,2)_{\text{ord}}}^d$,
  $\calK_{(1,2,2)_{\text{ord}}}^d$ and $\calK_{(2,2,2)_{\text{ord}}}^d$ provide
  a partition of $\calK^d$.

  For a random vector $\bm{I}$ on $\calK^d$, the probabilities
  $p_{\bm{I}}(\bm{i}_k)=\P(\bm{I} = \bm{i}_k)$, $k=1,\dots,K^d=8$, now induce a
  probability distribution for the sorted index vector $\sort{\bm{I}}$ on
  $\mathcal{K}_{\text{ord}}^d$. We specifically have
  \begin{align*}
    &\P(\sort{\bm{I}} = (1,1,1)) = p_{\bm{I}}(\bm{i}_1),\\
    &\P(\sort{\bm{I}} = (1,1,2)) = p_{\bm{I}}(\bm{i}_2) + p_{\bm{I}}(\bm{i}_3) + p_{\bm{I}}(\bm{i}_4),\\
    &\P(\sort{\bm{I}} = (1,2,2)) = p_{\bm{I}}(\bm{i}_5) + p_{\bm{I}}(\bm{i}_6) + p_{\bm{I}}(\bm{i}_7),\\
    &\P(\sort{\bm{I}} = (2,2,2)) = p_{\bm{I}}(\bm{i}_8).
  \end{align*}
  Noting that the probabilities of $\bm{I}$ enter only as a sum, it is now easy
  to see that different distributions on $\calK^d$ can induce the same
  distribution on $\mathcal{K}_{\text{ord}}^d$ as already mentioned in
  Example~\ref{ex:exchangeable:uniform}.

  For exchangeable base copulas $C_1$ and $C_2$ we now have for $\bm{u}=(u_1,u_2,u_3)$ that
  \begin{align*}
    &\phantom{{}={}}C(\bm{u})\\
    &= \E_{\bm{I}}\biggl[\,\prod_{k=1}^KC_k(\bm{u}_{J_k},1,\dots,1)\biggr] = \sum_{j=1}^8 p_{\bm{I}}(\bm{i}_j) \prod_{k=1}^2C_k((u_1,u_2,u_3)_{J_k(\bm{i}_j)},1,\dots,1)\\
    &=\phantom{{}+\,}p_{\bm{I}}(\bm{i}_1) C_1(u_1,u_2,u_3)\\
    &\quad+ \bigl( p_{\bm{I}}(\bm{i}_2) C_1(u_1,u_2,1)C_2(u_3,1,1) + p_{\bm{I}}(\bm{i}_3) C_1(u_1,u_3,1)C_2(u_2,1,1) + p_{\bm{I}}(\bm{i}_4) C_1(u_2,u_3,1)C_2(u_1,1,1)\bigr)\\
    &\quad+ \bigl( p_{\bm{I}}(\bm{i}_5) C_1(u_1,1,1)C_2(u_2,u_3,1) + p_{\bm{I}}(\bm{i}_6) C_1(u_2,1,1)C_2(u_1,u_3,1) + p_{\bm{I}}(\bm{i}_7) C_1(u_3,1,1)C_2(u_1,u_2,1)\bigr)\\
    &\quad+ p_{\bm{I}}(\bm{i}_8) C_2(u_1,u_2,u_3).
  \end{align*}
  Without further assumptions, no additional simplifications are possible, and
  the copula $C$ is in general \emph{not} exchangeable.
  For example, in order to obtain
  $C(u_1,u_2,u_3) = C(u_2,u_1,u_3)$, one would need to impose
  $p_{\bm{I}}(\bm{i}_3) = p_{\bm{I}}(\bm{i}_4)$ and $p_{\bm{I}}(\bm{i}_5)=p_{\bm{I}}(\bm{i}_6)$ which are
  additional constraints that do not follow from assumptions about the distribution
  on $\mathcal{K}_{\text{ord}}^d$ (such as uniformity for example).

  However, if $\bm{I}$ is uniformly distributed on each
  $\calK^d_{\bm{k}_{\text{ord}}}$ for
  $\bm{k}_{\text{ord}}\in\mathcal{K}_{\text{ord}}^d$, further simplifications
  are possible.  If we have $p_1:=p_{\bm{I}}(\bm{i}_1)$,
  $p_2:=p_{\bm{I}}(\bm{i}_2) = p_{\bm{I}}(\bm{i}_3) = p_{\bm{I}}(\bm{i}_4)$,
  $p_3:=p_{\bm{I}}(\bm{i}_5) = p_{\bm{I}}(\bm{i}_6) = p_{\bm{I}}(\bm{i}_7)$ and $p_4:=p_{\bm{I}}(\bm{i}_8)$
  then
  \begin{align*}
   C(\bm{u}) &=\phantom{{}+\,} p_1 C_1(u_1,u_2,u_3)\\
    &\quad+ p_2\bigl(C_1(u_1,u_2,1)C_2(u_3,1,1) + C_1(u_1,u_3,1)C_2(u_2,1,1) + C_1(u_2,u_3,1)C_2(u_1,1,1)\bigr)\\
    &\quad+ p_3 \bigl(C_1(u_1,1,1)C_2(u_2,u_3,1) + C_1(u_2,1,1)C_2(u_1,u_3,1) + C_1(u_3,1,1)C_2(u_1,u_2,1)\bigr)\\
    &\quad+ p_4 C_2(u_1,u_2,u_3).
  \end{align*}
  The copula $C$ is now exchangeable since the probability associated to each
  group was factored out from the sum of marginal copulas.  Inside each group
  the exchangeability of the base copulas guarantees the exchangeability of $C$.
  Notably, it is not necessary to have
  $p_1=p_2=p_3=p_4$, but instead only the probability within the underlying
  groupings needs to be uniform.  There is, however, a choice of these
  probabilities such that the resulting distribution on
  $\mathcal{K}_{\text{ord}}^d$ is again uniform; see
  Example~\ref{ex:exchangeable:uniform} Part~\ref{ex:exchangeable:uniform:2}.
\end{example}

\subsection{Tail dependence}\label{sec_tail_dependence}
We now consider (matrices of pairwise) tail dependence coefficients for index-mixed copulas.
\begin{proposition}[Tail dependence]\label{prop:tail:dependence}
  Consider the setup and notation of Lemma~\ref{lem:biv:mar} Part~\ref{lem:biv:mar:1}. Suppose the
  lower and upper tail dependence coefficients of the $(j_1,j_2)$-margin
  $C^{j_1,j_2}_k$ of the base copula $C_k$ exist and are denoted by
  $\lambda_{\text{l},k}^{j_1,j_2},\lambda_{\text{u},k}^{j_1,j_2}$, $k=1,\dots,K$, respectively. Then the lower and
  upper tail dependence coefficients of the $(j_1,j_2)$-margin $C^{j_1,j_2}$ of
  the index-mixed copula $C$ are respectively given by
  \begin{align*}
    \lambda_{\text{l}}^{j_1,j_2}=\sum_{k=1}^K p_{(I_{j_1},I_{j_2})}(k,k)\lambda_{\text{l},k}^{j_1,j_2}\quad\text{and}\quad\lambda_{\text{u}}^{j_1,j_2}=\sum_{k=1}^K p_{(I_{j_1},I_{j_2})}(k,k)\lambda_{\text{u},k}^{j_1,j_2}.
  \end{align*}
\end{proposition}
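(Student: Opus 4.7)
The plan is to apply the explicit bivariate margin formula from Lemma~\ref{lem:biv:mar}~\ref{lem:biv:mar:1} directly to the definitions of lower and upper tail dependence coefficients, and exploit the fact that the independence copula $\Pi$ is tail independent so that its contribution vanishes in both limits. Since the formula for $C^{j_1,j_2}$ is a finite convex combination of the bivariate margins $C_k^{j_1,j_2}$ of the base copulas and of $\Pi$, the limits distribute over the finite sum without any measure-theoretic subtleties.

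First, I would recall that
\[
\lambda_{\text{l}}^{j_1,j_2} = \lim_{u\to 0^+}\frac{C^{j_1,j_2}(u,u)}{u},\qquad \lambda_{\text{u}}^{j_1,j_2} = \lim_{u\to 1^-}\frac{1-2u+C^{j_1,j_2}(u,u)}{1-u},
\]
provided the limits exist. Then I would substitute
\[
C^{j_1,j_2}(u,u) = \sum_{k=1}^K p_{k,k}^{j_1,j_2} C_k^{j_1,j_2}(u,u) + \Bigl(1-\sum_{k=1}^K p_{k,k}^{j_1,j_2}\Bigr) u^2
\]
from Lemma~\ref{lem:biv:mar}~\ref{lem:biv:mar:1} into each expression.

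For the lower coefficient, dividing by $u$ yields
\[
\frac{C^{j_1,j_2}(u,u)}{u} = \sum_{k=1}^K p_{k,k}^{j_1,j_2}\,\frac{C_k^{j_1,j_2}(u,u)}{u} + \Bigl(1-\sum_{k=1}^K p_{k,k}^{j_1,j_2}\Bigr) u,
\]
and letting $u\to 0^+$ gives the claimed formula, as the last term vanishes and each quotient tends to $\lambda_{\text{l},k}^{j_1,j_2}$ by assumption. For the upper coefficient, the minor bookkeeping trick is to split $1-2u = \sum_k p_{k,k}^{j_1,j_2}(1-2u) + (1-\sum_k p_{k,k}^{j_1,j_2})(1-2u)$ so that
\[
\frac{1-2u+C^{j_1,j_2}(u,u)}{1-u} = \sum_{k=1}^K p_{k,k}^{j_1,j_2}\,\frac{1-2u+C_k^{j_1,j_2}(u,u)}{1-u} + \Bigl(1-\sum_{k=1}^K p_{k,k}^{j_1,j_2}\Bigr)(1-u),
\]
using $1-2u+u^2=(1-u)^2$ on the independence part. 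Taking $u\to 1^-$ again kills the last term and each quotient converges to $\lambda_{\text{u},k}^{j_1,j_2}$.

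There is no real obstacle here; the only point requiring a moment of care is the upper-tail rewriting, where one must avoid treating $\lim(1-2u+C^{j_1,j_2}(u,u))/(1-u)$ as a direct linear combination of the base-copula upper tail coefficients without first absorbing the $1-2u$ term, since writing it naively would give a $1/(1-u)$ singularity multiplying the constant $1-\sum_k p_{k,k}^{j_1,j_2}$. Once this algebraic rearrangement is done, existence of each limit $\lambda_{\text{l},k}^{j_1,j_2}$ and $\lambda_{\text{u},k}^{j_1,j_2}$ together with finiteness of the sum immediately yields the existence and value of $\lambda_{\text{l}}^{j_1,j_2}$ and $\lambda_{\text{u}}^{j_1,j_2}$.
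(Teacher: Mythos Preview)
Your proof is correct. The lower-tail argument is essentially identical to the paper's. For the upper tail you take a slightly different route: you handle the $1-2u$ term by a direct algebraic split across the convex combination and use $1-2u+u^2=(1-u)^2$ on the $\Pi$-part, whereas the paper substitutes $u\mapsto 1-u$, recognizes $(1-2u+C^{j_1,j_2}(u,u))/(1-u)$ as $\hat{C}^{j_1,j_2}(u,u)/u$, and then invokes Proposition~\ref{prop:surv:cop} to write $\hat{C}^{j_1,j_2}$ again as a mixture of the $\hat{C}_k^{j_1,j_2}$ and $\Pi$. Your approach is more elementary and self-contained; the paper's approach is more structural, reusing the survival-copula result so that the upper-tail computation mirrors the lower-tail one verbatim with $C_k$ replaced by $\hat{C}_k$.
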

In particular, index-mixed copulas can exhibit lower and/or upper tail
dependence; compare with EFGM copulas in Section~\ref{sec:EFGM:limited:dep}
which are confined to be tail independent. In
Example~\ref{ex:ill:index:mixing:4d}, the matrix of pairwise upper tail
dependence coefficients is
\begin{align*}
  \begin{pmatrix}
    1 & \frac{\lambda_{\text{u}}^{\text{G}}}{2} + \frac{0}{3} + \frac{0}{6} & \frac{\lambda_{\text{u}}^{\text{G}}}{3} + \frac{0}{3} & 0 \\
    \frac{\lambda_{\text{u}}^{\text{G}}}{2} + \frac{0}{3} + \frac{0}{6} & 1 & 0 & \frac{0}{3} + \frac{0}{3} \\
    \frac{\lambda_{\text{u}}^{\text{G}}}{3} + \frac{0}{3} & 0 & 1 & \frac{0}{2} + \frac{0}{3} + \frac{\lambda_{\text{u}}^{\text{G}}}{6}\\
    0 & \frac{0}{3} + \frac{0}{3} & \frac{0}{2} + \frac{0}{3} + \frac{\lambda_{\text{u}}^{\text{G}}}{6} & 1
  \end{pmatrix}
  =\begin{pmatrix}
    1 & \frac{2-2^{1/2}}{2} & \frac{2-2^{1/2}}{3} & 0 \\
    \frac{2-2^{1/2}}{2} & 1 & 0 & 0\\
    \frac{2-2^{1/2}}{3} & 0 & 1 & \frac{2-2^{1/2}}{6}\\
    0 & 0 & \frac{2-2^{1/2}}{6} & 1
  \end{pmatrix},
\end{align*}
where we note that the upper tail dependence coefficient for the appearing
Gaussian copula $C^{\text{Ga}}$ is always zero.  For the appearing Gumbel copula $C^{\text{G}}$, Kendall's
tau is given by $\tau=\frac{\theta-1}{\theta}=1/2$, so $\theta=2$ here, and the
upper tail dependence coefficient $\lambda_{\text{u}}^{\text{G}}$ of
$C^{\text{G}}$ is $\lambda_{\text{u}}^{\text{G}}=2-2^{1/\theta}$.  Matrices of
pairwise Spearman's rho and Kendall's tau can be obtained numerically; see
Section~\ref{section_concordance_measures}.

\begin{remark}[On bounds of pairwise tail dependence]\label{remark_bounds_tail_dependence}
  With $p_{(I_{j_1},I_{j_2})}^K:=\sum_{k=1}^K p_{(I_{j_1},I_{j_2})}(k,k)$, Proposition~\ref{prop:tail:dependence} implies the
  lower and upper bounds
  \begin{alignat*}{5}
    0&\leq p_{(I_{j_1},I_{j_2})}^K \min_{1\leq k\leq K}\lambda_{\text{l},k}^{j_1,j_2}&\leq \lambda_{\text{l}}^{j_1,j_2}&\leq p_{(I_{j_1},I_{j_2})}^K \max_{1\leq k\leq K}\lambda_{\text{l},k}^{j_1,j_2}&\leq p_{(I_{j_1},I_{j_2})}^K &\leq 1,\\
    0&\leq p_{(I_{j_1},I_{j_2})}^K \min_{1\leq k\leq K}\lambda_{\text{u},k}^{j_1,j_2}&\leq \lambda_{\text{u}}^{j_1,j_2}&\leq p_{(I_{j_1},I_{j_2})}^K \max_{1\leq k\leq K}\lambda_{\text{u},k}^{j_1,j_2}&\leq p_{(I_{j_1},I_{j_2})}^K &\leq 1
  \end{alignat*}
  for $\lambda_{\text{l}}^{j_1,j_2}$ and $\lambda_{\text{u}}^{j_1,j_2}$.
  Specifically, $\lambda_{\text{l}}^{j_1,j_2} = 1$ or
  $\lambda_{\text{u}}^{j_1,j_2} = 1$ imply that $p_{(I_{j_1},I_{j_2})}^K=1$, so only the
  probabilities $p_{(I_{j_1},I_{j_2})}(k,k)=\P(I_{j_1}=k,I_{j_2}=k)$, $k=1,\dots,K$, can
  be non-zero, which means that $(I_{j_1},I_{j_2})$ must be comonotone.  More
  generally, if a pair $(U_{j_1},U_{j_2})$ satisfies
  $\lambda_{\text{l}}^{j_1,j_2}\approx 1$ or
  $\lambda_{\text{u}}^{j_1,j_2}\approx 1$, then $p_{(I_{j_1},I_{j_2})}^K\approx 1$, which
  limits the mass on the remaining pairs $(U_{j_1'},U_{j_2'})$ for
  $j_1',j_2'\in\{1,\dots,d\}:(j_1',j_2')\neq(j_1,j_2)$, and thus limits their ability
  to contribute to the construction (also in terms of tail dependence).
\end{remark}

\subsection{Measures of concordance}\label{section_concordance_measures}
\subsubsection{Matrices of pairwise measures of concordance}
By \cite[p.~182]{nelsen2006}, Blomqvist's beta for a bivariate copula $C$ is
given by $\beta = 4C(1/2,1/2)-1$. By Lemma~\ref{lem:biv:mar}, it is
straightforward to compute Blomqvist's beta for the margin $C^{j_1,j_2}$,
$1 \leq j_1 < j_2 \leq d$, of an index-mixed copula as
\begin{align*}
  \beta =\sum_{k=1}^Kp_{(I_{j_1},I_{j_2})}(k,k)\beta_k^{j_1,j_2},
\end{align*}
where $\beta_k^{j_1,j_2} = 4C^{j_1,j_2}_k(1/2,1/2)-1$ denotes Blomqvist's beta of
the $(j_1,j_2)$-margin of the base copula $C_k$, $1\leq k\leq K$.

Next, we consider Spearman's rho and Kendall's tau for bivariate margins of
index-mixed copulas. To this end, we also consider the $(j_1,j_2)$-margin
$C^{j_1,j_2}$ of an index-mixed copula $C$, given (according to
Lemma~\ref{lem:biv:mar} Part~\ref{lem:biv:mar:1}) by
\begin{align}
  C^{j_1,j_2}(u_{j_1},u_{j_2})&=\sum_{k=1}^Kp_{(I_{j_1},I_{j_2})}(k,k)C_k^{j_1,j_2}(u_{j_1},u_{j_2})+\biggl(1-\sum_{k=1}^Kp_{(I_{j_1},I_{j_2})}(k,k)\biggr)\Pi(u_{j_1},u_{j_2})\\
                              &=\sum_{k=1}^{K+1}p_{(I_{j_1},I_{j_2})}(k,k)C_k^{j_1,j_2}(u_{j_1},u_{j_2}),\label{eq:biv:mar:compact}
\end{align}
where $p_{(I_{j_1},I_{j_2})}(K+1,K+1):=1-\sum_{k=1}^Kp_{(I_{j_1},I_{j_2})}(k,k)$ and $C_{K+1}^{j_1,j_2}:=\Pi$.
By \cite[Chapter~5]{nelsen2006}, Spearman's rho is
\begin{align*}
  \rho_{\text{S}}^{j_1,j_2}=12\int_{[0,1]^2} C^{j_1,j_2}(u_{j_1},u_{j_2})\,\rd u_{j_1}\,\rd u_{j_2}-3=12\E[C^{j_1,j_2}(V_1,V_2)]-3
\end{align*}
for $V_1,V_2\isim\U(0,1)$ and Kendall's tau is
\begin{align*}
  \tau^{j_1,j_2}=4\int_{[0,1]^2} C^{j_1,j_2}(u_{j_1},u_{j_2})\,\rd C^{j_1,j_2}(u_{j_1},u_{j_2})-1=4\E[C^{j_1,j_2}(V_1,V_2)]-1
\end{align*}
for $(V_1,V_2)\sim C^{j_1,j_2}$.
To simplify the appearing expressions, let
\begin{align*}
  \mu_{C',C}=\int_{[0,1]^d} C'(\bm{u})\,\rd C(\bm{u})
\end{align*}
for two copulas $C,C'$ of equal dimension $d$. A notable feature of $\mu_{C',C}$
concerning mixtures is that if $C' = \sum^{K}_{k=1}\alpha_k C_k'$ for some
copulas $C_k'$ and weights $\alpha_k\geq 0$ with $\sum^{K}_{k=1} \alpha_k=1$, and $C = \sum^{M}_{l=1}\beta_l C_l$ for some
copulas $C_l$ and weights $\beta_l\geq 0$ with $\sum^{M}_{l=1} \beta_l=1$, then
\begin{align}\label{eq:mu:mixture}
  \mu_{C',C} = \int_{[0,1]^d} \sum^{K}_{k=1}\alpha_k C_k'(\bm{u})\,\rd \biggl(\,\sum^{M}_{j=1}\beta_j C_j(\bm{u})\biggr) = \sum^{K}_{k=1} \sum^{M}_{j=1} \alpha_k \beta_j \mu_{C_k',C_j}.
\end{align}
In the bivariate case we furthermore have
\begin{align}
  \mu_{C',C}=\mu_{C,C'},\label{eq:switch:bivariate:integrand:integrator}
\end{align}
which is implied by the analogous result for $Q(C,C')=4\mu_{C',C}-1$; see
\cite[Corollary~5.1.2]{nelsen2006}.  By \cite[Theorem~5.1.3,
Theorem~5.1.6]{nelsen2006}, $\tau^{j_1,j_2}=4\mu_{C^{j_1,j_2},C^{j_1,j_2}}-1$
and $\rho_{\text{S}}^{j_1,j_2}=12\mu_{C^{j_1,j_2},\Pi}-3$, which we will
use in the proof of the following result.
\begin{proposition}[Bivariate Spearman's rho, Kendall's tau]\label{prop_bivariate_Spearman_Kendall}
  Consider the setup and notation of Lemma~\ref{lem:biv:mar} Part~\ref{lem:biv:mar:1}
  and denote by $\rho_{\text{S},k}^{j_1,j_2}$ and $\tau_{k}^{j_1,j_2}$
  Spearman's rho and Kendall's tau of the base copula $C_k^{j_1,j_2}$,
  $k=1,\dots,K$, respectively. Furthermore, let
  $\tau_{k_1,k_2}^{j_1,j_2}=4\mu_{C_{k_1}^{j_1,j_2},C_{k_2}^{j_1,j_2}}-1$,
  $k_1,k_2\in\{1,\dots,K\}$.
  Then Spearman's rho and Kendall's tau of the $(j_1,j_2)$-margin $C^{j_1,j_2}$
  of the index-mixed copula $C$ are given by
  \begin{align*}
    \rho_{\text{S}}^{j_1,j_2}&=\sum_{k=1}^K p_{(I_{j_1},I_{j_2})}(k,k)\rho_{\text{S},k}^{j_1,j_2} \quad \text{and}\\
    \tau^{j_1,j_2}&=\sum_{k=1}^K(p_{(I_{j_1},I_{j_2})}(k,k))^2\tau_k^{j_1,j_2}+2\!\!\!\!\sum_{1\le k_1<k_2\le K}\!\!\!\! p_{(I_{j_1},I_{j_2})}(k_1,k_1)p_{(I_{j_1},I_{j_2})}(k_2,k_2)\tau_{k_1,k_2}^{j_1,j_2} \\
    &\phantom{={}}+\frac{2}{3}\biggl(1-\sum_{k=1}^Kp_{(I_{j_1},I_{j_2})}(k,k)\biggr)\sum_{k=1}^Kp_{(I_{j_1},I_{j_2})}(k,k)\rho_{\text{S},k}^{j_1,j_2}.
  \end{align*}
\end{proposition}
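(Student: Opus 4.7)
The plan is to exploit the fact, already recorded in~\eqref{eq:biv:mar:compact}, that the bivariate margin is a finite convex combination
\[
C^{j_1,j_2}=\sum_{k=1}^{K+1}p_{k,k}^{j_1,j_2}C_k^{j_1,j_2},\qquad C_{K+1}^{j_1,j_2}:=\Pi,\quad p_{K+1,K+1}^{j_1,j_2}:=1-\sum_{k=1}^Kp_{k,k}^{j_1,j_2},
\]
and then push the two expectation integrals through the mixture via the bilinearity relation~\eqref{eq:mu:mixture}. Since Spearman's rho and Kendall's tau of a bivariate copula $B$ equal $12\mu_{B,\Pi}-3$ and $4\mu_{B,B}-1$ respectively, the problem reduces to bookkeeping on $\mu_{C_{k_1}^{j_1,j_2},C_{k_2}^{j_1,j_2}}$.

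First I would handle Spearman's rho. Applying~\eqref{eq:mu:mixture} with $M=1$ and $C=\Pi$ gives $\mu_{C^{j_1,j_2},\Pi}=\sum_{k=1}^{K+1}p_{k,k}^{j_1,j_2}\mu_{C_k^{j_1,j_2},\Pi}$. Multiplying by $12$, subtracting $3=3\sum_{k=1}^{K+1}p_{k,k}^{j_1,j_2}$ and using the fact that the $k=K+1$ summand contributes $\rho_{\text{S}}$ of $\Pi$, which is $0$, collapses the expression to $\sum_{k=1}^Kp_{k,k}^{j_1,j_2}\rho_{\text{S},k}^{j_1,j_2}$, as claimed.

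Next I would treat Kendall's tau. Applying~\eqref{eq:mu:mixture} on both arguments and using $1=\sum_{k_1,k_2=1}^{K+1}p_{k_1,k_1}^{j_1,j_2}p_{k_2,k_2}^{j_1,j_2}$ to absorb the $-1$ yields
\[
\tau^{j_1,j_2}=\sum_{k_1,k_2=1}^{K+1}p_{k_1,k_1}^{j_1,j_2}p_{k_2,k_2}^{j_1,j_2}\tau_{k_1,k_2}^{j_1,j_2},\qquad \tau_{k_1,k_2}^{j_1,j_2}:=4\mu_{C_{k_1}^{j_1,j_2},C_{k_2}^{j_1,j_2}}-1.
\]
I would then split this double sum into four pieces: the diagonal $k_1=k_2\le K$, which yields $\sum_{k=1}^K(p_{k,k}^{j_1,j_2})^2\tau_k^{j_1,j_2}$ because $\tau_{k,k}^{j_1,j_2}=\tau_k^{j_1,j_2}$; the genuine off-diagonal with both indices in $\{1,\dots,K\}$, which yields $2\sum_{1\le k_1<k_2\le K}p_{k_1,k_1}^{j_1,j_2}p_{k_2,k_2}^{j_1,j_2}\tau_{k_1,k_2}^{j_1,j_2}$ after invoking the bivariate symmetry~\eqref{eq:switch:bivariate:integrand:integrator}; the cross terms involving $C_{K+1}=\Pi$ exactly once; and the single term $k_1=k_2=K+1$, which vanishes since Kendall's tau of $\Pi$ is $0$.

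The only not purely mechanical step is simplifying the $\Pi$-cross terms. For these I would observe that
\[
\tau_{k,K+1}^{j_1,j_2}=4\mu_{C_k^{j_1,j_2},\Pi}-1=\tfrac{1}{3}\bigl(12\mu_{C_k^{j_1,j_2},\Pi}-3\bigr)=\tfrac{1}{3}\rho_{\text{S},k}^{j_1,j_2},
\]
using once more that $\rho_{\text{S}}$ of a copula $B$ equals $12\mu_{B,\Pi}-3$. Summing, the two pieces with exactly one index equal to $K+1$ combine (again by~\eqref{eq:switch:bivariate:integrand:integrator}) to
\[
2\sum_{k=1}^Kp_{k,k}^{j_1,j_2}p_{K+1,K+1}^{j_1,j_2}\tau_{k,K+1}^{j_1,j_2}=\tfrac{2}{3}\biggl(1-\sum_{k=1}^Kp_{k,k}^{j_1,j_2}\biggr)\sum_{k=1}^Kp_{k,k}^{j_1,j_2}\rho_{\text{S},k}^{j_1,j_2},
\]
matching the last term in the claimed formula. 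The expected main obstacle is precisely this bookkeeping: keeping track of the factor of $2$ arising from the symmetry~\eqref{eq:switch:bivariate:integrand:integrator}, and recognizing that the cross terms with $\Pi$ must be re-expressed through Spearman's rho rather than left in $\mu$-form. Everything else is linearity.
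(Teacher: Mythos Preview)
Your proposal is correct and follows essentially the same approach as the paper: both arguments exploit the compact $(K{+}1)$-term mixture~\eqref{eq:biv:mar:compact}, push the integrals through via the bilinearity~\eqref{eq:mu:mixture}, use the symmetry~\eqref{eq:switch:bivariate:integrand:integrator} to fold the off-diagonal sum, and identify the $\Pi$-cross terms through $\tau_{k,K+1}^{j_1,j_2}=\rho_{\text{S},k}^{j_1,j_2}/3$. The only cosmetic difference is that the paper computes Spearman's rho via the expectation $12\E[C^{j_1,j_2}(V_1,V_2)]-3$ with $V_1,V_2\isim\U(0,1)$ rather than explicitly invoking~\eqref{eq:mu:mixture} with $M=1$, but this is the same calculation.
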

Similar to Remark~\ref{remark_bounds_tail_dependence} in case of tail
dependence, we can derive bounds on the possible values of Spearman's rho as
discussed next.  A similar effect can be expected for Kendall's tau, even though
the exact interplay between the components is harder to quantify.

\begin{remark}[On bounds of pairwise Spearman's rho]
  Index-mixing impacts $\rho_{S}^{j_1,j_2}$ relative to the Spearman's rho of
  the base copulas in two different ways.  At first glance, $\rho_{S}^{j_1,j_2}$
  is a weighted sum of the individual $\rho_{S,k}^{j_1,j_2}$,
  $k\in\{1,\dots,K\}$.  However, setting
  $q_{(I_{j_1},I_{j_2})} = 1 - \sum_{k=1}^{K}p_{(I_{j_1},I_{j_2})}(k,k) \geq 0$, it
  becomes apparent that additionally a shrinkage towards zero takes place if
  $q_{(I_{j_1},I_{j_2})} > 0$.  For example, in the special case when all base copulas
  have the same Spearman's rho $\rho_{S,k}^{j_1,j_2}=\rho_{S,\star}^{j_1,j_2}$
  for all $k$, we observe the shrinkage effect via
  $\rho_{S}^{j_1,j_2} = (1-q_{(I_{j_1},I_{j_2})} )\rho_{S,\star}^{j_1,j_2}$.  In the
  general case we set $\rho_{S,-}^{j_1,j_2} := \min_{k=1,\dots,K}\rho_{S,k}$ and
  $\rho_{S,+}^{j_1,j_2} := \max_{k=1,\dots,K}\rho_{S,k}$ to obtain the bounds
  $(1-q_{(I_{j_1},I_{j_2})})\rho_{S,-}^{j_1,j_2} \leq \rho_{S}^{j_1,j_2} \leq
  (1-q_{(I_{j_1},I_{j_2})})\rho_{S,+}^{j_1,j_2}$.  Note that these bounds do not violate
  the forthcoming Proposition~\ref{prop:K:ge:d}, in which case the upper and
  lower bound collapses to zero.

  In order to avoid shrinkage of the Spearman's rho bounds for all bivariate
  margins \emph{at the same time}, positive probability can only be attributed
  to index vectors of the form $\bm{I}=(k,k,\dots,k)\in\IR^d$,
  $k\in\{1,\dots,K\}$.  In this case we clearly have
  $\sum_{k=1}^{K}p_{(I_{j_1},I_{j_2})}(k,k) = \sum_{k=1}^{K}\P(I_{j_1}=k,
  I_{j_2}=k) = 1$ for all $1 \leq j_1 < j_2 \leq d$.  However, if we assign a
  positive probability to an index vector $\bm{I} \neq (k,k,\dots,k)$,
  $k\in\{1,\dots,K\}$, then there are indices $1\leq j_1 < j_2 \leq d$ such that
  $q_{(I_{j_1},I_{j_2})}>0$.  Consider without loss of generality that
  $\P(I_1=k,I_2=\ell,I_3=k,\dots,I_d=k)>0$, $k\neq \ell$.  In this case
  $\sum_{k=1}^{K} p_{(I_{j_1},I_{j_2})}(k,k) < 1 \Rightarrow q_{(I_1,I_2)} > 0$
  and hence we have the discussed bounds shrinkage for the margin $C^{1,2}$.

  If we are only interested in preventing bounds shrinkage for a \emph{specific
    pair} $1 \leq j_1 < j_2 \leq d$, choices other than
  $\bm{I}=(k,k,\dots,k)\in\IR^d$ are possible.  Taking without loss of
  generality $j_1=1$ and $j_2=2$, the only requirement for
  $\sum_{k=1}^{K} p_{(I_{j_1},I_{j_2})}(k,k) = 1$ is that all considered index
  vectors are of the form $\bm{I}=(k,k,\ell_3,\dots,\ell_d)$, but there is no
  requirement that $\ell_j=k$, $j=3,\dots,d$.
\end{remark}

\subsubsection{Multivariate measures of concordance}
We now consider multivariate extensions of Blomqvist's beta and Spearman's rho.

By \cite[Section~10.3.3]{jaworskidurantehaerdlerychlik2010}, multivariate
Blomqvist's beta is given by
\begin{align*}
  \beta^d=\frac{2^{d-1}(C(\bm{1/2})+\hat{C}(\bm{1/2}))-1}{2^{d-1}-1},
\end{align*}
where $\bm{1/2}=(1/2,\dots,1/2)$; note that
\cite[Section~10.3.3]{jaworskidurantehaerdlerychlik2010} use
$\bar{C}(\bm{u})=\P(\bm{U}>\bm{u})=\hat{C}(\bm{1}-\bm{u})$ instead of
$\hat{C}(\bm{u})$ but the two coincide at $\bm{1/2}$.
If $C$ is radially symmetric, then
\begin{align*}
  \beta^d=\frac{2^dC(\bm{1/2})-1}{2^{d-1}-1}.
\end{align*}
For a radially symmetric index-mixed copula $C$ (see
Proposition~\ref{prop:surv:cop}), utilizing $d=\sum_{k=1}^K D_k$ implies that
\begin{align*}
  \beta^d&=\frac{2^dC(\bm{1/2})-1}{2^{d-1}-1}=\frac{2^d\E_{\bm{I}}\bigl[\,\prod_{k=1}^K\delta_k^{I_{\cdot, k}^{\text{mat}}}(1/2)\bigr]-1}{2^{d-1}-1}=\frac{\E_{\bm{I}}\bigl[\,\prod_{k=1}^K2^{D_k}\delta_k^{I_{\cdot, k}^{\text{mat}}}(1/2)\bigr]-1}{2^{d-1}-1}\\
         &=\frac{\E_{\bm{I}}\bigl[\,\prod_{k=1}^K\bigl((2^{D_k-1}-1)\beta^{D_k}(C_k^{I_{\cdot,k}^{\text{mat}}})+1\bigr)\bigr]-1}{2^{d-1}-1},
\end{align*}
where $\beta^{D_k}(C_k^{I_{\cdot,k}^{\text{mat}}})$ denotes the $D_k$-dimensional Blomqvist's beta of $C_k^{I_{\cdot,k}^{\text{mat}}}$.

Also multivariate Spearman's rho can be computed for index-mixed copulas. For
example, \cite{wolff1980}, \cite{joe1990} or
\cite[Sections~10.3.1]{jaworskidurantehaerdlerychlik2010} %
consider
\begin{align*}
  \rho_{\text{S}}^{\text{l},d}=\frac{d+1}{2^d-(d+1)}\biggl(2^d\mu_{C,\Pi}-1\biggr),\quad\text{and}\quad
  \rho_{\text{S}}^{\text{u},d}=\frac{d+1}{2^d-(d+1)}\biggl(2^d\mu_{\Pi,C}-1\biggr).
\end{align*}
\cite{nelsen1996} introduced
$\rho_{\text{S}}^{\text{c},d}=(\rho_{\text{S}}^{\text{l},d}+\rho_{\text{S}}^{\text{u},d})/2$, where all three measures coincide with $\rho_{\text{S}}$ for
$d=2$. %
\begin{proposition}[Multivariate extensions of Spearman's rho]\label{prop:mult:spearman}
  Consider the setup of Definition~\ref{def:index:mixed:copulas}.
  Then
  \begin{align*}
    \rho_{\text{S}}^{\text{l},d}&=\frac{d+1}{2^d-(d+1)}\biggl(\E_{\bm{I}}\biggl[\,\prod_{k=1}^K\biggl(\frac{2^{D_k}-(D_k+1)}{D_k+1}\rho_{\text{S}}^{\text{l},D_k}(C_k^{I_{\cdot,k}^{\text{mat}}})+1\biggr)\biggr]-1\biggr),\\
    \rho_{\text{S}}^{\text{u},d}&=\frac{d+1}{2^d-(d+1)}\biggl(\E_{\bm{I}}\biggl[\,\prod_{k=1}^K\biggl(\frac{2^{D_k}-(D_k+1)}{D_k+1}\rho_{\text{S}}^{\text{u},D_k}(C_k^{I_{\cdot,k}^{\text{mat}}})+1\biggr)\biggr]-1\biggr),
  \end{align*}
  where $\rho_{\text{S}}^{\text{l},D_k}(C_k^{I_{\cdot,k}^{\text{mat}}})$ and
  $\rho_{\text{S}}^{\text{u},D_k}(C_k^{I_{\cdot,k}^{\text{mat}}})$ denote the
  $D_k$-dimensional Spearman's rhos of $C_k^{I_{\cdot,k}^{\text{mat}}}$.
\end{proposition}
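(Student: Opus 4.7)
The plan is to start from the representations $\rho_{\text{S}}^{\text{l},d}=\frac{d+1}{2^{d}-(d+1)}(2^{d}\mu_{C,\Pi}-1)$ and $\rho_{\text{S}}^{\text{u},d}=\frac{d+1}{2^{d}-(d+1)}(2^{d}\mu_{\Pi,C}-1)$ recalled just before the statement, and to evaluate $\mu_{\Pi,C}=\int_{[0,1]^{d}}C(\bm{u})\,\rd\bm{u}$ and $\mu_{C,\Pi}=\E[U_{1}\cdots U_{d}]$ (with $\bm{U}\sim C$) by exploiting the stochastic representation~\eqref{U:index:mixed} together with the partition-of-arguments observation of Remark~\ref{rem:constr}. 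In both cases, conditional on $\bm{I}$ the $d$-dimensional quantity factorizes into $K$ pieces, one per partition element $J_{k}$, and each piece is then identified with $\mu_{\Pi,C_{k}^{I_{\cdot,k}}}$ or $\mu_{C_{k}^{I_{\cdot,k}},\Pi}$. Solving the defining formula of $\rho_{\text{S}}^{\text{u},D_{k}}$ (resp.\ $\rho_{\text{S}}^{\text{l},D_{k}}$) for the corresponding $\mu$ will supply the factors appearing in the claim.

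For the upper version I would apply Theorem~\ref{thm:index:mixed:form:C}, pull the integral inside $\E_{\bm{I}}$ by Tonelli's theorem, and then use the key fact that $\prod_{k=1}^{K}C_{k}(\bm{u}^{I_{\cdot,k}})=\prod_{k=1}^{K}C_{k}^{I_{\cdot,k}}(\bm{u}_{I_{\cdot,k}})$ depends on pairwise disjoint subsets of coordinates (Remark~\ref{rem:constr}). This turns the $d$-dimensional Lebesgue integral into the product $\prod_{k=1}^{K}\int_{[0,1]^{D_{k}}}C_{k}^{I_{\cdot,k}}(\bm{v})\,\rd\bm{v}=\prod_{k=1}^{K}\mu_{\Pi,C_{k}^{I_{\cdot,k}}}$. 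Multiplying through by $2^{d}=\prod_{k=1}^{K}2^{D_{k}}$ (since $\sum_{k=1}^{K}D_{k}=d$) and substituting $2^{D_{k}}\mu_{\Pi,C_{k}^{I_{\cdot,k}}}=\frac{2^{D_{k}}-(D_{k}+1)}{D_{k}+1}\rho_{\text{S}}^{\text{u},D_{k}}(C_{k}^{I_{\cdot,k}})+1$ yields the stated formula for $\rho_{\text{S}}^{\text{u},d}$.

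For the lower version the cleanest route is via the stochastic representation: $\mu_{C,\Pi}=\E\bigl[\prod_{j=1}^{d}\tilde{U}_{j,I_{j}}\bigr]$. Conditioning on $\bm{I}$ and regrouping the product by the index partition gives $\prod_{j=1}^{d}\tilde{U}_{j,I_{j}}=\prod_{k=1}^{K}\prod_{j\in J_{k}}\tilde{U}_{j,k}$; independence of $\tilde{\bm{U}}_{1},\ldots,\tilde{\bm{U}}_{K}$ then produces $\E[\prod_{j=1}^{d}U_{j}\mid\bm{I}]=\prod_{k=1}^{K}\mu_{C_{k}^{I_{\cdot,k}},\Pi}$. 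Multiplying by $2^{d}$ and inverting the defining relation for $\rho_{\text{S}}^{\text{l},D_{k}}(C_{k}^{I_{\cdot,k}})$ produces the second claim after taking $\E_{\bm{I}}$.

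The main thing to watch out for is the degenerate partition elements. When $D_{k}=0$ the factor $C_{k}^{I_{\cdot,k}}$ is set to $1$ and no $\rho_{\text{S}}^{\text{l/u},D_{k}}$ is defined, so I would verify that the bracket $\frac{2^{D_{k}}-(D_{k}+1)}{D_{k}+1}\rho_{\text{S}}^{\text{l/u},D_{k}}(C_{k}^{I_{\cdot,k}})+1$ evaluates to $1$ (the prefactor vanishes), which matches both the empty product and $2^{0}\cdot 1$; the same check for $D_{k}=1$ reduces the bracket to $1$, consistent with $2\,\E[\tilde U_{j,k}]=1$. Beyond these bookkeeping checks, the argument is essentially a Tonelli/independence computation, so I anticipate no genuine obstacle — the principal care lies in cleanly justifying the factorization by disjoint argument sets and the $D_{k}=0$ convention in a single uniform formula.
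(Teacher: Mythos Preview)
Your approach is essentially the paper's: factorize $\mu_{C,\Pi}$ and $\mu_{\Pi,C}$ over the index partition using Remark~\ref{rem:constr}, split $2^d=\prod_{k}2^{D_k}$, then invert the defining relation for $\rho_{\text{S}}^{\text{l/u},D_k}$ on each factor. However, you have the subscripts of $\mu$ reversed: by the paper's convention $\mu_{C',C}=\int C'\,\rd C$, so $\mu_{C,\Pi}=\int_{[0,1]^d}C(\bm{u})\,\rd\bm{u}$ and $\mu_{\Pi,C}=\E[U_1\cdots U_d]$, the opposite of what you wrote. This swaps the upper and lower labels throughout your argument --- your ``upper'' paragraph actually proves the formula for $\rho_{\text{S}}^{\text{l},d}$ and your ``lower'' paragraph the one for $\rho_{\text{S}}^{\text{u},d}$. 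Once that is corrected, the proof lines up with the paper's almost verbatim; the only minor variation is that for $\mu_{\Pi,C}$ you argue via the stochastic representation (condition on $\bm{I}$, use independence of $\tilde{\bm{U}}_1,\ldots,\tilde{\bm{U}}_K$), whereas the paper works purely analytically with linearity in the integrator $\rd C$. Your bookkeeping for $D_k\in\{0,1\}$ is correct and matches the paper's conventions.
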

Note that these measures can also be made dependent on covariates. For example,
\cite[Section~2.3]{gijbelsmatterne2021} consider a setting where a mixture of
two bivariate copulas depends on covariates through the mixture weight. In our
case, one could consider the index distribution to depend on covariates.

\subsection{Concordance ordering}
We discuss probabilistic orderings for index-mixed copulas in the following
section, where the following definitions are essentially taken from
\cite[Chapter~5.7]{nelsen2006}.  Two $d$-dimensional distribution functions $F$
and $G$ are ordered according to the \emph{lower concordance
  order} %
denoted by $F\preceq_l G$, if, pointwise, $F\le G$. They are ordered according
to the \emph{upper concordance order}, denoted by $F\preceq_u G$, if, pointwise,
$\bar{F}\le\bar{G}$.  And they are ordered according to the \emph{concordance
  order}, denoted by $F\preceq_c G$, if, $F\preceq_l G$ and $F\preceq_u
G$. %
In general, note that upper and lower concordance order (and hence also
concordance order) are equivalent for $d=2$ if both distributions have the same margins.
The same is not true in higher dimensions, see \cite[Example 5.26, p. 222]{nelsen2006}.
Furthermore, concordance ordering as defined above is only sensible for joint distributions with identical margins, see \cite[Definition~3.8.5, p. 111]{mullerstoyan2002} and the discussion therein.
\begin{remark}[On different definitions of certain orderings]
   Ordering concepts for random vectors are widely studied in the literature
   and, unfortunately, the utilized definitions are not always perfectly
   aligned.  For example, while we follow \cite[Definition~5.7.2]{nelsen2006}
   for the definitions of $\preceq_l$ and $\preceq_u$, comprehensive discussions
   of stochastic ordering concepts can also be found in \cite{mullerstoyan2002}
   and \cite{shakedshanthikumar2007}. While the definitions of lower
   $\preceq_{\text{lo}}$ and upper $\preceq_{\text{uo}}$ orthant order coincide
   in the two references \cite[p.~90]{mullerstoyan2002} and
   \cite[p.~308]{shakedshanthikumar2007}, they differ from $\preceq_l$ and
   $\preceq_u$ in \cite[Definition~5.7.2]{nelsen2006} in the sense that
   $F \preceq_{\text{uo}} G \Leftrightarrow F \preceq_{u} G$, but
   $F \preceq_{\text{lo}} G \Leftrightarrow G \preceq_{l} F$.
 \end{remark}
If $\prod_{j=1}^d G_j \preceq_l G$, where $G_1,\dots,G_d$ are the margins of $G$,
then $G$ is called \emph{positively lower orthant dependent (PLOD)} and if $\prod_{j=1}^d G_j \preceq_u G$, $G$ is called \emph{positively upper orthant dependent
  (PUOD)}.
If both inequalities are satisfied, $G$ is called \emph{positively orthant dependent (POD)}.
This leads us to the following results on orthant dependence for index-mixed copulas.
\begin{proposition}[Orthant dependence]\label{prop:orthant:dep}
  Consider the setup of Definition~\ref{def:index:mixed:copulas}.
  \begin{enumerate}
  \item\label{prop:orthant:dep:1} If $C_k\succeq_l \Pi$, $k=1,\dots,K$, then $C$ is PLOD.
  \item\label{prop:orthant:dep:2} If $\hat{C}_k\succeq_l \Pi$, $k=1,\dots,K$, then $C$ is PUOD.
  \item If $C_k\succeq_l \Pi$ and $\hat{C}_k\succeq_l \Pi$, $k=1,\dots,K$, then $C$ is POD.
  \end{enumerate}
\end{proposition}
Now let us consider the case where we compare two index-mixed copulas.
\begin{proposition}[Concordance orderings]\label{prop_concordance_ordering}
  Let $\bm{I},\bm{I}'$ be index vectors, $\tilde{\bm{U}}_k\sim C_k$,
  $k=1,\dots,K$, be independent and $\tilde{\bm{U}}_k'\sim C_k'$, $k=1,\dots,K$, be independent.  Furthermore,
  let $\tilde{U},\tilde{U}'$ be the corresponding copula matrices and let
  $\bm{U}=\tilde{U}_{\bm{I}}\sim C$ and $\bm{U}'=\tilde{U}_{\bm{I}'}'\sim C'$ be as
  in~\eqref{U:index:mixed}.
Lastly, appearing index vectors are assumed to be independent
  of appearing copula matrices.
  \begin{enumerate}
  \item\label{prop:concordance:same:I:1} If $\bm{I}=\bm{I}'$ almost surely and $C_k\preceq_l C_k'$, $k=1,\dots,K$, then $C\preceq_l C'$.
  \item\label{prop:concordance:same:I:2} If $\bm{I}=\bm{I}'$ almost surely and $C_k\preceq_u C_k'$, $k=1,\dots,K$, then $C\preceq_u C'$.
  \item If $\bm{I}=\bm{I}'$ almost surely and $C_k\preceq_c C_k'$, $k=1,\dots,K$, then $C\preceq_c C'$.
  \end{enumerate}
\end{proposition}
The following example considers the case of (unequal) ordered index vectors.  It
shows that applying ordered index vectors, ordered either in the $\preceq_l$ or
the almost sure partial order, to the same copula matrix does not lead to $\preceq_l$
ordered index-mixed copulas even if the base copulas are ordered according to
$\preceq_l$.
The question if additional (stronger) assumptions, one could for example simultaneously consider $\bm{I}_{\Pi} \preceq_l \bm{I} \preceq_l \bm{I}'$ and $\Pi \preceq_l C_k \preceq_l C_k'$, where $\bm{I}_{\Pi}$ is an index vector with independent components, leads to an order for the resulting index-mixed copulas is an open question for further research.
\begin{example}[Ordering of base copulas and index vectors does not imply ordering of index-mixed copulas]
  For both examples we set $K=2$ and $d=2$.
  \begin{enumerate}
  \item\label{ex:ordering:1} Consider the copula matrix $\tilde{U}$ with base copulas $C_1=W$ and
    $C_2=M$. Additionally, let $\bm{I}$ be such that
    $\P(I_1=1,I_2=2)=\P(I_1=2,I_1=1)=1/2$ and let $\bm{I}'$ be such that
    $\P(I_1'=k_1,I_2'=k_2)=1/4$ for all $k_1,k_2\in\{1,2\}$. Then the index-mixed
    copula $C$ of $\tilde{U}_{\bm{I}}$ is $C=\Pi$ and the index-mixed copula $C'$
    of $\tilde{U}_{\bm{I}'}$ is $C'=\frac{1}{4}M +\frac{1}{4}W +\frac{1}{2}\Pi$.
    Furthermore, we have $C(0.75,0.75)=0.5625\le 0.59375=C'(0.75,0.75)$, but
    $C(0.75,0.25)=0.1875\geq 0.15625=C'(0.75,0.25)$ implying there is no order
    between $C$ and $C'$ even though the index vectors and base copulas satisfy
    $\bm{I}\preceq_l\bm{I}'$ and $C_1\preceq_l C_2$.
  \item Consider $C_1 = \frac{1}{4}M +\frac{1}{4}W +\frac{1}{2}\Pi$ and $C_2=M$.
    Furthermore, define $\bm{I}$ such that $p_{\bm{I}}(1,1)=1$, and $\bm{I}'$
    such that $p_{\bm{I}'}(1,2)=p_{\bm{I}'}(2,1)=1/2$.  This configuration
    now leads to the copula $C$ of $\tilde{U}_{\bm{I}}$ to be $C_1$ and the
    copula $C'$ of $\tilde{U}_{\bm{I}'}$ to be $\Pi$. As in Part~\ref{ex:ordering:1},
    there is no order between $C$ and $C'$ even though $C_1 \preceq_l C_2$ and
    $\bm{I} \leq \bm{I}'$ almost surely in each component.
  \end{enumerate}
\end{example}

\subsection{Distribution of the sum}
In equation~\eqref{eq:joint:df} we presented the form of the distribution
function $H$ with margins $F_1,\dots,F_d$ and index-mixed copula $C$. The
following result is useful for risk management purposes, where the aggregate
loss $S_n=X_1+\dots+X_n\sim F_{S_n}$ of dependent, nonnegative risks
$X_1\sim F_1,\dots,X_d\sim F_d$ is of interest; see also
\cite{blierwongcossettemarceau2022c}. To this end, let
$\LS[H](\bm{t})=\E[e^{-\bm{t}\T\bm{X}}]$, $\bm{t}\in[0,\infty)^d$, denote the
Laplace--Stieltjes transform of the random vector $\bm{X}\sim H$.
\begin{theorem}[Laplace--Stieltjes transform of $\bm{X}$ and its sum]\label{thm:LS:sum}
  Consider the setup of Definition~\ref{def:index:mixed:copulas} and
  marginal distributions $F_1,\dots,F_d$ supported on $[0,\infty)$.
  Then the Laplace--Stieltjes transform of $\bm{X}=(X_1,\dots,X_d)$
  with margins $F_1,\dots,F_d$ and index-mixed copula $C$ is
  \begin{align*}
    \LS[H](\bm{t})=\E_{\bm{I}}\biggl[\,\prod_{k=1}^K\LS[H_k](\bm{t}I_{\cdot,k}^{\text{mat}})\biggr],\quad\bm{t}\in[0,\infty)^d,
  \end{align*}
  where $H_k$ denotes the distribution function with copula $C_k$ and margins
  $F_1$, $\dots$, $F_d$, and $\bm{t}I_{\cdot,k}^{\text{mat}}=(t_1I_{1,k}^{\text{mat}},\dots,t_dI_{d,k}^{\text{mat}})$ denotes
  the (elementwise) Hadamard product between $\bm{t}$ and the $k$th column of $I^{\text{mat}}$.
  In particular,
  $\LS[F_{S_n}](t)=\LS[H](t,\dots,t)$, $t\ge 0$.
\end{theorem}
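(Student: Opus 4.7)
The plan is to mirror the stochastic-representation approach used in the proof of Proposition~\ref{prop:surv:cop} rather than manipulate the multivariate distribution function in~\eqref{eq:joint:df} directly. I would lift the construction of Definition~\ref{def:index:mixed:copulas} from the copula scale to the $\bm{X}$-scale: introduce a random $(d,K)$-matrix $\tilde{X}=(\tilde{\bm{X}}_k)_{k=1}^K$ whose columns are independent with $\tilde{\bm{X}}_k\sim H_k$, and set $\bm{X}:=\tilde{X}_{\bm{I}}=(\tilde{X}_{j,I_j})_{j=1}^d$. Repeating the marginalization argument of Theorem~\ref{thm:index:mixed:form:C} componentwise (each $\tilde{X}_{j,k}$ has marginal $F_j$) shows that $\bm{X}$ has marginals $F_1,\ldots,F_d$ and copula $C$, so $\bm{X}\sim H$. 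It therefore suffices to compute $\E[e^{-\bm{t}\T\bm{X}}]$ with this explicit representation.

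First I would condition on the index vector via the tower property,
\begin{align*}
\LS[H](\bm{t}) = \E_{\bm{I}}\!\left[\E\!\left[e^{-\sum_{j=1}^d t_j \tilde{X}_{j,I_j}}\,\middle|\,\bm{I}\right]\right],
\end{align*}
and then re-organize the exponent by grouping the summation according to the index partition $J_1,\ldots,J_K$ (cf.\ Remark~\ref{rem:constr}):
\begin{align*}
\sum_{j=1}^d t_j \tilde{X}_{j,I_j} = \sum_{k=1}^K\sum_{j=1}^d t_j I_{j,k}\tilde{X}_{j,k} = \sum_{k=1}^K (\bm{t}I_{\cdot,k})\T\tilde{\bm{X}}_k.
\end{align*}
Since the columns $\tilde{\bm{X}}_1,\ldots,\tilde{\bm{X}}_K$ are mutually independent and independent of $\bm{I}$, the conditional expectation factorizes as $\prod_{k=1}^K\E[e^{-(\bm{t}I_{\cdot,k})\T\tilde{\bm{X}}_k}\mid\bm{I}]$, and each factor equals $\LS[H_k](\bm{t}I_{\cdot,k})$ by definition of $\LS[H_k]$. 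Taking the outer expectation over $\bm{I}$ yields the first claim.

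The second assertion follows at once by setting $\bm{t}=(t,\ldots,t)$, since then $\bm{t}\T\bm{X}=t\sum_{j=1}^d X_j = tS_n$. I anticipate no real obstacle: the only delicate bookkeeping concerns entries with $I_{j,k}=0$, which contribute $t_jI_{j,k}=0$ in the argument of $\LS[H_k]$ and therefore integrate out the corresponding coordinate of $\tilde{\bm{X}}_k$. In effect, $\LS[H_k](\bm{t}I_{\cdot,k})$ reduces to the Laplace--Stieltjes transform of the $D_k$-dimensional marginal of $H_k$ indexed by $J_k$, paralleling the convention $C_k(\bm{u}^{I_{\cdot,k}})=C_k^{I_{\cdot,k}}(\bm{u}_{I_{\cdot,k}})$ already used throughout the paper.
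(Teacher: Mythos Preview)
Your proposal is correct and takes a genuinely different route from the paper. The paper works \emph{analytically}: it starts from the distribution-function identity~\eqref{eq:joint:df}, writes $\LS[H](\bm{t})=\int_{[0,\infty)^d}e^{-\bm{t}\T\bm{x}}\,\rd H(\bm{x})$, pulls the finite expectation over $\bm{I}$ out of the integral, and then splits the $d$-fold integral into a product of $D_k$-dimensional integrals against $\rd H_k^{I_{\cdot,k}}(\bm{x}_{I_{\cdot,k}})$ (exactly as in the proof of Proposition~\ref{prop:mult:spearman}), recognizing each block as $\LS[H_k^{I_{\cdot,k}}](\bm{t}_{I_{\cdot,k}})=\LS[H_k](\bm{t}I_{\cdot,k})$. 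You instead work \emph{probabilistically}: you lift the stochastic representation~\eqref{U:index:mixed} to the $\bm{X}$-scale, condition on $\bm{I}$, and exploit the column independence of $\tilde{X}$ to factor the conditional moment generating function directly. Your approach is arguably cleaner --- it bypasses the bookkeeping of iterated integrals and product measures, and it makes the parallel with Proposition~\ref{prop:surv:cop} explicit --- while the paper's approach has the minor advantage of not needing to re-verify that $\tilde{X}_{\bm{I}}\sim H$ (it simply quotes~\eqref{eq:joint:df}). Both arguments hinge on the same structural fact (Remark~\ref{rem:constr}): each coordinate $j$ appears in exactly one block $J_k$, which is what permits the factorization in either language.
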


\begin{example}[Distribution of index-mixed-dependent sum under exponential margins]
  Let $\bm{X}\sim H$ with index-mixed copula $C$ where $K=2$, $C_1=M$, $C_2=\Pi$, and let $X_j\sim\Exp(\lambda)$,
  $j=1,\dots,d$. With $\bm{X}_1:=(-\frac{1}{\lambda}\log(U)$, $\dots,-\frac{1}{\lambda}\log(U))\sim H_1$ for $U\sim\U(0,1)$ we thus have
  \begin{align*}
    \LS[H_1](\bm{t})&=\E\bigl[e^{-\bm{t}\T\bm{X}_1}\bigr]=\E\Bigl[e^{-\sum_{j=1}^dt_jX_{1,j}}\Bigr]=\E\Bigl[e^{\log(U)\sum_{j=1}^dt_j/\lambda}\Bigr]=\E\Bigl[U^{\sum_{j=1}^dt_j/\lambda}\Bigr]
                    =\frac{1}{1+\frac{1}{\lambda}\sum_{j=1}^d t_j}.
  \end{align*}
  Furthermore, setting
  $\bm{X}_2:=(-\frac{1}{\lambda}\log(U_1),\dots,-\frac{1}{\lambda}\log(U_d))\sim
  H_2$ for $U_1$, $\dots$, $U_d\isim\U(0,1)$ yields
  \begin{align*}
    \LS[H_2](\bm{t})&=\E\bigl[e^{-\bm{t}\T\bm{X}_2}\bigr]=\E\Bigl[e^{-\sum_{j=1}^dt_j(-\log(U_j))/\lambda}\Bigr]=\E\biggl[\,\prod_{j=1}^d U_j^{t_j/\lambda}\biggr]=\prod_{j=1}^d \E\bigl[U_j^{t_j/\lambda}\bigr]=\prod_{j=1}^d \frac{1}{1+\frac{t_j}{\lambda}}.
  \end{align*}
  By Theorem~\ref{thm:LS:sum},
  \begin{align*}
    \LS[H](\bm{t})&=\E_{\bm{I}}\bigl[\LS[H_1](\bm{t}I_{\cdot,1}^{\text{mat}})\LS[H_2](\bm{t}I_{\cdot,2}^{\text{mat}})\bigr]=\E_{\bm{I}}\biggl[\frac{\prod_{j=1}^d \frac{1}{1+t_jI_{j,1}^{\text{mat}}/\lambda}}{1+\frac{1}{\lambda}\sum_{j=1}^d t_jI_{j,2}^{\text{mat}}}\biggr]=\E_{\bm{I}}\biggl[\frac{\prod_{j:I_{j,1}^{\text{mat}}=1} \frac{1}{1+t_j/\lambda}}{1+\frac{1}{\lambda}\sum_{j:I_{j,2}^{\text{mat}}=1} t_j}\biggr].
  \end{align*}
Noting that the expectation with respect to $\bm{I}$ is simply a finite sum we have for $t\ge 0$ that
  \begin{align*}
    \LS[F_{S_n}](t)&=\LS[H](t,\dots,t)=\E_{\bm{I}}\biggl[\frac{\prod_{j:I_{j,1}^{\text{mat}}=1} \frac{1}{1+t/\lambda}}{1+\frac{1}{\lambda}\sum_{j:I_{j,2}^{\text{mat}}=1} t}\biggr]=\E_{\bm{I}}\biggl[\frac{(1+t/\lambda)^{-D_1}}{1+tD_2/\lambda}\biggr]\\
               &=\E_{\bm{I}}\biggl[\Bigl(1+\frac{t}{\lambda}\Bigr)^{-D_1}\Bigl(1+\frac{t}{\lambda/D_2}\Bigr)^{-1}\biggr]=\sum_{\bm{i}\in\{1,2\}^d}p_{\bm{I}}(\bm{i})\Bigl(1+\frac{t}{\lambda}\Bigr)^{-D_1(\bm{i})}\Bigl(1+\frac{t}{\lambda/D_2(\bm{i})}\Bigr)^{-1},
  \end{align*}
  where, similarly as in Section~\ref{sec:index:mixing}, $D_k(\bm{i})$
  denotes the number of $1$s in column $k$ of the index matrix corresponding to
  the index vector $\bm{i}$, $k=1,2$. Note that $(1+t/\beta)^{-\alpha}$ is the
  Laplace--Stieltjes transform of the $\Gamma(\alpha,\beta)$ distribution, where
  $\beta$ is the rate parameter. For $G_1^{\bm{i}}$ and $G_2^{\bm{i}}$ being the distribution
  functions of $\Gamma(D_1(\bm{i}),\lambda)$ and
  $\Gamma(1,\lambda/D_2(\bm{i}))$, respectively (interpreting
  $\Gamma(0,\lambda)$ and $\Gamma(1,\infty)$ as the point mass at $0$), we
  obtain for $t\ge 0$ that
  \begin{align*}
    \LS[F_{S_n}](t)&=\sum_{\bm{i}\in\{1,2\}^d}p_{\bm{I}}(\bm{i})\LS[G_1^{\bm{i}}](t)\LS[G_2^{\bm{i}}](t)=\E_{\bm{I}}\bigl[\LS[G_1^{\bm{I}}](t)\LS[G_2^{\bm{I}}](t)\bigr].
  \end{align*}
  As the inverse Laplace--Stieltjes transform (Mellin's inverse
  formula) %
  is linear, we obtain that $F_{S_n}=\E_{\bm{I}}\bigl[\LSi[\LS[G_1^{\bm{I}}]\LS[G_2^{\bm{I}}]]\bigr]$, which
  is the $\bm{I}$-mixture of the independent sum of a
  $\Gamma(D_1(\bm{I}),\lambda)$ and a $\Gamma(1,\lambda/D_2(\bm{I}))$ random
  variable. In (other) words, the distribution of the sum of $\Exp(\lambda)$
  random variables under the index-mixed copula with base copulas $M$ and $\Pi$
  is the expectation with respect to $\bm{I}$ of the independent sum of a
  $\Gamma(D_1(\bm{I}),\lambda)$ and a $\Gamma(1,\lambda/D_2(\bm{I}))$ random
  variable.
\end{example}

\subsection{Additional properties}

\subsubsection{Base copulas and independence}
If $K<d$, the pigeonhole principle implies that there are at least two
components of the index vector $\bm{I}$ which must be equal, so that at least one marginal copula of the corresponding base copulas appears in the index-mixed
copula. However, if $K\ge d$ (and especially $K\gg d$) it can happen that
the index-mixed copula is $\Pi$ even though no base copula is.
\begin{proposition}[$K\ge d$ and independence]\label{prop:K:ge:d}
  Suppose $K\ge d$. If the index distribution is such that $\P(I_{j_1}=I_{j_2})=0$
  for all $1\le j_1<j_2\le d$, then the index-mixed copula $C$ is $\Pi$ regardless of the choice of base copulas $C_k$, $1\leq k \leq K$.
\end{proposition}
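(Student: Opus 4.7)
The plan is to exploit the hypothesis $\P(I_{j_1}=I_{j_2})=0$ for all $1\le j_1<j_2\le d$ to force each column of the index matrix $I$ to contain at most one $1$ almost surely, which in turn collapses every factor $C_k(\bm{u}^{I_{\cdot,k}})$ into either the constant $1$ or a single coordinate $u_j$. Starting from the representation $C(\bm{u})=\E_{\bm{I}}\bigl[\prod_{k=1}^K C_k(\bm{u}^{I_{\cdot,k}})\bigr]$ from Theorem~\ref{thm:index:mixed:form:C}, the first step is to use subadditivity of probability on the union $\bigcup_{j_1<j_2}\{I_{j_1}=I_{j_2}\}$ to conclude that, almost surely, the components $I_1,\dots,I_d$ are pairwise distinct. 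Equivalently, with probability one, each value $k\in\{1,\dots,K\}$ appears at most once in $\bm{I}$, so the partition sizes satisfy $D_k\in\{0,1\}$ for every $k$.

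The second step is to evaluate each factor on this almost sure event. If $D_k=0$, then $I_{\cdot,k}=\bm{0}$ and hence $C_k(\bm{u}^{I_{\cdot,k}})=C_k(1,\dots,1)=1$. If $D_k=1$ with the unique $1$ occurring in position $j=j(k)$, then $C_k(\bm{u}^{I_{\cdot,k}})=C_k(1,\dots,1,u_{j(k)},1,\dots,1)=u_{j(k)}$, since one-dimensional margins of any copula are uniform. Because each $j\in\{1,\dots,d\}$ corresponds to exactly one $k$ (namely $k=I_j$) when the $I_j$'s are distinct, the product $\prod_{k=1}^K C_k(\bm{u}^{I_{\cdot,k}})$ equals $\prod_{j=1}^d u_j=\Pi(\bm{u})$ almost surely.

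The final step is to take expectation: $C(\bm{u})=\E_{\bm{I}}[\Pi(\bm{u})]=\Pi(\bm{u})$ for every $\bm{u}\in[0,1]^d$. No obstacle is anticipated, as the argument is entirely combinatorial once the almost sure distinctness of the $I_j$'s is established; the only mildly delicate point is handling the null set on which the $I_j$'s might coincide, which contributes nothing to the expectation since the integrand is uniformly bounded by $1$ and the offending event has probability zero.
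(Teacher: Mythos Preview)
Your proof is correct and follows essentially the same approach as the paper: both argue that the pairwise distinctness hypothesis forces every column of $I$ to have at most one entry equal to $1$ almost surely, whence each factor $C_k(\bm{u}^{I_{\cdot,k}})$ reduces to either $1$ or a single coordinate $u_j$, and the product collapses to $\Pi(\bm{u})$. Your treatment is marginally more explicit about the subadditivity step and the null set, but the substance is identical.
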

Proposition~\ref{prop:K:ge:d} helps to understand how to avoid that $C$ equals
$\Pi$ if $K\ge d$. First, there should be at least one base copula that is not
$\Pi$; note that this can still mean that such a base copula is the product of
a lower-dimensional $\Pi$ and a non-trivial copula. Second, the index
distribution should be chosen such that positive probability mass is put on
selecting at least two dependent components of that base copula.

In the same vein, one could be interested in constructing ``continuously
index-mixed copulas'' based on an infinite (countable or uncountable) set of
base copulas. However, as noted before, one would need to guarantee that the
probability that at least two components of the index vector are equal is
non-zero for obtaining more than just the independence copula.

Index-mixed copulas also can be constructed in such a way that a subset of margins is independent of the remaining margins.
The following Proposition~\ref{prop_block_wise_independent} provides a sufficient condition to this end.
For the proof recall the definition of $J_{k}$ as a function of the index matrix as provided in Section~\ref{sec:index:mixing}.
\begin{proposition}[Block-wise independence]\label{prop_block_wise_independent}
  Denote by $J_{k}(\bm{I})$ the $k$th index partition associated with the index
  matrix linked to the index vector $\bm{I}$.  If, for a fixed index
  $1\leq \ell \leq K$, the index partition $J_{\ell}(\bm{I})$ is independent of
  $\bm{I}$, for all $\bm{I}$ with positive probability, that is
  $J_{\ell}(\bm{I}) = J_{\ell} = (\ell_1,\dots,\ell_m)$, then the index-mixed
  copula factorizes into two independent blocks as
  \begin{align}
    C(u_1,\dots,u_d) = C_{\ell}(u_{\ell_1},\dots,u_{\ell_m})\overline{C}_{\ell}(u_{s_1},\dots,u_{s_{m^*}}),
  \end{align}
  where $(s_1,\dots,s_{m^*}) := \{1,\dots,d\} \setminus J_{\ell} $ and
  $\overline{C}_{\ell}$ is a copula of the remaining variables.  Throughout we
  assume $\emptyset \neq J_{\ell} \neq \{1,\dots,d\}$ to avoid having only a
  single block.
\end{proposition}
The result in Proposition~\ref{prop_block_wise_independent} can be generalized to the case of multiple independent blocks by keeping multiple index partitions fixed across all index vectors.

\subsubsection{Pairwise conditional copulas}
Conditional copulas $C^{j|1,\dots,j-1}$, $j=2,\dots,d$, and thus the transform
of \cite{rosenblatt1952} are often complicated to compute;
see~\cite[Section~2.7]{hofertkojadinovicmaechleryan2018}. %
\cite{hofertmaechler2014a} considered pairwise Rosenblatt transforms, which, for
a bivariate margin with indices $j_1,j_2$ are given by
Lemma~\ref{lem:biv:mar} Part~\ref{lem:biv:mar:1} as
\begin{align}
  C^{j_2|j_1}(u_{j_2}\,|\,u_{j_1})=\sum_{k=1}^Kp_{(I_{j_1},I_{j_2})}(k,k)C_k^{j_2|j_1}(u_{j_2}\,|\,u_{j_1})+\biggl(\,\sum_{\substack{k_1,k_2=1\\ k_1\neq k_2}}^Kp_{(I_{j_1},I_{j_2})}(k_1,k_2)\biggr) u_{j_2}\label{eq:rosen:biv}
\end{align}
for all $u_{j_1},u_{j_2}\in[0,1]$, where
$C_k^{j_2|j_1}(u_{j_2}\,|\,u_{j_1})=\frac{\partial}{\partial
  u_{j_1}}C_k^{j_1,j_2}(u_{j_1},u_{j_2})$ denotes the $k$th conditional base
copula at $u_{j_2}$ given $u_{j_1}$. Equation~\eqref{eq:rosen:biv} can be used for
graphical model assessment in the sense of \cite{hofertmaechler2014a}.

\subsubsection{Extendibility}
It is often of interest to known whether a given copula appears as the copula of
finite-dimensional distributions of a stochastic process.
Even in the class of exchangeable copulas, this is not always possible,
leading to the following definition; see \cite{maischerer2012}. %
\begin{definition}[Extendible copula]
  A $d$-dimensional copula $C$ is \emph{extendible}, if there exists an infinite
  exchangeable sequence $(U_n)_{n=1}^{\infty}$ such that $C$ is the distribution
  function of every $d$-dimensional finite-dimensional distribution, that is
  $(U_{j_1},\ldots,U_{j_d})\sim C$ for all integer indices
  $1 \leq j_1 < j_2 < \cdots < j_d$.
\end{definition}
We can now establish extendibility of index-mixed copulas.
\begin{proposition}[Extendibility of index-mixed copulas]\label{prop_extendibility}
  Let $C$ denote the $d$-dimensional index-mixed copula defined in
  Definition~\ref{def:index:mixed:copulas} based on an index vector $\bm{I}$ and
  $d$-dimensional base copulas $C_k$, $1 \leq k \leq K$.  If all base copulas
  $C_k$ are extendible, $\bm{I}$ has identical univariate marginals $G$ and
  there exists an extendible copula $C_0$ such that
  $\P(I_1 \leq i_1, \ldots, I_d \leq i_d) = C_0(G(i_1),\ldots,G(i_d))$, then
  the index-mixed copula $C$ is extendible.
\end{proposition}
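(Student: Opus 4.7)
The plan is to explicitly build an infinite exchangeable sequence $(U_n)_{n=1}^\infty$ whose every $d$-subset is distributed as $C$. Using extendibility of each base copula $C_k$, select $K$ mutually independent infinite exchangeable sequences $(\tilde{U}_{n,k})_{n=1}^\infty$, $k=1,\ldots,K$, so that every $d$-subset of the $k$th sequence is distributed as $C_k$ (and hence every $m$-subset, $1\le m\le d$, follows the corresponding $m$-dimensional marginal of $C_k$). Next, using extendibility of $C_0$, pick an infinite exchangeable sequence $(V_n)_{n=1}^\infty$ of $\U(0,1)$ variables whose $d$-subsets have copula $C_0$, independent of the $\tilde{U}_{n,k}$'s; then set $I_n := G^{-1}(V_n)$ with $G^{-1}$ the generalized inverse of $G$. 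Since $G^{-1}(V)\le i$ if and only if $V\le G(i)$, it follows that for every $1\le j_1<\cdots<j_d$ one has $\P(I_{j_1}\le i_1,\ldots,I_{j_d}\le i_d)=C_0(G(i_1),\ldots,G(i_d))=\P(\bm{I}\le\bm{i})$, and $(I_n)_{n=1}^\infty$ inherits exchangeability from $(V_n)_{n=1}^\infty$. Finally, define $U_n := \tilde{U}_{n,I_n}$ for $n\ge 1$.

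Exchangeability of $(U_n)_{n=1}^\infty$ follows because each $(\tilde{U}_{n,k})_n$ and $(I_n)_n$ is exchangeable and the $K+1$ families are mutually independent. To verify the $d$-dimensional marginal, fix $1\le j_1<\cdots<j_d$ and condition on $(I_{j_1},\ldots,I_{j_d})=\bm{i}$. Grouping the indices $l=1,\ldots,d$ by the value of $i_l$, the collection $\{\tilde{U}_{j_l,k}:i_l=k\}$ of size $D_k$ is an exchangeable subset of the $k$th base sequence, hence by extendibility of $C_k$ its joint distribution equals the $D_k$-dimensional marginal of $C_k$ corresponding to the positions where $I_{\cdot,k}(\bm{i})$ equals $1$. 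By independence across $k$, the conditional distribution function of $(U_{j_1},\ldots,U_{j_d})$ at $\bm{u}$ equals $\prod_{k=1}^K C_k(\bm{u}^{I_{\cdot,k}(\bm{i})})$. Averaging over the distribution of $(I_{j_1},\ldots,I_{j_d})$, which agrees with that of $\bm{I}$, reproduces $\E_{\bm{I}}\bigl[\,\prod_{k=1}^K C_k(\bm{u}^{I_{\cdot,k}})\bigr]=C(\bm{u})$ by Theorem~\ref{thm:index:mixed:form:C}.

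The main obstacle is the bookkeeping needed to justify that a size-$D_k$ selection taken from arbitrary positions in the $k$th infinite sequence has the correct $D_k$-dimensional marginal of $C_k$: this is exactly where extendibility of $C_k$ (and not merely exchangeability of the $d$-dimensional $C_k$) is essential, since the indices $\{j_l:i_l=k\}$ need not be $\{1,\ldots,D_k\}$. The remainder is a standard product-space construction of the $K+1$ independent exchangeable families followed by the computation above.
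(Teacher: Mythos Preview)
Your proposal is correct and follows essentially the same construction as the paper: both define $U_n=\tilde{U}_{n,I_n}$ from $K+1$ mutually independent infinite exchangeable families and verify the $d$-marginal by conditioning on $(I_{j_1},\ldots,I_{j_d})$ and invoking extendibility of each $C_k$. The only cosmetic difference is that the paper argues exchangeability of $(U_n)_{n\ge 1}$ via de~Finetti (conditionally iid given the exchangeable $\sigma$-algebras $\mathcal{E}_0,\ldots,\mathcal{E}_K$), whereas you use the equivalent direct observation that a fixed coordinatewise function of independent exchangeable sequences is again exchangeable.
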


\section{Examples, illustration, application}\label{sec:examples}
In this section, we consider examples, illustrations and an outlined application.
As trivial examples, one can directly see that an index-mixed copula is the independence copula if
(i) either all base copulas are the independence copula, or (ii) if,
irrespective of the base copulas, all components of $\bm{I}$ differ almost
surely.  Along the same lines, if the index vector $\bm{I}$ is comonotone and
$C_k=C_0$, $k=1,\dots,K$, for some copula $C_0$, then we have by
Corollary~\ref{cor:invar:equal:comon} that the index-mixed copula is (also)
$C=C_0$.

\begin{example}[Illustrating the effect of index-mixing for $d=2$]\label{ex:ill:index:mixing:2d}
  We now consider scatter plots of samples of size 10\,000 from the index-mixed
  copulas $C$ of order $K=2$ with base copulas $\Pi$ (the independence copula), $M$ (the comonotone copula),
  $C^{\text{C}}$ (a Clayton copula with parameter such that Kendall's tau equals 0.5) and $C^{\text{G}}$ (a Gumbel copula with parameter such that Kendall's tau equals 0.5);
  see Figure~\ref{fig:base:cop:samples} for scatter plots of these base copulas.
  \begin{figure}[htbp]
    \includegraphics[width=0.48\textwidth]{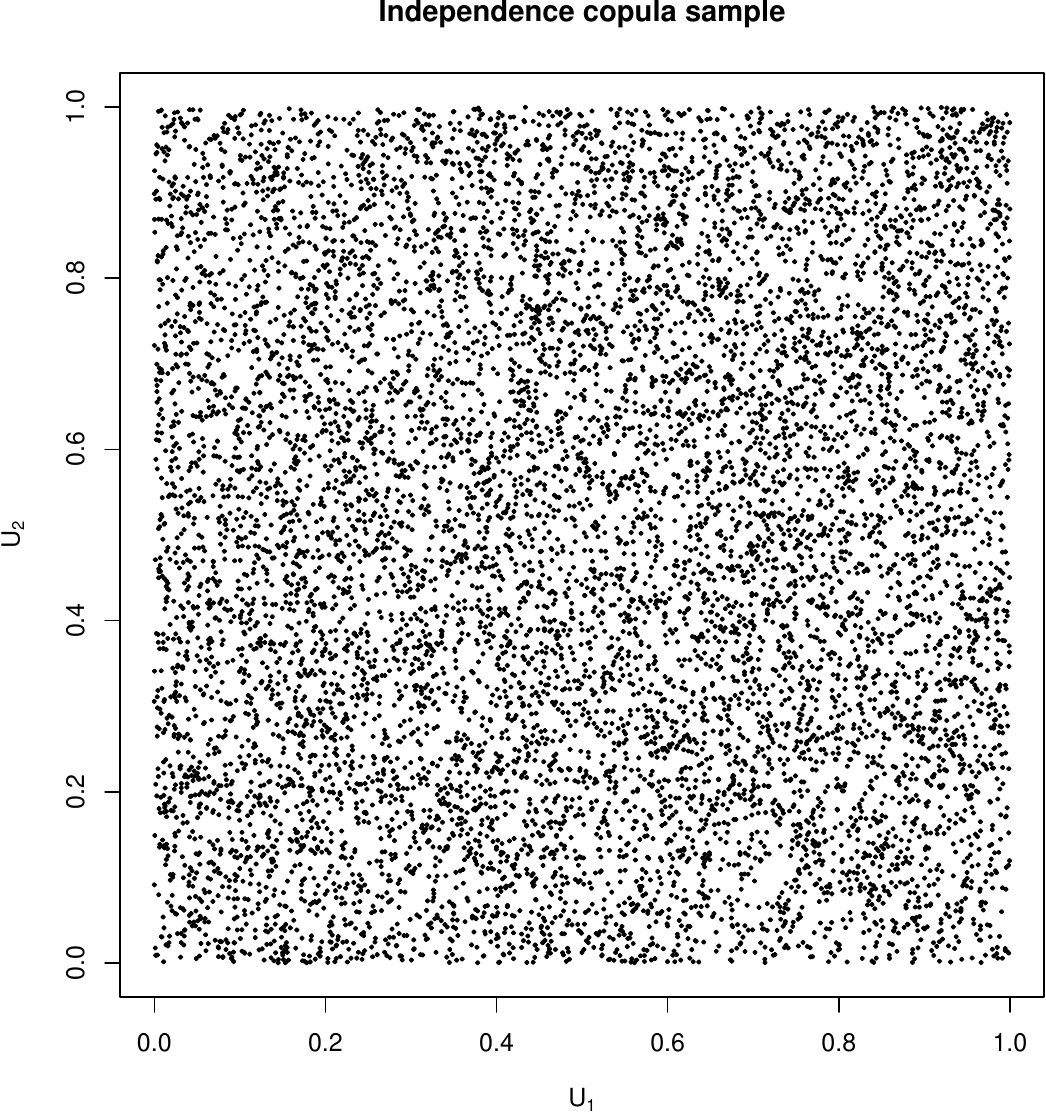}%
    \hfill
    \includegraphics[width=0.48\textwidth]{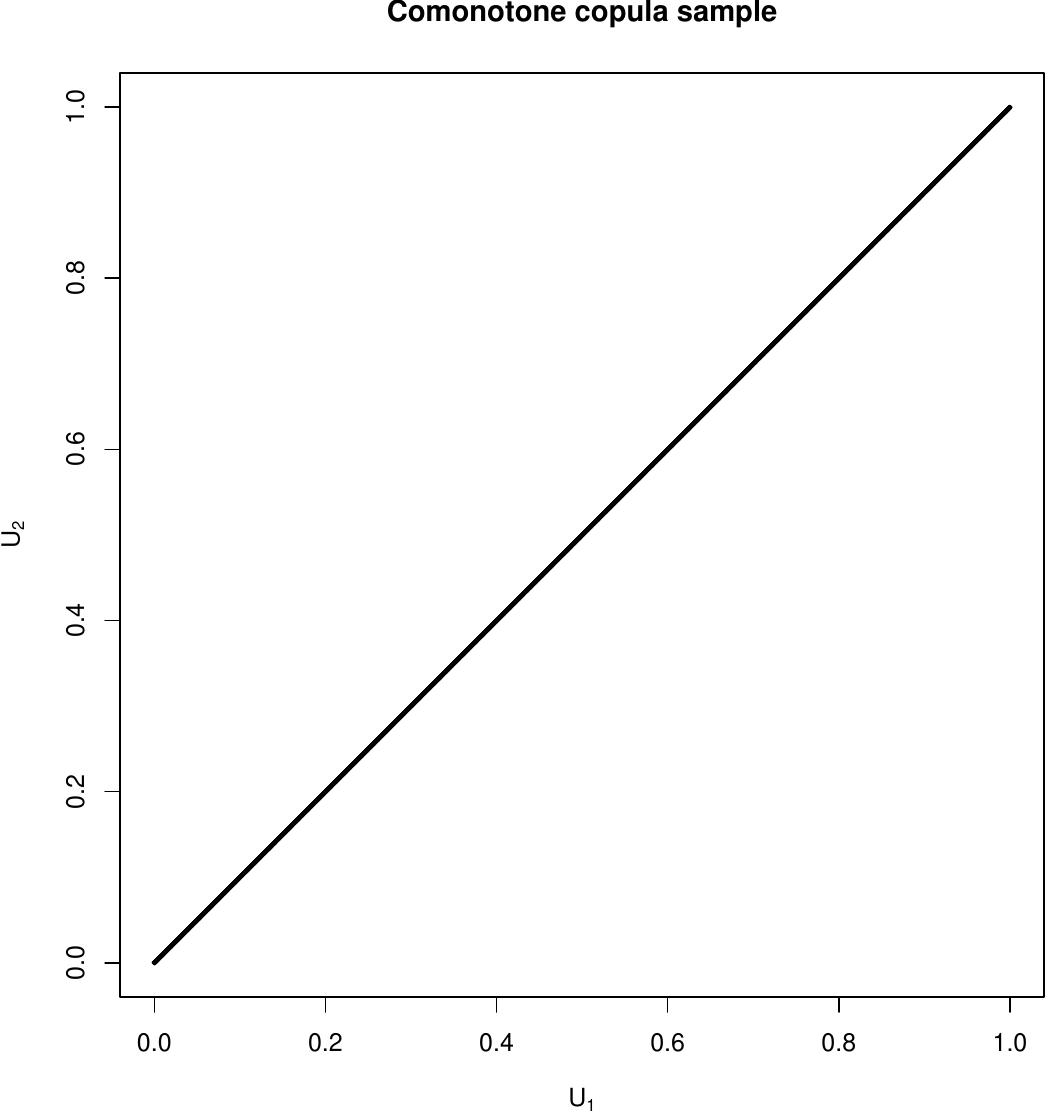}\\[4mm]%
    \includegraphics[width=0.48\textwidth]{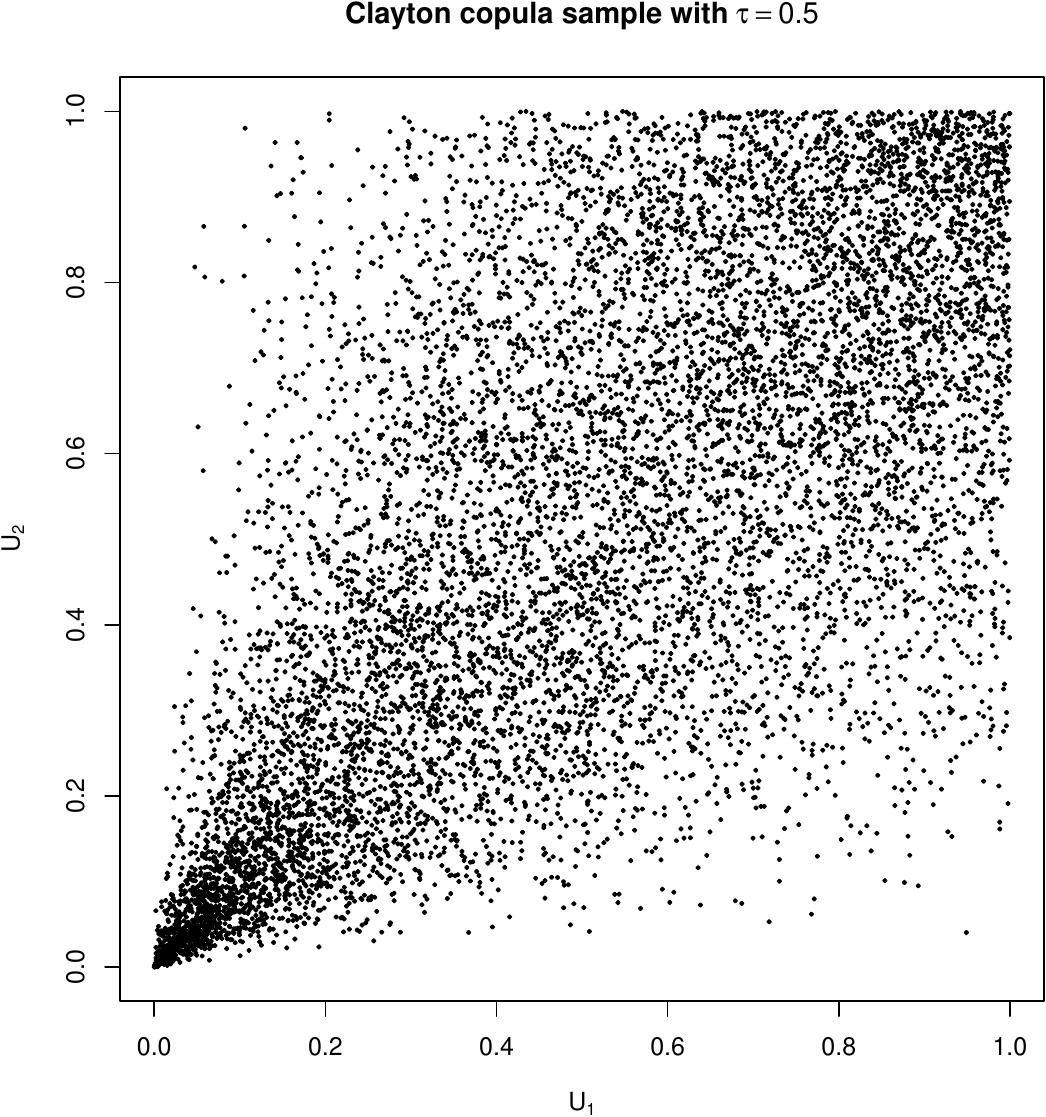}%
    \hfill
    \includegraphics[width=0.48\textwidth]{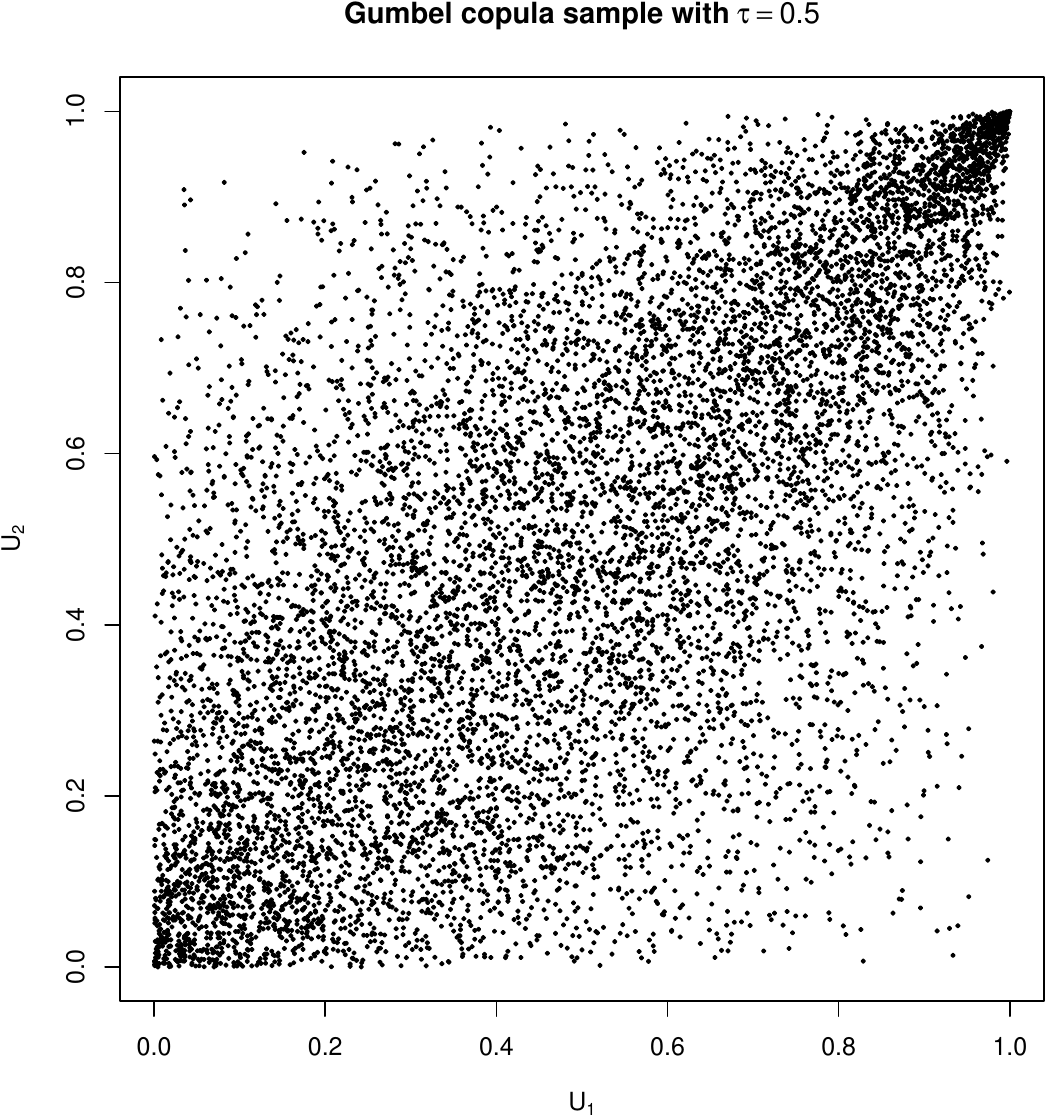}%
    \caption{Sample of size 10\,000 from the independence copula $\Pi$ (top left),
      comonotone copula $M$ (top right), Clayton copula $C^{\text{C}}$ (bottom left; with Kendall's
      tau $0.5$) and Gumbel copula $C^{\text{G}}$ (bottom right; with Kendall's tau $0.5$).}
    \label{fig:base:cop:samples}
  \end{figure}

  Figure~\ref{fig:IMC:2d:I:indep} shows samples from $C$ with index vector of
  the form $\bm{I}=\bm{1}+\bm{B}$ for $\bm{B}\sim\B^\Pi_2(1/2,\dots,1/2)$. On
  the left-hand side, the base copulas are $C^{\text{C}}$ and $C^{\text{G}}$. On
  the right-hand side, both base copulas are the comonotone copula
  $M$. %
  \begin{figure}[htbp]
    \includegraphics[width=0.48\textwidth]{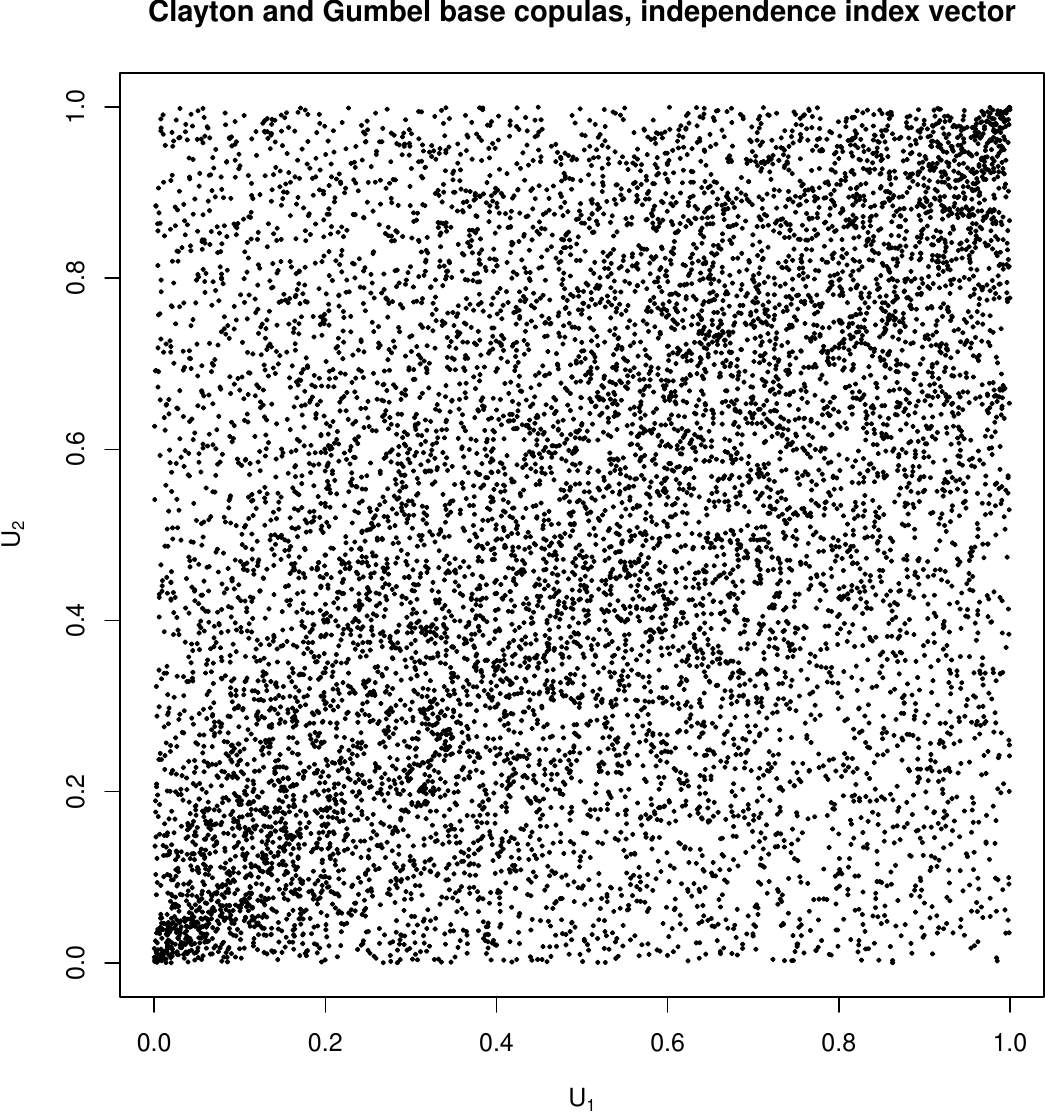}%
    \hfill
    \includegraphics[width=0.48\textwidth]{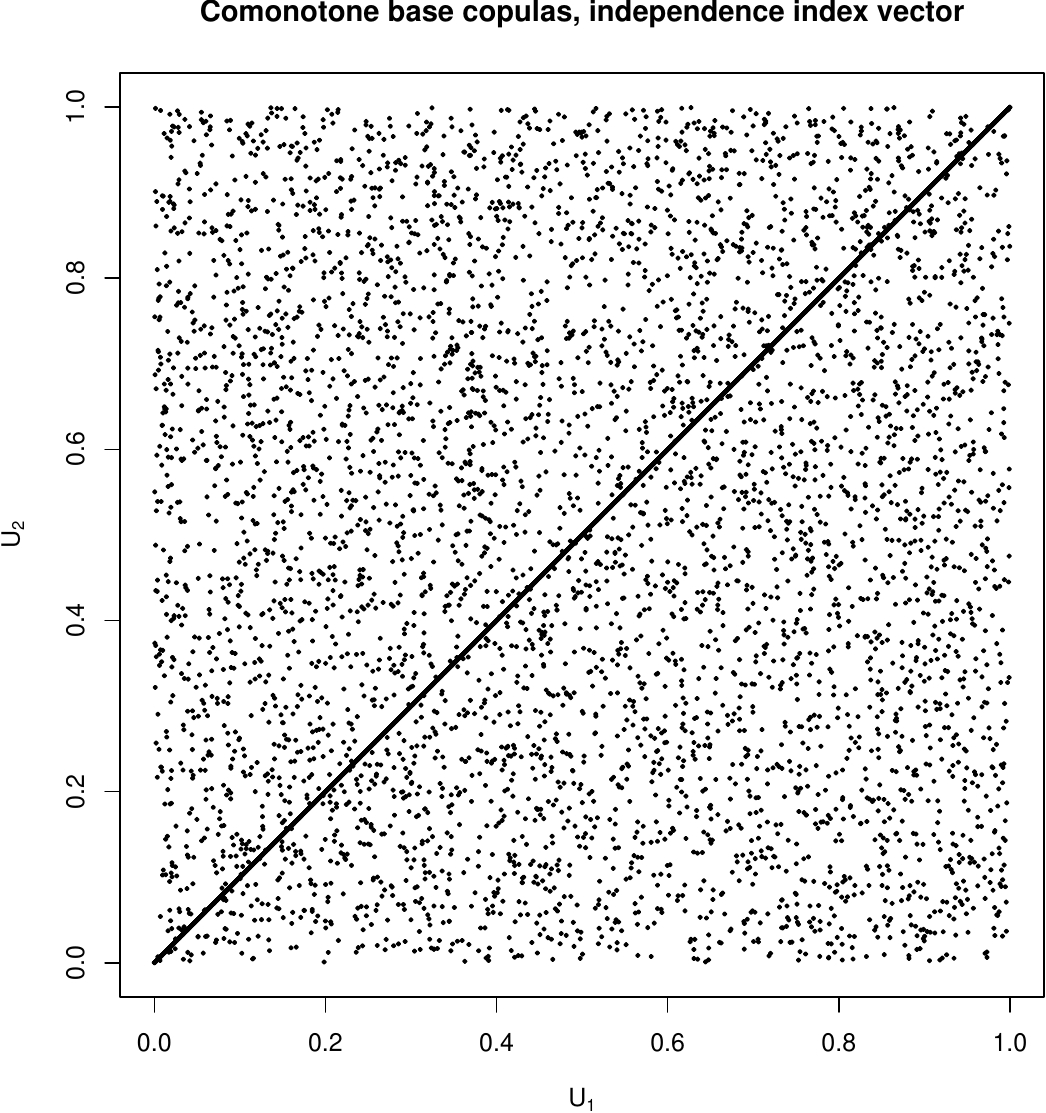}%
    \caption{Sample of size 10\,000 from the index-mixed copula $C$ with
      $\bm{I}=\bm{1}+\bm{B}$ for $\bm{B}\sim\B^{\Pi}_2(1/2)$ and with base
      copulas given by the Clayton copula $C^{\text{C}}$ and the Gumbel copula
      $C^{\text{G}}$ (left), and twice the comonotone
      copula $M$ (right).}
    \label{fig:IMC:2d:I:indep}
  \end{figure}
  With $\bm{I}=\bm{1}+\bm{B}$ for $\bm{B}\sim\B^\Pi_2(1/2,\dots,1/2)$, we have
  $p_{(I_1,I_2)}(k_1,k_2)=\P(I_1=k_1,I_2=k_2)=1/4$, $k_1,k_2\in\{1,2\}$, and thus, by
  Example~\ref{ex:biv:index:mixed} (see also Lemma~\ref{lem:biv:mar} Part~\ref{lem:biv:mar:1}),
  \begin{align*}
    C(u_1,u_2)=\frac{C^{\text{C}}}{4}+\frac{C^{\text{G}}}{4}+\frac{\Pi}{2},\quad u_1,u_2\in[0,1],
  \end{align*}
  that is the index-mixed copula is a mixture with probability $1/4$ on the
  Clayton copula, $1/4$ on the Gumbel copula and $1/2$ on the independence
  copula $\Pi$. The tail dependence coefficients of $C$ are $\lambda_{\text{l}}=\frac{2^{-1/2}}{4}$
  and $\lambda_{\text{u}}=\frac{2-2^{1/2}}{4}$.
  The left-hand side of Figure~\ref{fig:IMC:2d:I:indep} shows a sample of
  $C$. We clearly see features of all three copulas.

  Similarly, on the right-hand side of Figure~\ref{fig:IMC:2d:I:indep}, we see a
  mixture that puts probability $1/4+1/4=1/2$ on the comonotone copula $M$ and
  probability $1/2$ on $\Pi$, so
  \begin{align*}
    C(u_1,u_2)=\frac{M}{2}+\frac{\Pi}{2},\quad u_1,u_2\in[0,1].
  \end{align*}
  Note that even though the components of $\bm{I}$ are independent, we see that
  a larger range of concordance is attainable than for EFGM copulas (see
  Section~\ref{sec:EFGM:limited:dep}). In particular, by
  Proposition~\ref{prop_bivariate_Spearman_Kendall}, the equi-probable mixture
  of $M$ and $\Pi$ has Spearman's rho and Kendall's tau
  \begin{align*}
    \rho_{\text{S}}&=p_{(I_1,I_2)}(1,1)\rho_{\text{S},1}+p_{(I_1,I_2)}(2,2)\rho_{\text{S},2}=\frac{1}{4}\cdot 1+\frac{1}{4}\cdot 1=\frac{1}{2},\\
    \tau&=p_{(I_1,I_2)}(1,1)^2\tau_1+p_{(I_1,I_2)}(2,2)^2\tau_2+2p_{(I_1,I_2)}(1,1)p_{(I_1,I_2)}(2,2)(4\mu_{M,M}-1)\\
                   &\phantom{={}}+\frac{2}{3}(1-(p_{(I_1,I_2)}(1,1)+p_{(I_1,I_2)}(2,2)))(p_{(I_1,I_2)}(1,1)(12\mu_{M,\Pi}-3)+p_{(I_1,I_2)}(2,2)(12\mu_{M,\Pi}-3))\\
                   &=\biggl(\frac{1}{4}\biggr)^2\cdot 1+\biggl(\frac{1}{4}\biggr)^2\cdot 1+2\cdot\frac{1}{4}\cdot\frac{1}{4}\biggl(4\frac{1}{2}-1\biggr)+\frac{2}{3}\biggl(1-\biggl(\frac{1}{4}+\frac{1}{4}\biggr)\biggr)\biggl(\frac{1}{4}\biggl(12\frac{1}{3}-3\biggr)+\frac{1}{4}\biggl(12\frac{1}{3}-3\biggr)\biggr)\\
                   &=\frac{1}{8}+\frac{1}{8}+\frac{1}{6}=\frac{5}{12},
  \end{align*}%
  where we used that $\mu_{M,M}=\E(M(U,U))=\E(U)=1/2$ for $U\sim\U(0,1)$ and
  that $\mu_{M,\Pi}=\E(M(U_1,U_2))=\E(X)=1/3$ for $U_1,U_2\isim\U(0,1)$ and
  $X=\min\{U_1,U_2\}\sim F_X$ with $F_X(x)=1-(1-x)^2$, $x\in[0,1]$.
  By Proposition~\ref{prop:tail:dependence}, the tail dependence coefficients
  are
  \begin{align*}
    \lambda_{\text{l}}=\lambda_{\text{u}}=p_{(I_1,I_2)}(1,1)\lambda_{\text{u},1}+p_{(I_1,I_2)}(2,2)\lambda_{\text{u},2}=\frac{1}{4}\cdot 1+\frac{1}{4}\cdot 1=\frac{1}{2}.
  \end{align*}
  Also, asymmetric marginal distributions of $\bm{I}$ are possible, of
  course.

  Figure~\ref{fig:IMC:2d:I:comon} shows samples from the index-mixed copulas with
  $\bm{I}=\bm{1}+\bm{B}$ for $\bm{B}\sim\B^{M}_2(1/2,\dots,1/2)$;
  see Corollary~\ref{cor:invar:equal:comon} for the analytical form.
  The base copulas are twice $\Pi$ (top left), $C^{\text{C}}$ and $C^{\text{G}}$ (top
  right), $C^{\text{C}}$ and $M$ (bottom left) and twice $M$ (bottom right).
  \begin{figure}[htbp]
    \includegraphics[width=0.48\textwidth]{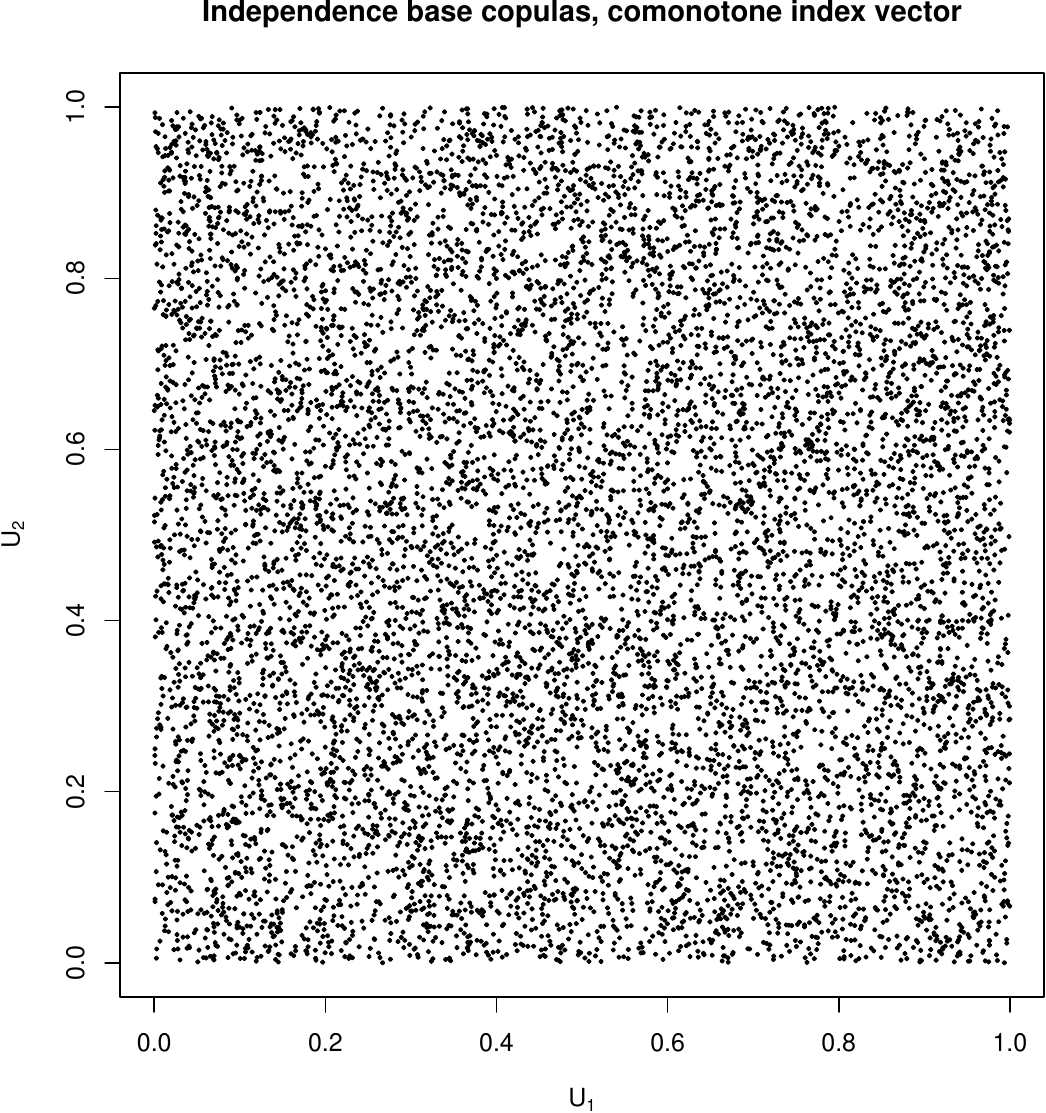}%
    \hfill
    \includegraphics[width=0.48\textwidth]{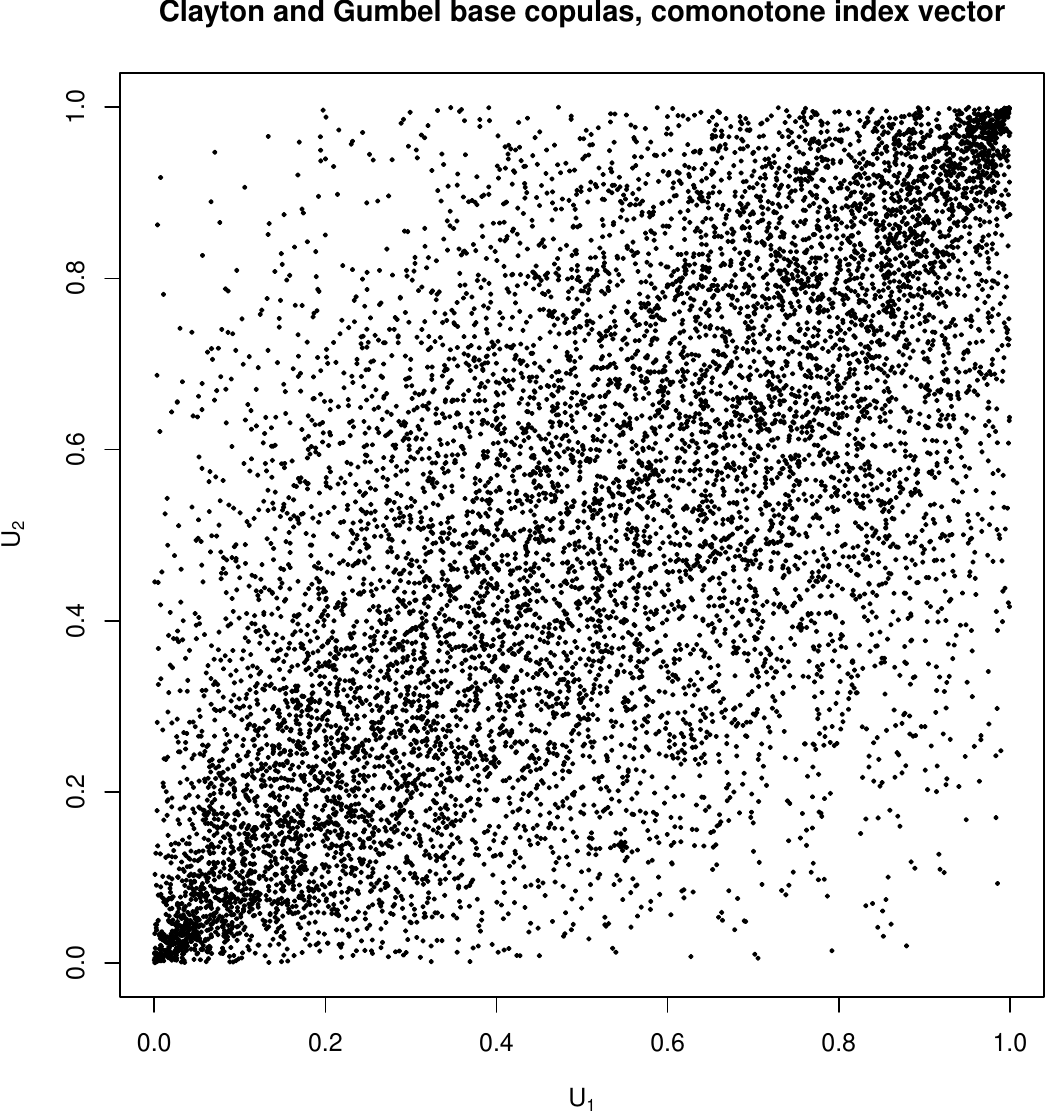}\\[4mm]%
    \includegraphics[width=0.48\textwidth]{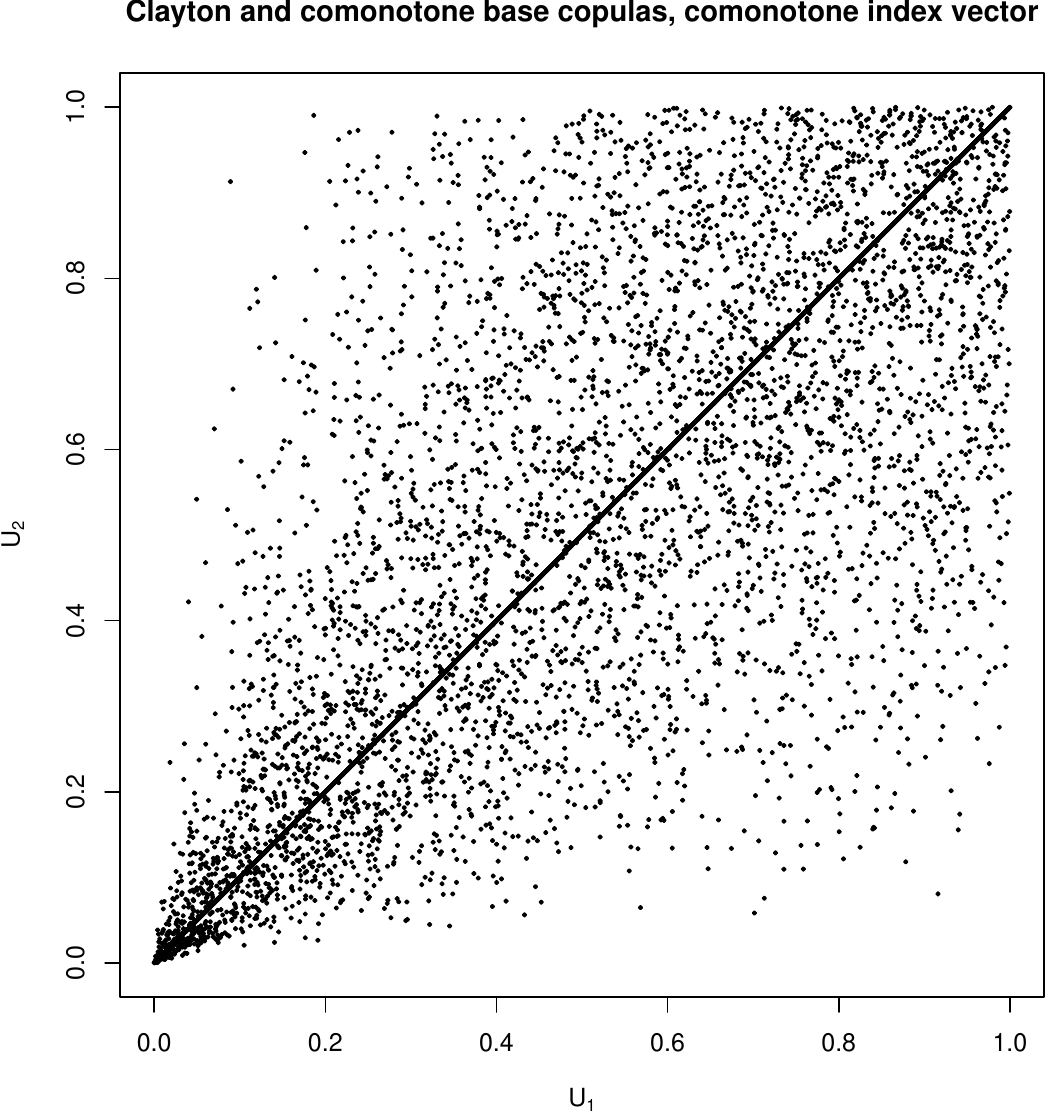}%
    \hfill
    \includegraphics[width=0.48\textwidth]{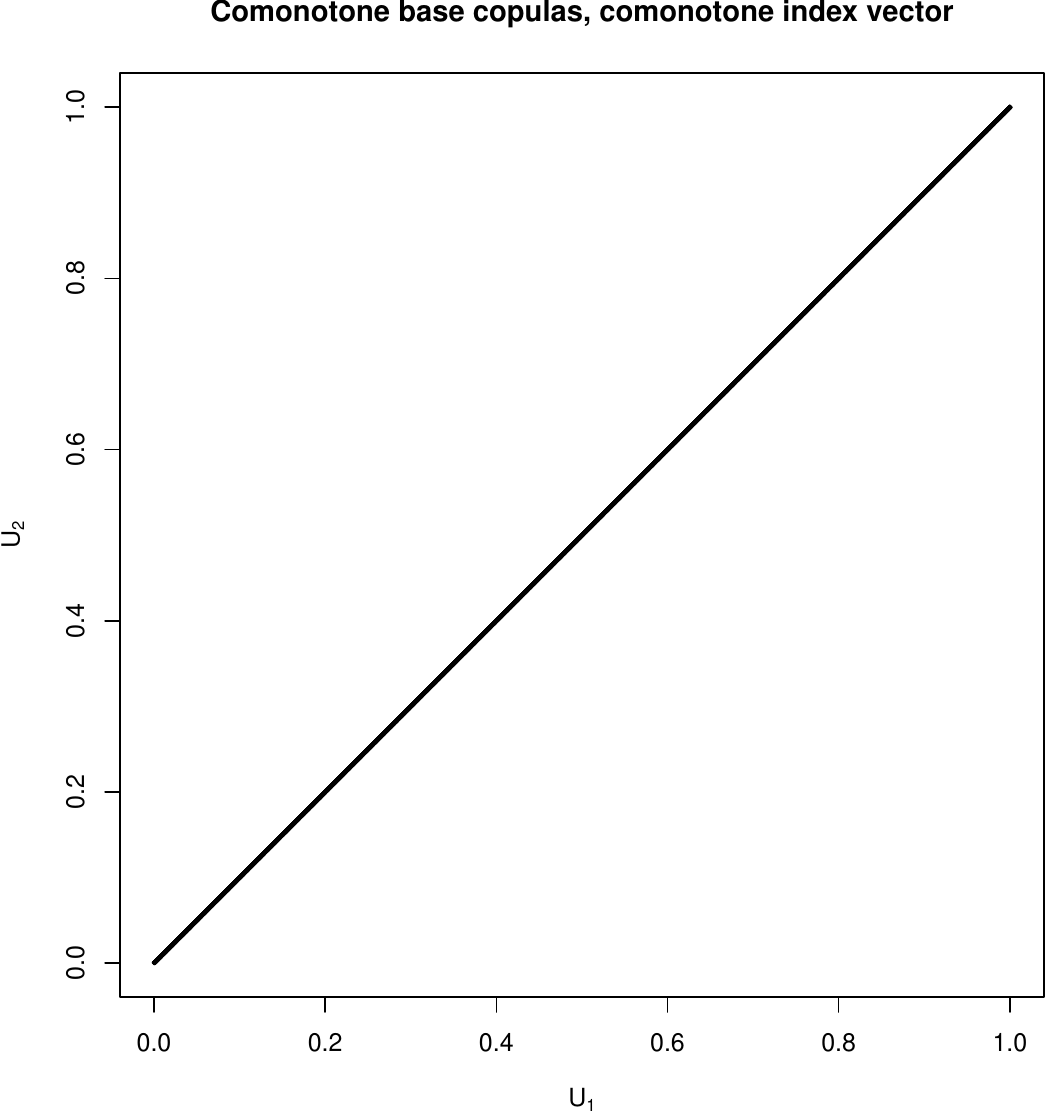}%
    \caption{Sample of size 10\,000 from the index-mixed copula $C$ with
      $\bm{I}=\bm{1}+\bm{B}$ for $\bm{B}\sim\B^{M}_2(1/2,\dots,1/2)$ and with base copulas
      given by the independence copula $\Pi$ (top left), the Clayton copula $C^{\text{C}}$ and Gumbel copula $C^{\text{G}}$
      (top right), the Clayton copula $C^{\text{C}}$
      and comonotone copula $M$ (bottom left) and twice the comonotone copula $M$ (bottom right).}
    \label{fig:IMC:2d:I:comon}
  \end{figure}

  Comonotone $\bm{I}$ select the two columns of the copula matrix with
  probabilities $1/2$ in our considered cases here. By
  Corollary~\ref{cor:invar:equal:comon}, we thus observe samples from (classical) mixtures
  of the base copulas. If $\bm{I}$ has comonotone components and if both base
  copulas are $M$, then we have a mixture between $M$ and $M$, which results in
  $M$; see the bottom right plot of Figure~\ref{fig:IMC:2d:I:comon}.
\end{example}

\begin{example}[Stress testing application]
  A direct application of index-mixed copulas is stress testing. A given
  $d$-dimensional benchmark copula model currently in use in a financial firm
  can serve as base copula $C_1$ with $p_{\bm{I}}(\bm{1})>0$, and the base
  copulas $C_2,\dots,C_K$ can be regarded as $K-1$ stress(-testing) copulas. Through suitably
  chosen index distributions, the resulting index-mixed copula mixes (possibly
  $d$-dimensional) margins of $C_1$ with margins of $C_2,\dots,C_K$, replacing
  the former by the latter; see Remark~\ref{rem:constr}. Due to the independence
  assumption of different columns in the copula matrix, each mixture
  contribution is (again by Remark~\ref{rem:constr}) the product of copulas with
  non-overlapping arguments, so two components with one belonging to $C_1$ and
  one belonging to any of $C_2,\dots,C_K$ are independent.

  Let us now consider a simple example in this context. Let $C^{t_\nu,\tau}$
  denote a homogeneous $t$ copula with $\nu$ degrees of freedom and homogeneous
  correlation parameter such that Kendall's tau of any pairwise marginal copula
  is $\tau$. Suppose the financial firm currently models losses in its five
  business lines ($d=5$) by $C_1(u_1,\dots,u_5)=C^{t_4,0.5}(u_1,\dots,u_5)$ with
  standard normal margins.  As one of its stress models, the financial firm
  considers the copula
  $C_2(u_1,\dots,u_5)=C^{t_4,0.4}(u_1,u_2,u_3)C^{t_4,0.6}(u_4,u_5)$, while the
  margins remain as before. To see the effect of a potential restructuring of
  the five business lines into a group with three and one with two business
  lines, the firm considers various scenarios, specified via a sequence of
  probability weights between $0$ and $1$ that determine how much probability
  mass is put on the stress model. The remaining probability mass is then
  automatically put on the initial benchmark model. Hence $\bm{I}=(1+\B(p))\bm{1}$,
  where $p$ runs on a grid between $0$ and $1$.

  The left-hand side of Figure~\ref{fig:app:total:loss} shows boxplots of $10^6$
  simulated total losses across all five business lines as functions of the probability mass
  put on the stress model.
  \begin{figure}[htbp]
    \includegraphics[height=0.25\textheight]{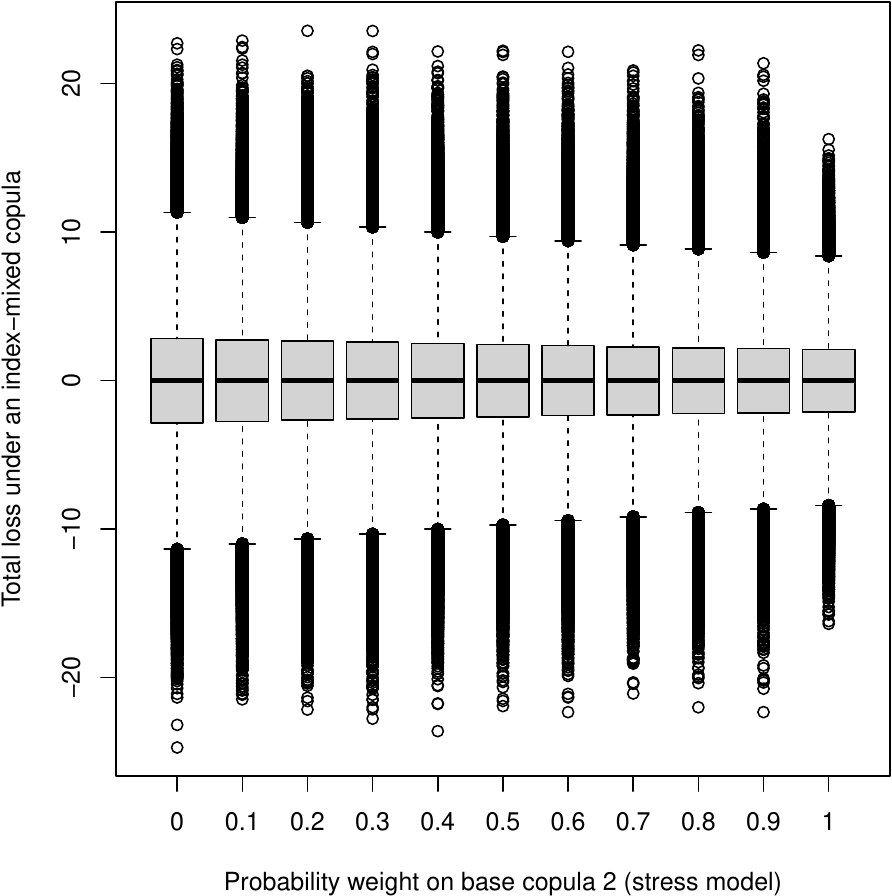}
    \hfill
    \includegraphics[height=0.25\textheight]{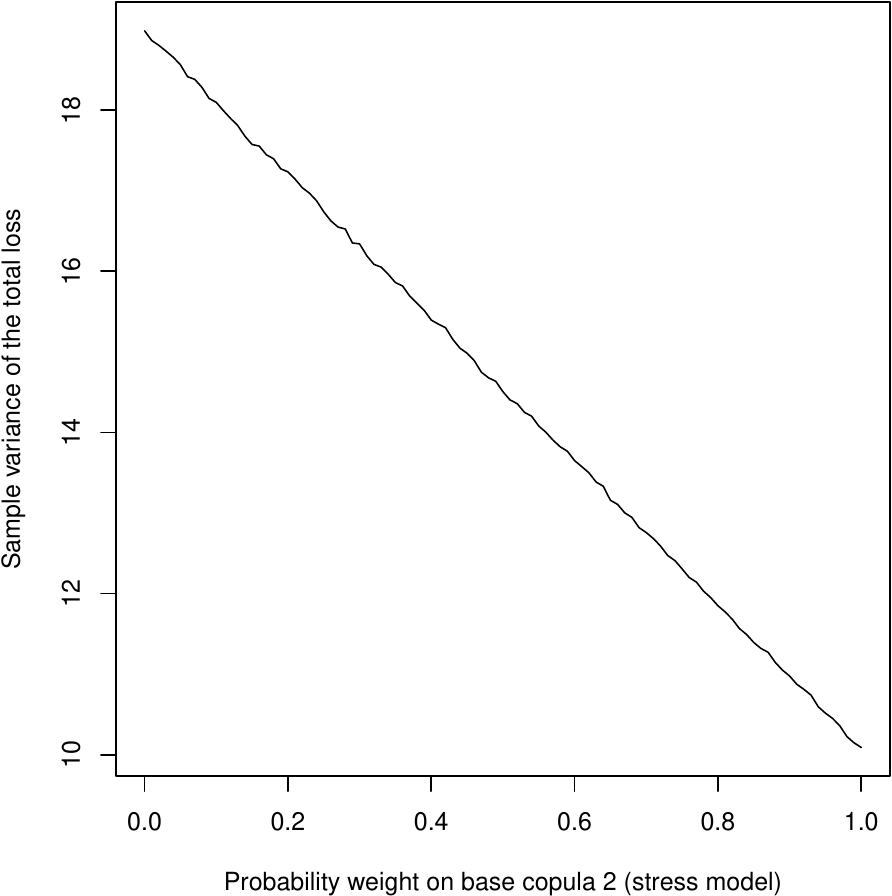}%
    \hfill
    \includegraphics[height=0.25\textheight]{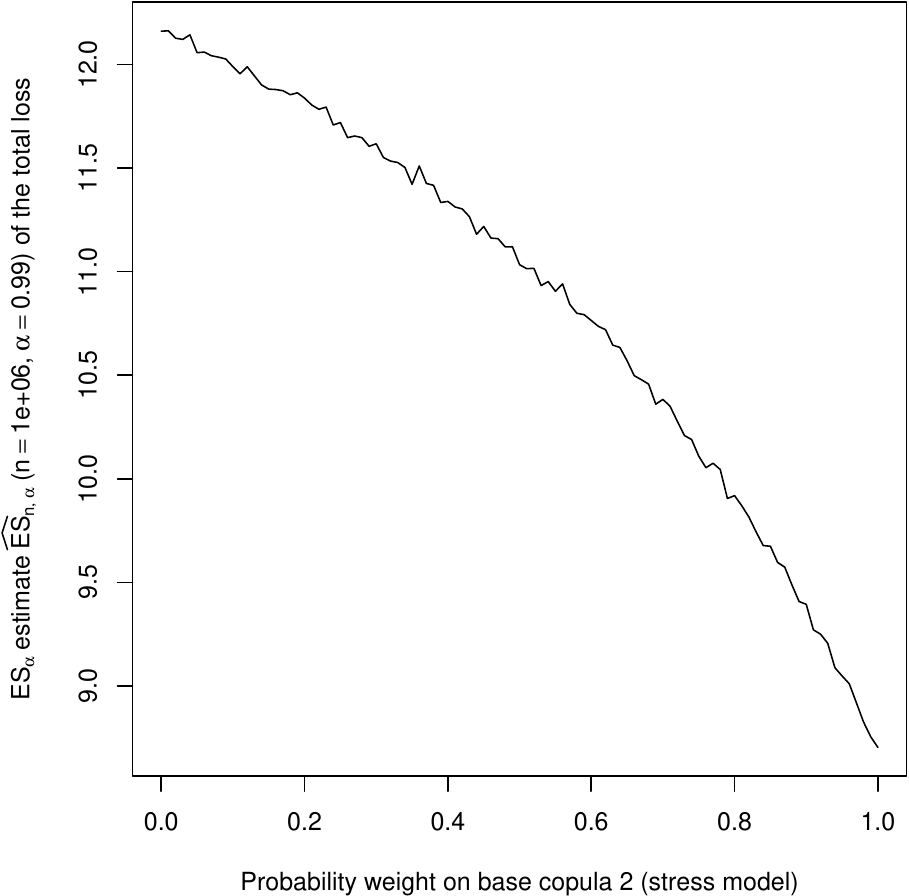}%
    \caption{Boxplots (left), sample variance (center) and nonparametric
      expected shortfall estimate at confidence level 99\% (right)
      of $10^6$ simulated total losses in five business lines of a financial firm under standard normal %
      margins and an index-mixed copula with probability weight $p$ (displayed on the x-axis)
      on the stress model $C_2(u_1,\dots,u_5)=C^{t_4,0.4}(u_1,u_2,u_3)C^{t_4,0.6}(u_4,u_5)$
      and remaining weight on the benchmark model $C_1(u_1,\dots,u_5)=C^{t_4,0.5}(u_1,\dots,u_5)$.}
    \label{fig:app:total:loss}
  \end{figure}
  Note that the mean is not affected by a change of the underlying dependence
  structure.  Looking closely, we may see a decrease in variance if more mass is
  put on the stress model. This is confirmed by a plot of the sample variance in
  the center of Figure~\ref{fig:app:total:loss}. The right-hand side of
  Figure~\ref{fig:app:total:loss} displays the nonparametric estimator of the
  expected shortfall of the total loss at confidence level 99\%; see
  \cite[Section~9.1]{mcneilfreyembrechts2015} for details on the estimation
  procedure.
We see that the more mass is put on the stress model,
the smaller the risk for the firm in terms of variance and expected shortfall.
\end{example}

\section{Conclusion}\label{sec:concl}
We introduced the class of index-mixed copulas and investigated its properties.
By construction, index-mixed copulas are a convex combination of products of
margins of the utilized base copulas and hence generalize traditional
multivariate mixture distributions.
Even in this more general and flexible setup, one can compute many important
copula properties analytically. Among others, we derived the distribution
function and density, sampling algorithms, bivariate and trivariate margins,
mixtures of index-mixed copulas, addressed symmetries such as radial symmetry
and exchangeability, measures of association such as tail dependence
coefficients, Blomqvist's beta, Spearman's rho or Kendall's tau, and orthant
dependence and concordance orderings.  We also provided examples with
illustrations and outlined applications in actuarial science to the analysis of sums of dependent random variables or stress testing of
dependence structures.
Similar to the discussed stress testing, index-mixed copulas allow for inclusion of expert opinions in the modeling process through the specification of base copulas and mixing weights in the construction.
Collectively, this shows the rather remarkable
tractability of index-mixed copulas. Through the lens of index-mixing we also
revealed the limited range of dependence captured by EFGM copulas and why they
are tail independent.

We note the following possible avenues for future research.

First, one could generalize index-mixed copulas by allowing the index matrix to
have more than a single $1$ in each row. For row $j$ of the copula matrix, the
resulting two or more standard uniforms selected by the index matrix would then
need to be transformed to the single standard uniform $U_j$. As transformations,
one could consider the minimum, maximum, or sum, for example.  In each case, one
would then need to map with the respective marginal distribution to guarantee
that $U_j\sim\U(0,1)$. For the minimum, maximum and sum, the distribution
functions are $1-(1-u)^{S_j}$, $u^{S_j}$ and the Irwin-Hall distribution
function
$\frac{1}{S_j!}\sum_{l=0}^{\lfloor
  u\rfloor}(-1)^l\binom{S_j}{l}(u-l)^{S_j}$, %
respectively, where $S_j=\sum_{k=1}^K I_{j, k}^{\text{mat}}$ is the number of $1$s in the
$j$th row of $I^{\text{mat}}$.

Second, while the article has focused specifically on copulas, the representation
in Theorem~\ref{thm:index:mixed:form:C} similarly applies to quasi-copulas,
which suggests the study of index-mixed quasi-copulas.

Third, there are a number of possible estimation procedures to be explored.
While a direct maximum likelihood estimation is at odds with the mixture
structure of index-mixed copulas, the EM-algorithm, see
\cite{McLachlanKrishnan2008}, composite likelihood methods, see
\cite{GoreckiHofert2023}, and maximum mean discrepancy based parameter
estimation, see \cite{AlquierCheriefAbdellatifDerumignyFermanian2023}, for
copulas provide a suitable framework in which estimation can be performed.

Fourth, and also in the realm of statistical inference, a visual assessment or
formal test of a copula being index-mixed would be of interest.

\subsection*{Acknowledgments}
This work was supported by NSERC grant RGPIN-2020-05784.

\subsection*{Conflict of interest statement}
The authors report there are no competing interests to declare.

\printbibliography[heading=bibintoc]

\appendix

\section{Proofs}\label{sec_proofs}

\begin{proof}[Proof of Theorem~\ref{thm:index:mixed:form:C}]
  With independent $\tilde{\bm{U}}_k=(\tilde{U}_{1,k},\dots,\tilde{U}_{d,k})\sim C_k$, $k=1,\dots,K$,
  we have
  \begin{align*}
    \P(\bm{U}\le\bm{u})&=\E_{\bm{I}}[\P(\tilde{U}_{\bm{I}}\le\bm{u}\,|\,\bm{I})]=\E_{\bm{I}}[\P(\tilde{U}_{j,I_j}\le u_j\ \forall\,j\,|\,\bm{I})]=\E_{\bm{I}}\biggl[\,\prod_{k=1}^KC_k(\bm{u}^{I_{\cdot,k}^{\text{mat}}})\biggr],\quad\bm{u}\in[0,1]^d,
  \end{align*}
  where, for $k=1,\dots,K$, the last equality follows from the independence of the base copulas and the fact that all
  entries $\tilde{U}_{j,I_j}$ with the same value $I_j=k$ follow the base copula
  $C_k$.
  With $u_l=1$ for all $l\neq j$ we have for $u_j\in[0,1]$ that
  \begin{align*}
    \P(U_j\le u_j)&=\E_{\bm{I}}\biggl[\,\prod_{k=1}^KC_k(1,\dots,1,u_j^{I_{j,k}^{\text{mat}}},1,\dots,1)\biggr]=\E_{\bm{I}}\biggl[\,\prod_{k=1}^K u_j^{I_{j,k}^{\text{mat}}}\biggr]=\E_{\bm{I}}\biggl[u_j^{\sum_{k=1}^K I_{j,k}^{\text{mat}}}\biggr]
    =\E_{\bm{I}}[u_j]=u_j,
  \end{align*}
  so $C$ is indeed a copula.
\end{proof}

\begin{proof}[Proof of Corollary~\ref{cor:index:mixed:form:c}]
  The statement follows immediately by noting that $\E_{\bm{I}}$ is a weighted
  sum with a finite number of summands, by linearity of differentiation, and by
  noting that each product $\prod_{k=1}^KC_k(\bm{u}^{I_{\cdot,k}^{\text{mat}}})$ contains
  each of $u_1,\dots,u_d$ precisely once.
\end{proof}

\begin{proof}[Proof of Corollary~\ref{cor:invar:equal:comon}]
  If $p_{\bm{I}}(k,\dots,k)\ge 0$, $\sum_{k=1}^Kp_{\bm{I}}(k,\dots,k)=1$, then,
  with probability $p_{\bm{I}}(k,\dots,k)$, the $k$th column $I_{\cdot,k}^{\text{mat}}$
  of the index matrix is $\bm{1}$ and $I_{\cdot,l}^{\text{mat}}=\bm{0}$ for all $l\neq k$. We thus have
  \begin{align*}
    C(\bm{u})=\E_{\bm{I}}\biggl[\,\prod_{k=1}^KC_k(\bm{u}^{I_{\cdot,k}^{\text{mat}}})\biggr]=\sum_{k=1}^Kp_{\bm{I}}(k,\dots,k)\biggl(\,C_k(\bm{u}^{\bm{1}})\prod_{\substack{l=1\\l\neq k}}^KC_l(\bm{u}^{\bm{0}})\biggr)=\sum_{k=1}^Kp_{\bm{I}}(k,\dots,k)C_k(\bm{u}),\quad \bm{u}\in[0,1]^d.
  \end{align*}
  The last statement follows as a special case.
\end{proof}

\begin{proof}[Proof of Lemma~\ref{lem:biv:mar}]
  To allow for a more compact notation we make use of the Kronecker delta
  $\delta_{ij}=\I_{\{i\}}(j)$ for two indices $i,j\in\IN$.
  \begin{enumerate}
  \item We have for all $u_{j_1},u_{j_2}\in[0,1]$ that
    \begin{align*}
      C^{j_1,j_2}(u_{j_1},u_{j_2})&=C(1,\dots,1,u_{j_1},1,\dots,1,u_{j_2},1,\dots,1)\\
                                  &=\E_{(I_{j_1},I_{j_2})}\biggl[\,\prod_{k=1}^KC_k(1,\dots,1,u_{j_1}^{I_{j_1,k}^{\text{mat}}},1,\dots,1,u_{j_2}^{I_{j_2,k}^{\text{mat}}},1,\dots,1)\biggr]\\
                                  &=\sum_{k_1,k_2=1}^Kp_{(I_{j_1},I_{j_2})}(k_1,k_2)\biggl(\,\prod_{k=1}^KC_k(1,\dots,1,u_{j_1}^{\delta_{k_1k}},1,\dots,1,u_{j_2}^{\delta_{k_2k}},1,\dots,1)\biggr)\\
                                  &=\sum_{k=1}^Kp_{(I_{j_1},I_{j_2})}(k,k)C_k^{j_1,j_2}(u_{j_1},u_{j_2})+\biggl(\,\sum_{\substack{k_1,k_2=1\\ k_1\neq k_2}}^Kp_{(I_{j_1},I_{j_2})}(k_1,k_2)\biggr)\Pi(u_{j_1},u_{j_2}).
    \end{align*}
    The remaining part of the statement ($K=2$) is immediate.
  \item We have for all $u_{j_1},u_{j_2},u_{j_3}\in[0,1]$ that
    \begin{align*}
      &\phantom{{}={}}C^{j_1,j_2,j_3}(u_{j_1},u_{j_2},u_{j_3})\\
      &=\sum_{k_1,k_2,k_3=1}^Kp_{(I_{j_1},I_{j_2},I_{j_3})}(k_1,k_2,k_3)\biggl(\,\prod_{k=1}^KC_k(1,\dots,1,u_{j_1}^{\delta_{k_1k}},1,\dots,1,u_{j_2}^{\delta_{k_2k}},1,\dots,1,u_{j_3}^{\delta_{k_3k}},1\dots,1)\biggr)\\ %
      &=\sum_{k=1}^K p_{(I_{j_1},I_{j_2},I_{j_3})}(k,k,k)C_k^{j_1,j_2,j_3}(u_{j_1},u_{j_2},u_{j_3})\\ %
      &\phantom{{}={}}+\sum_{\substack{k_1,k_2,k_3=1\\\text{two $k_l$ equal}}}^K p_{(I_{j_1},I_{j_2},I_{j_3})}(k_1,k_2,k_3)\biggl(\,\prod_{k=1}^KC_k(1,\dots,1,u_{j_1}^{\delta_{k_1k}},1,\dots,1,u_{j_2}^{\delta_{k_2k}},\\
      & \quad \qquad \qquad \qquad \qquad \qquad \qquad \qquad \qquad \qquad \qquad 1,\dots,1,u_{j_3}^{\delta_{k_3 k}},1\dots,1)\biggr)\\ %
      &\phantom{{}={}}+\biggl(\,\sum_{\substack{k_1,k_2,k_3=1\\\text{all $k_l$ distinct}}}^K p_{(I_{j_1},I_{j_2},I_{j_3})}(k_1,k_2,k_3)\biggr)\Pi(u_{j_1},u_{j_2},u_{j_3})\\ %
      &=\sum_{k=1}^K p_{(I_{j_1},I_{j_2},I_{j_3})}(k,k,k) C_k^{j_1,j_2,j_3}(u_{j_1},u_{j_2},u_{j_3})\\
      &\phantom{{}={}}+\sum_{k=1}^K\biggl(\biggl(\,\sum_{\substack{k_3=1\\k_3\neq k}}^K p_{(I_{j_1},I_{j_2},I_{j_3})}(k,k,k_3) \biggr)C_k^{j_1,j_2}(u_{j_1},u_{j_2})u_{j_3}\\
      &\phantom{{}={}+\sum_{k=1}^K\biggl(} + \biggl(\,\sum_{\substack{k_2=1\\k_2\neq k}}^Kp_{(I_{j_1},I_{j_2},I_{j_3})}(k,k_2,k) \biggr) C_k^{j_1,j_3}(u_{j_1},u_{j_3})u_{j_2}\\
      &\phantom{{}={}+\sum_{k=1}^K\biggl(}+ \biggl(\,\sum_{\substack{k_1=1\\k_1\neq k}}^Kp_{(I_{j_1},I_{j_2},I_{j_3})}(k_1,k,k) \biggr) C_k^{j_2,j_3}(u_{j_2},u_{j_3})u_{j_1}\biggr)\\
      &\phantom{{}={}}+\biggl(\,\sum_{\substack{k_1,k_2,k_3=1\\\text{all $k_l$ distinct}}}^K p_{(I_{j_1},I_{j_2},I_{j_3})}(k_1,k_2,k_3)\biggr)\Pi(u_{j_1},u_{j_2},u_{j_3}).
    \end{align*}
    The remaining part of the statement ($K=2$) is immediate. \qedhere
  \end{enumerate}
\end{proof}

\begin{proof}[Proof of Corollary~\ref{cor:index:mixed:mixtures}]
  By Tonelli's Theorem, we have
  \begin{align*}
    C(\bm{u})&=\E_{\bm{I}}\biggl[\,\prod_{k=1}^KC_k(\bm{u}^{I_{\cdot,k}^{\text{mat}}})\biggr]=\E_{\bm{I}}\biggl[\,\prod_{k=1}^K\int_{\IR}C_k(\bm{u}^{I_{\cdot,k}^{\text{mat}}};\theta_k)\,\rd G_k(\theta_k)\biggr]\\
             &=\E_{\bm{I}}\biggl[\,\int_{\IR}\dots\int_{\IR} \prod_{k=1}^KC_k(\bm{u}^{I_{\cdot,k}^{\text{mat}}};\theta_k)\,\rd G_1(\theta_1)\cdots\,\rd G_K(\theta_K)\biggr]\\
             &=\int_{\IR}\dots\int_{\IR} \E_{\bm{I}}\biggl[\,\prod_{k=1}^KC_k(\bm{u}^{I_{\cdot,k}^{\text{mat}}};\theta_k)\biggr]\,\rd G_1(\theta_1)\cdots\,\rd G_K(\theta_K),\quad\bm{u}\in[0,1]^d.
  \end{align*}
  The last statement follows as a special case.
\end{proof}

\begin{proof}[Proof of Proposition~\ref{prop:surv:cop}]
  Denote by $E\in\IR^{d\times K}$ a matrix where each entry is equal to $1$.  If
  $\bm{U}\sim C$, then $\bm{1}-\bm{U}\sim\hat{C}$.  By~\eqref{U:index:mixed},
  \begin{align*}
    \bm{1}-\bm{U}=\bm{1}-\tilde{U}_{\bm{I}}=\begin{pmatrix}1-\tilde{U}_{1,I_1}\\ \vdots \\ 1-\tilde{U}_{d,I_d}\end{pmatrix}=\begin{pmatrix}(E-\tilde{U})_{1,I_1}\\ \vdots \\ (E-\tilde{U})_{d,I_d}\end{pmatrix}=(E-\tilde{U})_{\bm{I}}.
  \end{align*}
  Since the columns of $E-\tilde{U}$ contain random vectors from the survival
  copulas of the base copulas, we obtain that the survival copula of the
  index-mixed copula $C$ is the index-mixed copula of the survival copulas of
  the base copulas. The statement about radial symmetry is rather immediate.
\end{proof}

\begin{proof}[Proof of Lemma~\ref{lem:exch:sums:prods:base:copulas}]
  Let $e_j^d=(0,\dots,0,1,0,\dots,0)\T$, $j=1,\dots,d$, denote the $j$th standard
  basis column vector of $\IR^d$ and let
  $P_{\sigma}:=(e_{\sigma(1)}^{d},\dots,e_{\sigma(d)}^{d})\T\in\{0,1\}^{d\times d}$ be the permutation matrix associated with
  $\sigma$ (stacking $e_{\sigma(1)}^d,\dots,e_{\sigma(d)}^d$ as rows of
  $P_{\sigma}$). Then multiplying $P_{\sigma}$ with a matrix
  $M\in\IR^{d\times m}$ from the right permutes the rows of $M$ in such a way
  that the $j$th row of $P_{\sigma}M$ is the $\sigma(j)$th row of $M$. With
  $\sigma(\bm{i}) = (i_{\sigma(1)},\ldots,i_{\sigma(d)})$ and using the
  permutation property of $P_{\sigma}$, we thus have
  \begin{align*}
    P_{\sigma}I^{\text{mat}}(\bm{i}) = \begin{pmatrix}{e_{i_{\sigma(1)}}^K}\!\!\T \\\vdots\\ {e_{i_{\sigma(d)}}^K}\!\!\T \end{pmatrix} = I^{\text{mat}}(\sigma(\bm{i})).
  \end{align*}
  This further implies that the $k$th column $I_{\cdot,k}^{\text{mat}}(\sigma(\bm{i}))$ of
  $I^{\text{mat}}(\sigma(\bm{i}))$ is a permutation of $I_{\cdot,k}^{\text{mat}}(\bm{i})$ since each
  column $k$ of $I^{\text{mat}}(\bm{i})$ is permuted according to $\sigma$.  Therefore,
  \begin{align*}
    I_{\cdot,k}^{\text{mat}}(\sigma(\bm{i})) = \sigma(I_{\cdot,k}^{\text{mat}}(\bm{i})).
  \end{align*}
  Noting that, for any permutation $\sigma\in\Sigma_d$, we have
  $\sort{\sigma(\bm{i})} = \sort{\bm{i}}$, that is ordering annihilates any previous
  permutation, we clearly have
  $\bm{i}\in \calK_{\bm{k}_{\text{ord}}}^d$ if and only if
  $\sigma(\bm{i}) \in \calK_{\bm{k}_{\text{ord}}}^d$.  This allows us to
  rewrite $\sum_{\bm{i}\in \calK_{\bm{k}_{\text{ord}}}^d}
    \prod_{k=1}^K C_k\bigl(\sigma(\bm{u})^{I_{\cdot, k}^{\text{mat}}(\bm{i})}\bigr)$ as
  \begin{align*}
    \sum_{\bm{i}\in \calK_{\bm{k}_{\text{ord}}}^d} \prod_{k=1}^K C_k\bigl(\sigma(\bm{u})^{I_{\cdot, k}^{\text{mat}}(\bm{i})}\bigr)
    &=
    \sum_{\bm{i}\in \calK_{\bm{k}_{\text{ord}}}^d} \prod_{k=1}^K C_k\bigl(\sigma(\bm{u})^{I_{\cdot, k}^{\text{mat}}(\sigma(\bm{i}))}\bigr)
    =
    \sum_{\bm{i}\in \calK_{\bm{k}_{\text{ord}}}^d} \prod_{k=1}^K C_k\bigl(\sigma(\bm{u})^{\sigma(I_{\cdot, k}^{\text{mat}}(\bm{i}))}\bigr).
  \end{align*}
  Writing out the argument
  $\sigma(\bm{u})^{\sigma(I_{\cdot, k}^{\text{mat}}(\bm{i}))} =
  \bigl(u_{\sigma(1)}^{I_{\sigma(1),k}^{\text{mat}}(\bm{i})},\ldots,u_{\sigma(d)}^{I_{\sigma(d),k}^{\text{mat}}(\bm{i})}\bigr)$ in the last term,
  we see that exchangeability of $C_k$ can be used to reorder the arguments in
  such a way that $C_k\bigl(\sigma(\bm{u})^{\sigma(I_{\cdot, k}^{\text{mat}}(\bm{i}))}\bigr) = C_k\bigl(\bm{u}^{I_{\cdot, k}^{\text{mat}}(\bm{i})}\bigr)$,
  which then implies that
  \begin{align*}
    \sum_{\bm{i}\in \calK_{\bm{k}_{\text{ord}}}^d} \prod_{k=1}^K C_k\bigl(\sigma(\bm{u})^{I_{\cdot, k}^{\text{mat}}(\bm{i})}\bigr)
    &=
    \sum_{\bm{i}\in \calK_{\bm{k}_{\text{ord}}}^d} \prod_{k=1}^K C_k\bigl(\bm{u}^{I_{\cdot, k}^{\text{mat}}(\bm{i})}\bigr).\qedhere
  \end{align*}
\end{proof}

\begin{proof}[Proof of Proposition~\ref{prop:exchangeable}]\mbox{}
  \begin{enumerate}
  \item Recall that
    $C(\bm{u})=\E_{\bm{I}}\bigl[\,\prod_{k=1}^KC_k(\bm{u}^{I_{\cdot, k}^{\text{mat}}})\bigr]$.
    Fix a specific index vector $\bm{I}$ with corresponding index matrix $I^{\text{mat}}$ and
    index partition $\{J_1,\dots,J_K\}$; see
    Section~\ref{sec:index:mixing}. For each $k=1,\dots,K$,
    exchangeability of $C_k$ allows us to reorder the arguments of
    $C_k(\bm{u}^{I_{\cdot, k}^{\text{mat}}})$ and write this term as
    $C_k(\bm{u}_{J_k},1,\dots,1)$ leading to the first
    expression. %
  \item We first note that the sets
    $\calK_{\bm{k}_{\text{ord}}}^d$,
    $\bm{k}_{\text{ord}}\in\mathcal{K}_{\text{ord}}^d$, by definition provide a
    partition of $\mathcal{K}^d$. As such we can group index vectors and rewrite
    the expectation with respect to the index distribution in the definition of
    index-mixed copulas. Denoting by $I_{\cdot,k}^{\text{mat}}(\bm{i})$ the $k$th column of the
    realization of the random index matrix $I^{\text{mat}}$ for $\bm{I}=\bm{i}$,
    we obtain
    \begin{align}\label{eq:exchangeability:reformulate}
      C(\bm{u})
      &= \E_{\bm{I}}\biggl[\,\prod_{k=1}^KC_k(\bm{u}^{I_{\cdot, k}^{\text{mat}}})\biggr]= \sum_{\bm{i}\in\mathcal{K}^d} p_{\bm{I}}(\bm{i}) \prod_{k=1}^KC_k(\bm{u}^{I_{\cdot, k}^{\text{mat}}(\bm{i})})
        = \sum_{\bm{k}_{\text{ord}}\in\mathcal{K}_{\text{ord}}^d}\ \sum_{\bm{i}\in \calK_{\bm{k}_{\text{ord}}}^d}\!\!\!\!\!p_{\bm{I}}(\bm{i}) \prod_{k=1}^KC_k(\bm{u}^{I_{\cdot, k}^{\text{mat}}(\bm{i})})\notag\\&= \sum_{\bm{k}_{\text{ord}}\in\mathcal{K}_{\text{ord}}^d}\!\!\! p_{\bm{I},\bm{k}_{\text{ord}}}\!\!\!\sum_{\bm{i}\in \calK_{\bm{k}_{\text{ord}}}^d} \prod_{k=1}^KC_k(\bm{u}^{I_{\cdot, k}^{\text{mat}}(\bm{i})}),
    \end{align}
    where
    we made use of the fact that
    $\bm{I}$ is uniformly distributed over
    $\calK_{\bm{k}_{\text{ord}}}^d$ with probability
    $p_{\bm{I}}(\bm{i})=p_{\bm{I},\bm{k}_{\text{ord}}}$ for all
    $\bm{i}\in\calK_{\bm{k}_{\text{ord}}}^d$. To show that $C$ is
    exchangeable, consider now a permutation $\sigma\in\Sigma_d$.
    From \eqref{eq:exchangeability:reformulate} we then have that
    \begin{align*}
      C(\sigma(\bm{u}))
      &= \sum_{\bm{k}_{\text{ord}}\in\mathcal{K}_{\text{ord}}^d} p_{\bm{I},\bm{k}_{\text{ord}}} \sum_{\bm{i}\in \calK_{\bm{k}_{\text{ord}}}^d} \prod_{k=1}^K C_k\bigl(\sigma(\bm{u})^{I_{\cdot, k}^{\text{mat}}(\bm{i})}\bigr).
    \end{align*}
    By Lemma~\ref{lem:exch:sums:prods:base:copulas},
    \begin{align*}
      \sum_{\bm{i}\in \calK_{\bm{k}_{\text{ord}}}^d}
      \prod_{k=1}^K C_k\bigl(\sigma(\bm{u})^{I_{\cdot, k}^{\text{mat}}(\bm{i})}\bigr)
      =
      \sum_{\bm{i}\in \calK_{\bm{k}_{\text{ord}}}^d}
      \prod_{k=1}^K C_k\bigl(\bm{u}^{I_{\cdot, k}^{\text{mat}}(\bm{i})}\bigr)
    \end{align*}
    and thus
    \begin{align*}
      C(\sigma(\bm{u}))=\sum_{\bm{k}_{\text{ord}}\in\mathcal{K}_{\text{ord}}^d} p_{\bm{I},\bm{k}_{\text{ord}}}\sum_{\bm{i}\in \calK_{\bm{k}_{\text{ord}}}^d}\prod_{k=1}^K C_k\bigl(\bm{u}^{I_{\cdot, k}^{\text{mat}}(\bm{i})}\bigr),
    \end{align*}
    which, by \eqref{eq:exchangeability:reformulate}, is $C(\bm{u})$,
    so $C$ is exchangeable.
  \end{enumerate}
\end{proof}

\begin{proof}[Proof of Proposition~\ref{prop:tail:dependence}]
  By Lemma~\ref{lem:biv:mar} Part~\ref{lem:biv:mar:1},
  \begin{align*}
    \lambda_{\text{l}}^{j_1,j_2}&=\lim_{u\downarrow 0}\frac{C^{j_1,j_2}(u,u)}{u}
    =\lim_{u\downarrow 0}\frac{\sum_{k=1}^Kp_{(I_{j_1},I_{j_2})}(k,k)C_k^{j_1,j_2}(u,u)+\bigl(1-\sum_{k=1}^Kp_{(I_{j_1},I_{j_2})}(k,k)\bigr)\Pi(u,u)}{u}\\
                       &=\sum_{k=1}^Kp_{(I_{j_1},I_{j_2})}(k,k)\lim_{u\downarrow 0}\frac{C_k^{j_1,j_2}(u,u)}{u}=\sum_{k=1}^K p_{(I_{j_1},I_{j_2})}(k,k)\lambda_{\text{l},k}^{j_1,j_2}.
  \end{align*}
  Combining Lemma~\ref{lem:biv:mar} Part~\ref{lem:biv:mar:1} and Proposition~\ref{prop:surv:cop}, %
  we obtain
  \begin{align*}
    \lambda_{\text{u}}^{j_1,j_2}&=\lim_{u\uparrow 1}\frac{1-2u+C^{j_1,j_2}(u,u)}{1-u}=\lim_{u\downarrow 0}\frac{-1+2u+C^{j_1,j_2}(1-u,1-u)}{u}
    =\lim_{u\downarrow 0}\frac{\hat{C}^{j_1,j_2}(u,u)}{u}\\
    &=\lim_{u\downarrow 0}\frac{\sum_{k=1}^Kp_{(I_{j_1},I_{j_2})}(k,k)\hat{C}_k^{j_1,j_2}(u,u)+\bigl(1-\sum_{k=1}^Kp_{(I_{j_1},I_{j_2})}(k,k)\bigr)\Pi(u,u)}{u}\\
    &=\sum_{k=1}^Kp_{(I_{j_1},I_{j_2})}(k,k)\lim_{u\downarrow 0}\frac{\hat{C}_k^{j_1,j_2}(u,u)}{u}=\sum_{k=1}^K p_{(I_{j_1},I_{j_2})}(k,k)\lambda_{\text{u},k}^{j_1,j_2}.\qedhere
  \end{align*}
\end{proof}

\begin{proof}[Proof of Proposition~\ref{prop_bivariate_Spearman_Kendall}]
  By~\eqref{eq:biv:mar:compact},
  \begin{align*}
    C^{j_1,j_2}(u_{j_1},u_{j_2})=\sum_{k=1}^{K+1}p_{(I_{j_1},I_{j_2})}(k,k)C_k^{j_1,j_2}(u_{j_1},u_{j_2})
  \end{align*}
with $\sum_{k=1}^{K+1}p_{(I_{j_1},I_{j_2})}(k,k) = 1$.
  For simplicity of the arguments, we extend all notions to $K+1$, so $\rho_{\text{S},K+1}^{j_1,j_2}$
  is Spearman's rho for $C_{K+1}^{j_1,j_2}$ and simmilarly for Kendall's tau later on.
  With $V_1,V_2\isim\U(0,1)$, Spearman's rho of $C^{j_1,j_2}$ is
  \begin{align*}
    \rho_{\text{S}}^{j_1,j_2}&=12\E[C^{j_1,j_2}(V_1,V_2)]-3=12\E\biggl[\,\sum_{k=1}^{K+1}p_{(I_{j_1},I_{j_2})}(k,k)C_k^{j_1,j_2}(V_1,V_2)\biggr]-3\\
    &=\sum_{k=1}^{K+1}p_{(I_{j_1},I_{j_2})}(k,k)\bigl(12\E[C_k^{j_1,j_2}(V_1,V_2)]-3\bigr)=\sum_{k=1}^{K+1} p_{(I_{j_1},I_{j_2})}(k,k)\rho_{\text{S},k}^{j_1,j_2}=\sum_{k=1}^K p_{(I_{j_1},I_{j_2})}(k,k)\rho_{\text{S},k}^{j_1,j_2},
  \end{align*}
  where we used that Spearman's rho of $C_{K+1}^{j_1,j_2}=\Pi$ is $0$.

Now consider Kendall's tau. With $(V_1,V_2)\sim C^{j_1,j_2}$ we have with \eqref{eq:mu:mixture} that
\begin{align*}
\E[C^{j_1,j_2}(V_1,V_2)] = \mu_{C^{j_1,j_2},C^{j_1,j_2}} = \sum^{K+1}_{k=1} \sum^{K+1}_{l=1} p_{(I_{j_1},I_{j_2})}(k,k) p_{(I_{j_1},I_{j_2})}(l,l) \mu_{C_k^{j_1,j_2},C_l^{j_1,j_2}},
\end{align*}
and thus Kendall's tau of $C^{j_1,j_2}$ is
\begin{align*}
\tau^{j_1,j_2}
  &=4\E[C^{j_1,j_2}(V_1,V_2)]-1\\
  &=4\sum^{K+1}_{k=1} \sum^{K+1}_{l=1} p_{(I_{j_1},I_{j_2})}(k,k) p_{(I_{j_1},I_{j_2})}(l,l) \mu_{C_k^{j_1,j_2},C_l^{j_1,j_2}} - \sum^{K+1}_{k=1} \sum^{K+1}_{l=1} p_{(I_{j_1},I_{j_2})}(k,k) p_{(I_{j_1},I_{j_2})}(l,l)\\
  &=\sum^{K+1}_{k=1} \sum^{K+1}_{l=1} p_{(I_{j_1},I_{j_2})}(k,k) p_{(I_{j_1},I_{j_2})}(l,l) \left(4 \mu_{C_k^{j_1,j_2},C_l^{j_1,j_2}} - 1\right)\\
  &=\sum_{k=1}^{K+1}\sum_{l=1}^{K+1}p_{(I_{j_1},I_{j_2})}(k,k)p_{(I_{j_1},I_{j_2})}(l,l)\tau_{k,l}^{j_1,j_2}\\
  &=\sum_{k=1}^{K+1}(p_{(I_{j_1},I_{j_2})}(k,k))^2\tau_k^{j_1,j_2}+\sum_{\substack{k_1,k_2=1\\ k_1\neq k_2}}^{K+1}p_{(I_{j_1},I_{j_2})}(k_1,k_1) p_{(I_{j_1},I_{j_2})}(k_2,k_2) \tau_{k_1,k_2}^{j_1,j_2}.
\end{align*}
  As Kendall's tau of $C_{K+1}^{j_1,j_2}=\Pi$ is $0$, the first sum is $\sum_{k=1}^K(p_{(I_{j_1},I_{j_2})}(k,k))^2\tau_k^{j_1,j_2}$.
  By~\eqref{eq:switch:bivariate:integrand:integrator}, $\tau_{k,l}^{j_1,j_2}=\tau_{l,k}^{j_1,j_2}$ and thus
  \begin{align*}
    \tau^{j_1,j_2}&=\sum_{k=1}^K(p_{(I_{j_1},I_{j_2})}(k,k))^2\tau_k^{j_1,j_2}+2\!\!\!\!\sum_{1\le k_1<k_2\le K+1}\!\!\!\! p_{(I_{j_1},I_{j_2})}(k_1,k_1) p_{(I_{j_1},I_{j_2})}(k_2,k_2) \tau_{k_1,k_2}^{j_1,j_2}\\
                  &=\sum_{k=1}^K(p_{(I_{j_1},I_{j_2})}(k,k))^2\tau_k^{j_1,j_2}+2\!\!\!\!\sum_{1\le k_1<k_2\le K}\!\!\!\! p_{(I_{j_1},I_{j_2})}(k_1,k_1)p_{(I_{j_1},I_{j_2})}(k_2,k_2)\tau_{k_1,k_2}^{j_1,j_2}\\
                  &\phantom{={}}+2\sum_{k=1}^Kp_{(I_{j_1},I_{j_2})}(k,k) p_{(I_{j_1},I_{j_2})}(K+1,K+1)\tau_{k,K+1}^{j_1,j_2}.
  \end{align*}
  Since $p_{(I_{j_1},I_{j_2})}(K+1,K+1)=1-\sum_{k=1}^Kp_{(I_{j_1},I_{j_2})}(k,k)$ and $\tau_{k,K+1}^{j_1,j_2}=4\mu_{C_k^{j_1,j_2},\Pi}-1=4((\rho_{\text{S},k}^{j_1,j_2}+3)/12)-1=\rho_{\text{S},k}^{j_1,j_2}/3$, $k=1,\dots,K$, we obtain
  \begin{align*}
    \tau^{j_1,j_2}&=\sum_{k=1}^K(p_{(I_{j_1},I_{j_2})}(k,k))^2\tau_k^{j_1,j_2}+2\!\!\!\!\sum_{1\le k_1<k_2\le K}\!\!\!\! p_{(I_{j_1},I_{j_2})}(k_1,k_1) p_{(I_{j_1},I_{j_2})}(k_2,k_2)\tau_{k_1,k_2}^{j_1,j_2}\\
                  &\phantom{={}}+\frac{2}{3}\biggl(1-\sum_{k=1}^Kp_{(I_{j_1},I_{j_2})}(k,k)\biggr)\sum_{k=1}^Kp_{(I_{j_1},I_{j_2})}(k,k)\rho_{\text{S},k}^{j_1,j_2}.\qedhere
  \end{align*}
\end{proof}

\begin{proof}[Proof of Proposition~\ref{prop:mult:spearman}]
  Let $C(\bm{u})=\E_{\bm{I}}\bigl[\prod_{k=1}^KC_k(\bm{u}^{I_{\cdot,k}^{\text{mat}}})\bigr]$. For $\rho_{\text{S}}^{\text{l},d}=\frac{d+1}{2^d-(d+1)}\bigl(2^d\mu_{C,\Pi}-1\bigr)$, we have
  \begin{align*}
    \mu_{C,\Pi}=\E_{\bm{I}}\biggl[\,\int_{[0,1]^d}\prod_{k=1}^KC_k(\bm{u}^{I_{\cdot,k}^{\text{mat}}})\,\rd\Pi(\bm{u})\biggr]=\E_{\bm{I}}\biggl[\,\prod_{k=1}^K\int_{[0,1]^{D_k}}\!\!\! C_k(\bm{u}^{I_{\cdot,k}^{\text{mat}}})\,\rd\Pi(\bm{u}^{I_{\cdot,k}^{\text{mat}}})\biggr], %
  \end{align*}
  where the last equality holds since $I^{\text{mat}}$ contains in each row precisely one entry $1$,
  so each $u_j$ appears in only one factor of the product $\prod_{k=1}^K$ (so for only one index $k$);
  see also Remark~\ref{rem:constr} in this regard.
  Since $d=\sum_{k=1}^KD_k$, we obtain
  \begin{align*}
    2^d\mu_{C,\Pi}&=\E_{\bm{I}}\biggl[\,\prod_{k=1}^K\biggl(2^{D_k}\int_{[0,1]^{D_k}}\!\!\! C_k(\bm{u}^{I_{\cdot,k}^{\text{mat}}})\,\rd\Pi(\bm{u}^{I_{\cdot,k}^{\text{mat}}})\biggr)\biggr]\\
    &=\E_{\bm{I}}\biggl[\,\prod_{k=1}^K\biggl(\frac{2^{D_k}-(D_k+1)}{D_k+1}\rho_{\text{S}}^{\text{l},D_k}(C_k^{I_{\cdot,k}^{\text{mat}}})+1\biggr)\biggr]
  \end{align*}
  and thus
  \begin{align*}
    \rho_{\text{S}}^{\text{l},d}=\frac{d+1}{2^d-(d+1)}\biggl(\E_{\bm{I}}\biggl[\,\prod_{k=1}^K\biggl(\frac{2^{D_k}-(D_k+1)}{D_k+1}\rho_{\text{S}}^{\text{l},D_k}(C_k^{I_{\cdot,k}^{\text{mat}}})+1\biggr)\biggr]-1\biggr).
  \end{align*}
  For $\rho_{\text{S}}^{\text{u},d}=\frac{d+1}{2^d-(d+1)}\bigl(2^d\int_{[0,1]^d}\Pi(\bm{u})\,\rd
  C(\bm{u})-1\bigr)$, noting that the expectation in the definition of $C$ is a finite sum combined with linearity with respect to the integrator %
  implies that
  \begin{align*}
    \mu_{\Pi,C}&= \int_{[0,1]^d}\Pi(\bm{u})\,\rd \E_{\bm{I}}\biggl[\biggl(\,\prod_{k=1}^KC_k(\bm{u}^{I_{\cdot,k}^{\text{mat}}})\biggr)\biggr]
= \E_{\bm{I}}\biggl[\,\int_{[0,1]^d}\Pi(\bm{u})\,\rd \biggl(\,\prod_{k=1}^KC_k(\bm{u}^{I_{\cdot,k}^{\text{mat}}})\biggr)\biggr].
  \end{align*}
  For fixed $\bm{I}$, the integrator is a product of (marginal) copulas with
  non-overlapping components where each $u_j$, $j=1,\dots,d$, appears in precisely
  one factor; once more, see Remark~\ref{rem:constr}. Therefore,
  \begin{align*}
    \mu_{\Pi,C}=\E_{\bm{I}}\biggl[\,\int_{[0,1]^{D_K}}\dots\int_{[0,1]^{D_1}}\Pi(\bm{u})\,\biggl(\,\prod_{k=1}^K\rd C_k(\bm{u}^{I_{\cdot,k}^{\text{mat}}})\biggr)\biggr].
  \end{align*}
  With the same idea (based on Remark~\ref{rem:constr}), we have $\Pi(\bm{u})=\prod_{k=1}^K\Pi(\bm{u}^{I_{\cdot,k}^{\text{mat}}})$ and thus
  \begin{align*}
    \mu_{\Pi,C}&=\E_{\bm{I}}\biggl[\,\int_{[0,1]^{D_K}}\dots\int_{[0,1]^{D_1}}\prod_{k=1}^K\Pi(\bm{u}^{I_{\cdot,k}^{\text{mat}}})\,\biggl(\,\prod_{k=1}^K\rd C_k(\bm{u}^{I_{\cdot,k}^{\text{mat}}})\biggr)\biggr]
=\E_{\bm{I}}\biggl[\,\prod_{k=1}^K\int_{[0,1]^{D_k}}\Pi(\bm{u}^{I_{\cdot,k}^{\text{mat}}})\,\rd C_k(\bm{u}^{I_{\cdot,k}^{\text{mat}}})\biggr].
  \end{align*}
  Since $d=\sum_{k=1}^KD_k$, we then obtain
  \begin{align*}
    2^d\mu_{\Pi,C}&=\E_{\bm{I}}\biggl[\,\prod_{k=1}^K\biggl(2^{D_k}\int_{[0,1]^{D_k}}\Pi(\bm{u}^{I_{\cdot,k}^{\text{mat}}})\,\rd C_k(\bm{u}^{I_{\cdot,k}^{\text{mat}}})\biggr)\biggr]
=\E_{\bm{I}}\biggl[\,\prod_{k=1}^K\biggl(\frac{2^{D_k}-(D_k+1)}{D_k+1}\rho_{\text{S}}^{\text{u},D_k}(C_k^{I_{\cdot,k}^{\text{mat}}})+1\biggr)\biggr]
  \end{align*}
  and therefore
  \begin{align*}
    \rho_{\text{S}}^{\text{u},d}&=\frac{d+1}{2^d-(d+1)}\biggl(\E_{\bm{I}}\biggl[\,\prod_{k=1}^K\biggl(\frac{2^{D_k}-(D_k+1)}{D_k+1}\rho_{\text{S}}^{\text{u},D_k}(C_k^{I_{\cdot,k}^{\text{mat}}})+1\biggr)\biggr]-1\biggr).\qedhere
  \end{align*}
\end{proof}

\begin{proof}[Proof of Proposition~\ref{prop:orthant:dep}]
  Recall $C(\bm{u})=\E_{\bm{I}}\bigl[\,\prod_{k=1}^KC_k(\bm{u}^{I_{\cdot, k}^{\text{mat}}})\bigr]$, and
  $\hat{C}(\bm{u})=\E_{\bm{I}}\bigl[\,\prod_{k=1}^K\hat{C}_k(\bm{u}^{I_{\cdot,k}^{\text{mat}}})\bigr]$.
  \begin{enumerate}
  \item By Remark~\ref{rem:constr} we have for $\bm{u}\in[0,1]^d$ that
    \begin{align*}
      C(\bm{u})&=\E_{\bm{I}}\biggl[\,\prod_{k=1}^KC_k(\bm{u}^{I_{\cdot, k}^{\text{mat}}})\biggr]\ge \E_{\bm{I}}\biggl[\,\prod_{k=1}^K\Pi(\bm{u}^{I_{\cdot, k}^{\text{mat}}})\biggr]=\E_{\bm{I}}\biggl[\,\prod_{k=1}^K\prod_{j=1}^du_j^{I_{j, k}^{\text{mat}}}\biggr]\\
      &=\E_{\bm{I}}\biggl[\,\prod_{j=1}^d\prod_{k=1}^Ku_j^{I_{j, k}^{\text{mat}}}\biggr]=\E_{\bm{I}}\biggl[\,\prod_{j=1}^du_j^{\sum_{k=1}^K I_{j, k}^{\text{mat}}}\biggr]=\E_{\bm{I}}\biggl[\,\prod_{j=1}^du_j\biggr]=\Pi(\bm{u}).
    \end{align*}
  \item One has similarly to Part~\ref{prop:orthant:dep:1} that $\hat{C}(\bm{u}) \geq \Pi(\bm{u})$ for all $\bm{u}\in[0,1]^d$. Therefore, $\bar{C}(\bm{u}) = \hat{C}(\bm{1}-\bm{u}) \geq \Pi(\bm{1}-\bm{u}) = \bar{\Pi}(\bm{u})$, i..e, $C$ is PUOD.
  \item Follows from Parts~\ref{prop:orthant:dep:1} and \ref{prop:orthant:dep:2}. \qedhere
  \end{enumerate}
\end{proof}

\begin{proof}[Proof of Proposition~\ref{prop_concordance_ordering}]
  Recall $C(\bm{u})=\E_{\bm{I}}\bigl[\,\prod_{k=1}^KC_k(\bm{u}^{I_{\cdot, k}^{\text{mat}}})\bigr]$, and
  $\hat{C}(\bm{u})=\E_{\bm{I}}\bigl[\,\prod_{k=1}^K\hat{C}_k(\bm{u}^{I_{\cdot,k}^{\text{mat}}})\bigr]$.
  \begin{enumerate}
  \item This statement holds since
    \begin{align*}
      C(\bm{u})=\E_{\bm{I}}\biggl[\,\prod_{k=1}^KC_k(\bm{u}^{I_{\cdot, k}^{\text{mat}}})\biggr]&\le \E_{\bm{I}}\biggl[\,\prod_{k=1}^KC_k'(\bm{u}^{I_{\cdot, k}^{\text{mat}}})\biggr]
=\E_{\bm{I}'}\biggl[\,\prod_{k=1}^KC_k'(\bm{u}^{I_{\cdot, k}^{\text{mat}}})\biggr]=C'(\bm{u})
    \end{align*}
    for all $\bm{u}\in[0,1]^d$.
  \item Note first that $\forall \bm{u}\in[0,1]^d\colon\bar{C}(\bm{u}) \leq \bar{C}'(\bm{u}) \Leftrightarrow \forall \bm{u}\in[0,1]^d\colon\hat{C}(\bm{u}) \leq \hat{C}'(\bm{u})$, where the second condition follows similarly to Part~\ref{prop:concordance:same:I:1}.
  \item Follows from Parts~\ref{prop:concordance:same:I:1} and \ref{prop:concordance:same:I:2}. \qedhere
  \end{enumerate}
\end{proof}

\begin{proof}[Proof of Theorem~\ref{thm:LS:sum}]
  By equation~\eqref{eq:joint:df},
  $H(\bm{x})=\E_{\bm{I}}\bigl[\,\prod_{k=1}^KH_k^{I_{\cdot,k}^{\text{mat}}}(\bm{x}_{I_{\cdot,k}^{\text{mat}}})\bigr]$,
  $\bm{x}\in\IR^d$. Proceeding similarly as in the proof of Proposition~\ref{prop:mult:spearman}, we obtain
  \begin{align*}
    \LS[H](\bm{t})&=\int_{[0,\infty)^d}\!\!\!\!e^{-\bm{t}\T\bm{x}}\,\rd H(\bm{x})=\E_{\bm{I}}\biggl[\,\int_{[0,\infty)^d} e^{-\bm{t}\T\bm{x}}\,\rd\biggl(\,\prod_{k=1}^KH_k^{I_{\cdot,k}^{\text{mat}}}(\bm{x}_{I_{\cdot,k}^{\text{mat}}})\biggr)\biggr]\\
                  &=\E_{\bm{I}}\biggl[\,\int_{[0,\infty)^{D_K}}\dots\int_{[0,\infty)^{D_1}} e^{-\bm{t}\T\bm{x}}\biggl(\,\prod_{k=1}^K\rd H_k^{I_{\cdot,k}^{\text{mat}}}(\bm{x}_{I_{\cdot,k}^{\text{mat}}})\biggr)\biggr]\\
                  &=\E_{\bm{I}}\biggl[\,\int_{[0,\infty)^{D_K}}\dots\int_{[0,\infty)^{D_1}} \prod_{k=1}^K e^{-\bm{t}_{I_{\cdot,k}^{\text{mat}}}\T\bm{x}_{I_{\cdot,k}^{\text{mat}}}}\biggl(\,\prod_{k=1}^K\rd H_k^{I_{\cdot,k}^{\text{mat}}}(\bm{x}_{I_{\cdot,k}^{\text{mat}}})\biggr)\biggr]\\
    &=\E_{\bm{I}}\biggl[\,\prod_{k=1}^K\int_{[0,\infty)^{D_k}} e^{-\bm{t}_{I_{\cdot,k}^{\text{mat}}}\T\bm{x}_{I_{\cdot,k}^{\text{mat}}}}\,\rd H_k^{I_{\cdot,k}^{\text{mat}}}(\bm{x}_{I_{\cdot,k}^{\text{mat}}})\biggr]\\
    &=\E_{\bm{I}}\biggl[\,\prod_{k=1}^K\LS[H_k^{I_{\cdot,k}^{\text{mat}}}](\bm{t}_{I_{\cdot,k}^{\text{mat}}})\biggr]=\E_{\bm{I}}\biggl[\,\prod_{k=1}^K\LS[H_k](\bm{t}I_{\cdot,k}^{\text{mat}})\biggr],\quad\bm{t}\in[0,\infty)^d.
  \end{align*}
  The statement about $\LS[F_{S_n}](t)$ is immediate.
\end{proof}

\begin{proof}[Proof of Proposition~\ref{prop:K:ge:d}]
  Consider the first statement.
  Since almost surely the index vector $\bm{I}$ satisfies $I_{j_1}\neq I_{j_2}$
  for all $1\le j_1<j_2\le d$, the index matrix $I^{\text{mat}}$ almost surely consists of
  precisely $d$ columns $k_1,\dots,k_d$ with the $l$th of these columns
  containing a unique row $j_l$ with an entry $1$, while all other entries in this
  column $k_l$ are $0$. Therefore
  \begin{align*}
    C(\bm{u})&=\E_{\bm{I}}\biggl[\,\prod_{k=1}^KC_k(\bm{u}^{I_{\cdot, k}^{\text{mat}}})\biggr]=\E_{\bm{I}}\biggl[\,\prod_{l=1}^dC_{k_l}(1,\dots,1,u_{j_l},1,\dots,1)\biggr]=\E_{\bm{I}}\biggl[\,\prod_{l=1}^du_{j_l}\biggr]
=\E_{\bm{I}}\bigl[\,\Pi(\bm{u})\bigr]=\Pi(\bm{u}).\qedhere
  \end{align*}
\end{proof}

\begin{proof}[Proof of Proposition~\ref{prop_block_wise_independent}]
  Consider the representation of $C$ in \eqref{eq:index:mixed:cop} of
  Theorem~\ref{thm:index:mixed:form:C}.  For a given $\bm{k}\in\{1,\dots,K\}^d$
  we have for each summand with positive probability in
  \eqref{eq:index:mixed:cop} the form
  \begin{align*}
    p_{\bm{I}}(\bm{k})\prod_{k=1}^KC_k(\bm{u}^{I_{\cdot, k}^{\text{mat}}}) = C_{\ell}(\bm{u}^{I_{\cdot, \ell}^{\text{mat}}}) p_{\bm{I}}(\bm{k})\prod_{\stackrel{k=1}{k\neq\ell}}^K C_k(\bm{u}^{I_{\cdot, k}^{\text{mat}}}),
  \end{align*}
  where $p_{\bm{I}}(\bm{k})=\P(I_1=k_1,\dots,I_d=k_d)>0$ for
  $\bm{k}=(k_1,\dots,k_d)$.  Due to the assumption
  $J_{\ell}(\bm{I}) = (\ell_1,\dots,\ell_m)$ independent of $\bm{I}$ we now have
  $\bm{u}^{I_{\cdot, \ell}^{\text{mat}}} = (u_{\ell_1},\dots,u_{\ell_m})$ independent of $I^{\text{mat}}$
  and the associated $\bm{I}$, and also of $\bm{k}$.  For $\bm{u}\in[0,1]^d$ we
  therefore have
  \begin{align*}
    C(\bm{u})
    &=\E_{\bm{I}}\biggl[\,\prod_{k=1}^KC_k(\bm{u}^{I_{\cdot, k}^{\text{mat}}})\biggr]
      = \sum_{\bm{k}\in\{1,\dots,K\}^d}p_{\bm{I}}(\bm{k})\prod_{k=1}^KC_k(\bm{u}^{I_{\cdot, k}^{\text{mat}}})
      = \sum_{\bm{k}\in\{1,\dots,K\}^d} C_{\ell}(u_{\ell_1,\dots,u_{\ell_m}}) p_{\bm{I}}(\bm{k})\prod_{\stackrel{k=1}{k\neq\ell}}^K C_k(\bm{u}^{I_{\cdot, k}^{\text{mat}}})\\
    &= C_{\ell}(u_{\ell_1,\dots,u_{\ell_m}}) \sum_{\bm{k}\in\{1,\dots,K\}^d} p_{\bm{I}}(\bm{k})\prod_{\stackrel{k=1}{k\neq\ell}}^K C_k(\bm{u}^{I_{\cdot, k}^{\text{mat}}}).\qedhere
  \end{align*}
\end{proof}

\begin{proof}[Proof of Proposition~\ref{prop_extendibility}]
  Fix $k\in\{0,\ldots,K\}$. By extendibility of $C_k$, we can associate an
  exchangeable stochastic process $(U_{k,n})_{n=1}^{\infty}$ with $C_k$ such
  that $(U_{k,j_1},\ldots,U_{k,j_d})\sim C_k$ for all
  $1 \leq j_1 < j_2 < \cdots < j_d$. Since $(U_{k,n})_{n=1}^{\infty}$ is
  exchangeable, de Finetti's theorem implies that, conditional on the
  exchangeable $\sigma$-algebra $\mathcal{E}_k$,
  $(U_{k,n}\,|\,\mathcal{E}_k)_{n=1}^{\infty}$ is a sequence of iid random
  variables; see \cite[Theorem~4.7.9]{durrett2019}. Now define a new stochastic
  process $(U_n)_{n=1}^{\infty}$ via
  \begin{align*}
    U_n:=U_{I_n,n},
  \end{align*}
  where $I_n := G^{-1}(U_{0,n})\in\{1,\ldots,K\}$ is the $n$th element of the exchangeable process $(G^{-1}(U_{0,n}))_{n=1}^{\infty}$ with $(I_{j_1},\ldots,I_{j_d})\sim\bm{I}$ for all integer indices
  $1 \leq j_1 < j_2 < \cdots < j_d$.

  The random variable $U_n$ is a function of $U_{1,n},\ldots,U_{K,n}$ and $I_n$.
  Since functions of independent random variables are independent, this implies
  that $(U_n\,|\,\sigma(\mathcal{E}_0,\ldots,\mathcal{E}_K))_{n=1}^{\infty}$ is
  a sequence of iid random variables
  $(\sigma(\mathcal{E}_0,\ldots,\mathcal{E}_K)$ being the $\sigma$-algebra
  generated by $\mathcal{E}_0,\dots,\mathcal{E}_K$). Hence
  $(U_n)_{n=1}^{\infty}$ is conditionally iid, so exchangeable. In order to show
  that every $d$-dimensional margin of $(U_n)_{n=1}^{\infty}$ is given by $C$,
  consider now indices $1 \leq j_1 < j_2 < \cdots < j_d$.  For a fixed vector
  $\bm{i}=(i_1,\ldots,i_d)$ and associated (deterministic) index-matrix
  $I^{\text{mat}}(\bm{i})$, we then have, as in the proof of
  Theorem~\ref{thm:index:mixed:form:C} (due to the independence of the
  individual stochastic processes), that
  \begin{align*}
    \P(U_{i_1,j_1}\leq u_1,\ldots,U_{i_d,j_d}\leq u_d)
    &= \prod_{k=1}^K \P(U_{k,j_1}\leq u_1^{I^{\text{mat}}(\bm{i})_{1,k}},\ldots,U_{k,j_d}\leq u_d^{I^{\text{mat}}(\bm{i})_{d,k}})\\
    &= \prod_{k=1}^K \P(U_{k,1}\leq u_1^{I^{\text{mat}}(\bm{i})_{1,k}},\ldots,U_{k,d}\leq u_d^{I^{\text{mat}}(\bm{i})_{d,k}})\\
    &= \prod_{k=1}^K C_k(u_1^{I^{\text{mat}}(\bm{i})_{1,k}},\ldots,u_d^{I^{\text{mat}}(\bm{i})_{d,k}}),
  \end{align*}
  where we used the exchangeability of $(U_{k,n})_{n=1}^{\infty}$,
  $1\leq k\leq K$, and the extendibility of $C_k$.  Due to exchangeability of
  $(I_n)_{n=1}^{\infty}$, we have the distributional equality
  $\bm{I}(\bm{j}):=(I_{j_1},\ldots,I_{j_d}) \stackrel{d}{=}
  (I_{1},\ldots,I_{d})=\bm{I}$. For $\bm{i}=(i_1,\ldots,i_d)$, we thus obtain
  that
  \begin{align*}
    \P(U_{j_1}\leq u_1,\ldots,U_{j_d}\leq u_d)&= \P(U_{I_{j_1},j_1}\leq u_1,\ldots,U_{I_{j_d},j_d}\leq u_d)\\ %
    &=\E_{\bm{I}(\bm{j})}\bigl[\P(U_{i_1,j_1}\leq u_1,\ldots,U_{i_d,j_d}\leq u_d)\,\big|\, \bm{I}(\bm{j})=\bm{i}\bigr]\\ %
    &=\E_{\bm{I}}\bigl[\P(U_{i_1,j_1}\leq u_1,\ldots,U_{i_d,j_d}\leq u_d)\,\big|\, \bm{I}=\bm{i}\bigr]\\ %
    &=\E_{\bm{I}}\biggl[\,\prod_{k=1}^K C_k(u_1^{I^{\text{mat}}(\bm{i})_{1,k}},\ldots,u_d^{I^{\text{mat}}(\bm{i})_{d,k}}) \,\bigg|\, \bm{I}=\bm{i}\biggr]\\ %
    &=\E_{\bm{I}}\biggl[\,\prod_{k=1}^K C_k(u_1^{I^{\text{mat}}_{1,k}},\ldots,u_d^{I^{\text{mat}}_{d,k}})\biggr]=C(u_1,\ldots,u_d), %
  \end{align*}
  showing that the index-mixed copula $C$ is extendible.
\end{proof}

\section{EFGM copulas}\label{sec:EFGM}
Multivariate distributions build on Eyraud--Farlie--Gumbel--Morgenstern (EFGM) copulas are
popular in the literature, see for example \cite{cambanis1977},
\cite{johnsonkott1975}, \cite[Section~44.13]{kotzbalakrishnanjohnson2000} or
\cite[Chapter~5]{kotzdrouet2001}, to name a few.
They originally date back to the work of
\cite{eyraud1936}, \cite{morgenstern1956}, \cite{farlie1960} and
\cite{gumbel1960}; see also
\cite[Section~1.6.3]{jaworskidurantehaerdlerychlik2010}. The popularity of EFGM
copulas mostly stems from their analytical tractability, oftentimes allowing for
explicit calculations of dependence-related quantities of interest. However, it
is well known that Spearman's rho and Kendall's tau of a bivariate EFGM copula
fall within the ranges $\left[-1/3,1/3 \right]$ and $\left[ -2/9,2/9 \right]$,
respectively; see
\cite[Example~5.7]{nelsen2006}. %
Several bivariate extensions of this family have been proposed in the literature
with the intention of relaxing this limitation; see for example
\cite{lai2000new}, \cite{amblard2002symmetry}, \cite{rodriguez2004new}, and
\cite{fischer2007constructing}. A four-parameter extension, with Spearman's rho
in $[-0.48, 0.5]$, has been studied by \cite{bairamov2001new}.  Another
extension is by \cite{amblard2009new}, for which Spearman's rho is within the
interval $[-3/4, 1]$. More recently, \cite{hurlimann2017comprehensive} developed
a comprehensive extension (called the Hoeffding-Fr{\'e}chet extension of EFGM
copulas) that allows for any Spearman's rho and Kendall's tau to be attained,
but it has not been generalized to higher dimensions.  Also, EFGM copulas are
tail independent. Because of these main limitations alone, the use of EFGM
copulas in typical copula applications in finance, insurance and risk management is
essentially non-existing. %

In this appendix, we consider a recent construction of EFGM copulas and reveal,
through the idea of index-mixing, properties of EFGM copulas such as why it only
has a limited range of concordance, no tail dependence, and why its construction
is so restricting in comparison to index-mixed copulas.

\subsection{Construction of EFGM copulas}\label{sec:EFGM:construct}
Let $\Pi(\bm{u})=\prod_{j=1}^du_j$ denote the independence copula. Furthermore, for $\bm{p}\in[0,1]^d$ let $\bm{B}\sim\B^{C_{\bm{B}}}_d(\bm{p})$ denote that the random vector $\bm{B}=(B_1,\ldots,B_d)$ has stochastic
representation $\bm{B}=(F_{\B(p_1)}^{-1}(V_1),\dots,F_{\B(p_d)}^{-1}(V_d))$,
where $\bm{V}=(V_1,\ldots,V_d)\sim C_{\bm{B}}$ for some $d$-dimensional copula $C_{\bm{B}}$ and
$F_{\B(p)}(x)=(1-p)\I_{[0,\infty)}(x)+p\I_{[1,\infty)}(x)$, $x\in\IR$, is
the distribution function of the Bernoulli distribution $\B(p)$ with success
probability $p$. Moreover, denote by $\mathcal{B}_d(\bm{p}):=\{B^{C_{\bm{B}}}_d(\bm{p})| C_{\bm{B}} \text{ is a $d$-copula}\}$ the Fr\'echet
class of all $d$-dimensional distributions with $j$th margin being $\B(p_j)$,
$j=1,\dots,d$.  If the components of $\bm{p}$ are all equal to $p\in[0,1]$, we
simply write $\B^{C_{\bm{B}}}_d(p)$ or $\mathcal{B}_d(p)$; for example, $\B^{C_{\bm{B}}}_d(p)$ means
$\B^{C_{\bm{B}}}_d(\bm{p})$ for $\bm{p}=(p,\dots,p)\in[0,1]^d$.

In contrast to the commonly adopted notion of a `stochastic representation' (see
our Section~\ref{sec:intro}), \cite{blierwongcossettemarceau2022a} call
`stochastic representation' a specific analytical form of a $d$-dimensional EFGM
copula, namely
\begin{align}
  C^{\text{EFGM}}(\bm{u})=\E_{\bm{B}}\biggl[\,\prod_{j=1}^dF_{\tilde{U}_{(B_j+1)}}(u_j)\biggr],\quad\bm{u}\in[0,1]^d,\label{eq:EFGM:cop:mix}
\end{align}
where $\bm{B}=(B_1,\dots,B_d)\in\mathcal{B}_d(1/2)$, $\tilde{U}_1,\tilde{U}_2\isim\U(0,1)$
and $F_{\tilde{U}_{(1)}}(x)=1-(1-x)^2=x(2-x)$, $x\in[0,1]$, is the distribution function of
$\tilde{U}_{(1)}=\min\{\tilde{U}_1,\tilde{U}_2\}\sim\Beta(1,2)$ and
$F_{\tilde{U}_{(2)}}(x)=x^2$, $x\in[0,1]$, is the distribution function of
$\tilde{U}_{(2)}=\max\{\tilde{U}_1,\tilde{U}_2\}\sim\Beta(2,1)$.

\cite[Theorem~2]{blierwongcossettemarceau2022c} present a stochastic
representation of $\bm{U}\sim C^{\text{EFGM}}$ in the sense of our Section~\ref{sec:intro} via
\begin{align}
  \bm{U}=(\bm{1}-\bm{B})\tilde{\bm{U}}_{(1)}+\bm{B}\tilde{\bm{U}}_{(2)},\label{eq:EFGM:stoch:rep}
\end{align}
where $\bm{1}=(1,\dots,1)$ and $\tilde{\bm{U}}_{(k)}=(\tilde{U}_{1,(k)},\dots,\tilde{U}_{d,(k)})$, $k=1,2$,
for $\tilde{U}_{j,k}\isim\U(0,1)$, $j=1,\dots,d$, $k=1,2$. For simplicity of the presented arguments,
we call
\begin{align}\label{eq:rowwise:sorted}
R := (\tilde{\bm{U}}_{(1)},\tilde{\bm{U}}_{(2)})
  &=\begin{pmatrix}
    \tilde{U}_{1,(1)} & \tilde{U}_{1,(2)}\\
    \vdots & \vdots\\
    \tilde{U}_{d,(1)} & \tilde{U}_{d,(2)}
  \end{pmatrix}
  =\begin{pmatrix}
    \min\{\tilde{U}_{1,1},\tilde{U}_{1,2}\} & \max\{\tilde{U}_{1,1},\tilde{U}_{1,2}\}\\
    \vdots & \vdots\\
    \min\{\tilde{U}_{d,1},\tilde{U}_{d,2}\} & \max\{\tilde{U}_{d,1},\tilde{U}_{d,2}\}
  \end{pmatrix}
\end{align}
the \emph{rowwise sorted uniform matrix} of the construction. Note that an alternative
representation for~\eqref{eq:EFGM:stoch:rep} is $\bm{U}=\tilde{\bm{U}}_{(1)}^{\bm{1}-\bm{B}}\tilde{\bm{U}}_{(2)}^{\bm{B}}$,
so the representation is not unique.
The following result establishes the correctness
of~\eqref{eq:EFGM:cop:mix} and \eqref{eq:EFGM:stoch:rep} directly, avoiding the detour
via exponential distributions and Sklar's Theorem as in \cite{blierwongcossettemarceau2022a}.
\begin{proposition}[EFGM copulas]
  Let $\tilde{\bm{U}}_1,\tilde{\bm{U}}_2\isim\Pi$ be independent of
  $\bm{B}\in\mathcal{B}_d(1/2)$.  Then
  $\bm{U}=(\bm{1}-\bm{B})\tilde{\bm{U}}_{(1)}+\bm{B}\tilde{\bm{U}}_{(2)}$ in~\eqref{eq:EFGM:stoch:rep} follows
  the (bona fide) copula $C^{\text{EFGM}}$ in~\eqref{eq:EFGM:cop:mix} which can alternatively be written as
  \begin{align}
    C^{\text{EFGM}}(\bm{u})=\E_{\bm{B}}\biggl[\,\prod_{j=1}^d\bigl(u_j(1+(1-u_j)(-1)^{B_j})\bigr)\biggr],\quad\bm{u}\in[0,1]^d.\label{eq:EFGM:cop:mix:expanded}%
  \end{align}
\end{proposition}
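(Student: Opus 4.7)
The plan is to separate the claim into three logically independent steps: (i) derive the distribution function of $\bm{U}$ defined in \eqref{eq:EFGM:stoch:rep} and show it agrees with \eqref{eq:EFGM:cop:mix}; (ii) verify that this distribution function has standard uniform univariate margins and is therefore a bona fide copula; and (iii) rewrite it algebraically in the form \eqref{eq:EFGM:cop:mix:expanded}.

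For step (i), I would condition on $\bm{B}=\bm{b}$ and exploit the structure of the rowwise sorted uniform matrix $R$ in \eqref{eq:rowwise:sorted}. The crucial observation is that, although the two entries $\tilde{U}_{j,(1)}$ and $\tilde{U}_{j,(2)}$ within a given row $j$ are order statistics (hence dependent), the rows of $R$ are built from independent pairs $\tilde{U}_{j,1},\tilde{U}_{j,2}$ and are therefore independent across $j$. Given $\bm{B}=\bm{b}$, the selection rule $U_j=(1-b_j)\tilde{U}_{j,(1)}+b_j\tilde{U}_{j,(2)}=\tilde{U}_{j,(b_j+1)}$ picks exactly one entry from each row, so the components of $\bm{U}$ are conditionally independent. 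This directly yields
\begin{align*}
\P(\bm{U}\le\bm{u}\mid\bm{B}=\bm{b})=\prod_{j=1}^d F_{\tilde{U}_{(b_j+1)}}(u_j),
\end{align*}
and taking expectation with respect to $\bm{B}$ gives~\eqref{eq:EFGM:cop:mix}.

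For step (ii), I would set $u_l=1$ for all $l\neq j$. Since $F_{\tilde{U}_{(b_l+1)}}(1)=1$, only the $j$th factor survives. Using that $\bm{B}\in\mathcal{B}_d(1/2)$ forces $B_j\sim\B(1/2)$, the expectation collapses to $\tfrac{1}{2}F_{\tilde{U}_{(1)}}(u_j)+\tfrac{1}{2}F_{\tilde{U}_{(2)}}(u_j)=\tfrac{1}{2}(u_j(2-u_j)+u_j^2)=u_j$, confirming uniform margins and hence that $C^{\text{EFGM}}$ is a copula.

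For step (iii), I would verify the identity $F_{\tilde{U}_{(b+1)}}(u)=u\bigl(1+(1-u)(-1)^b\bigr)$ by checking both cases: $b=0$ gives $u(2-u)=F_{\tilde{U}_{(1)}}(u)$ and $b=1$ gives $u^2=F_{\tilde{U}_{(2)}}(u)$. Substituting this into \eqref{eq:EFGM:cop:mix} immediately produces \eqref{eq:EFGM:cop:mix:expanded}. The argument is largely mechanical; the only subtlety to flag is the independence structure invoked in step (i)—namely, that the sorting is applied \emph{rowwise} in \eqref{eq:rowwise:sorted}, so different rows remain independent while each Bernoulli $B_j$ selects a single entry from its row. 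No serious obstacle is anticipated.
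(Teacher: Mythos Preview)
Your proposal is correct and follows essentially the same approach as the paper: both condition on $\bm{B}$, exploit the rowwise independence of $R$ to obtain the product form~\eqref{eq:EFGM:cop:mix}, verify the identity $F_{\tilde{U}_{(b+1)}}(u)=u(1+(1-u)(-1)^b)$ to reach~\eqref{eq:EFGM:cop:mix:expanded}, and check uniform margins. The only cosmetic difference is that the paper establishes the alternative form before the marginal check and uses $\E[(-1)^{B_j}]=0$ for the latter, whereas you compute the mixture $\tfrac{1}{2}F_{\tilde{U}_{(1)}}(u_j)+\tfrac{1}{2}F_{\tilde{U}_{(2)}}(u_j)$ directly; both are equivalent.
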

\begin{proof}
  Let $\tilde{\bm{U}}_{(k)}=(\tilde{U}_{1,(k)},\dots,\tilde{U}_{d,(k)})$,
  $k=1,2$, be defined as before and note that each random vector has copula $\Pi$ (and beta
  margins) since measurable functions of independent random variables are
  independent. Furthermore,
  $\P((1-B_j)\tilde{U}_{j,(1)}+B_j\tilde{U}_{j,(2)}\le
  u_j\,|\,B_j=b_j)=F_{\tilde{U}_{(b_j+1)}}(u_j)$, $b_j\in\{0,1\}$. Therefore,
  \begin{align*}
    \P(\bm{U}\le\bm{u})&=\E_{\bm{B}}\left[\P\left((\bm{1}-\bm{B})\tilde{\bm{U}}_{(1)}+\bm{B}\tilde{\bm{U}}_{(2)}\le\bm{u}\,|\,\bm{B}\right)\right]\\
                       &=\E_{\bm{B}}\left[\P\left((1-B_j)\tilde{U}_{j,(1)}+B_j\tilde{U}_{j,(2)}\le u_j\ \forall\,j\,|\,\bm{B}\right)\right]\\
                       &=\E_{\bm{B}}\biggl[\,\prod_{j=1}^d\P\left((1-B_j)\tilde{U}_{j,(1)}+B_j\tilde{U}_{j,(2)}\le u_j\,|\,B_j\right)\biggr]\\
    &=\E_{\bm{B}}\biggl[\,\prod_{j=1}^dF_{\tilde{U}_{(B_j+1)}}(u_j)\biggr],\quad\bm{u}\in[0,1]^d,
  \end{align*}
showing that the joint distribution of $\bm{U}$ is indeed equal to $C^{\text{EFGM}}$ given in \eqref{eq:EFGM:cop:mix}.
  Noting that
  \begin{align*}
    F_{\tilde{U}_{(b_j+1)}}(u_j)=u_j(1+(1-u_j)(-1)^{b_j}),\quad u_j\in[0,1],\ b_j\in\{0,1\},
  \end{align*}
  we obtain the form of $C^{\text{EFGM}}$ as in~\eqref{eq:EFGM:cop:mix:expanded}.
  With $u_k=1$ for all $k\neq j$ and noting that $\E[(-1)^{B_j}]=0$,
  $j=1,\dots,d$, we have the $j$th margin with $u_j\in[0,1]$ that
  \begin{align*}
    \P(U_j\le u_j)&=\E_{B_j}\bigl[F_{\tilde{U}_{(B_j+1)}}(u_j)\bigr]=\E_{B_j}\bigl[u_j(1+(1-u_j)(-1)^{B_j})\bigr]
=u_j\bigl(1+(1-u_j)\E[(-1)^{B_j}]\bigr)=u_j,
  \end{align*}
  so $C^{\text{EFGM}}$ is indeed a copula.
\end{proof}

The classical analytical form of EFGM copulas, see also
\cite[Section~1.6.3]{jaworskidurantehaerdlerychlik2010}, can be derived
from~\eqref{eq:EFGM:cop:mix:expanded} as follows.
\begin{proposition}[Classical analytical form of EFGM copulas]\label{prop:EFGM:cop:classical:form}
For $\bm{u}\in[0,1]^d$ the copula $C^{\text{EFGM}}$ in~\eqref{eq:EFGM:cop:mix:expanded} is given by
  \begin{align}
    C^{\text{EFGM}}(\bm{u})=\biggl(\,\prod_{j=1}^d u_j\biggr)\biggl(1+\sum_{k=2}^d\sum_{1\le j_1<\dots<j_k\le d}\!\!\!\!\theta_{j_1,\dots,j_k}\prod_{l=1}^k(1-u_{j_l})\biggr),\label{eq:EFGM:cop:classical:form}
  \end{align}
  where $\theta_{j_1,\dots,j_k}=\E_{\bm{B}}[(-1)^{B_{j_1}+\dots+B_{j_k}}]$ for all $1\le j_1<\dots<j_k\le d$, $k=2,\dots,d$.
\end{proposition}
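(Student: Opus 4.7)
The plan is to expand the product in~\eqref{eq:EFGM:cop:mix:expanded} using distributivity, then interchange the expectation with the resulting finite sum. First, I would pull the factor $\prod_{j=1}^d u_j$ outside, rewriting
\begin{align*}
  C^{\text{EFGM}}(\bm{u}) = \biggl(\,\prod_{j=1}^d u_j\biggr)\E_{\bm{B}}\biggl[\,\prod_{j=1}^d\bigl(1+(1-u_j)(-1)^{B_j}\bigr)\biggr].
\end{align*}

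Next, I would expand the inner product by distributivity. Each factor contributes either the summand $1$ or the summand $(1-u_j)(-1)^{B_j}$, so the product becomes a sum indexed by subsets $S\subseteq\{1,\ldots,d\}$ describing which factors contribute the nontrivial term:
\begin{align*}
  \prod_{j=1}^d\bigl(1+(1-u_j)(-1)^{B_j}\bigr) = \sum_{S\subseteq\{1,\ldots,d\}}\prod_{j\in S}(1-u_j)(-1)^{B_j}.
\end{align*}
Applying linearity of expectation (the sum is finite) gives a sum of the form $\sum_{S}\bigl(\prod_{j\in S}(1-u_j)\bigr)\E_{\bm{B}}\bigl[\prod_{j\in S}(-1)^{B_j}\bigr]$, where for $S=\{j_1,\ldots,j_k\}$ with $j_1<\dots<j_k$ the expectation is precisely $\theta_{j_1,\ldots,j_k}$.

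The final step is to separate the $|S|=0$ and $|S|=1$ contributions from the rest. The empty-set term contributes $1$. For $|S|=\{j\}$ we have $\E_{\bm{B}}[(-1)^{B_j}] = (-1)^0\P(B_j=0)+(-1)^1\P(B_j=1) = \tfrac{1}{2}-\tfrac{1}{2}=0$, using that $B_j$ has marginal $\B(1/2)$ since $\bm{B}\in\mathcal{B}_d(1/2)$. Hence the singleton terms vanish, and reindexing the remaining subsets by their ordered elements $1\le j_1<\dots<j_k\le d$ for $k=2,\ldots,d$ yields~\eqref{eq:EFGM:cop:classical:form}. No real obstacle arises; the only point requiring care is the vanishing of the singleton terms, which is the one place the marginal constraint $B_j\sim\B(1/2)$ enters the argument, and which is why the sum in~\eqref{eq:EFGM:cop:classical:form} starts at $k=2$.
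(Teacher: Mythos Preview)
Your proposal is correct and follows essentially the same approach as the paper: factor out $\prod_{j=1}^d u_j$, expand the remaining product over subsets (the paper invokes the ``multi-binomial theorem'' for this step), use linearity of expectation, and observe that the singleton terms vanish because $\E[(-1)^{B_j}]=0$ for $B_j\sim\B(1/2)$. The only cosmetic difference is that the paper writes the expansion directly in terms of ordered index tuples rather than subsets, but the argument is the same.
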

\begin{proof}
  We have
  \begin{align*}
    C^{\text{EFGM}}(\bm{u})&=\E_{\bm{B}}\biggl[\,\prod_{j=1}^d\bigl(u_j(1+(1-u_j)(-1)^{B_j})\bigr)\biggr]=\biggl(\,\prod_{j=1}^du_j\biggr)\,\E_{\bm{B}}\biggl[\,\prod_{j=1}^d(1+(1-u_j)(-1)^{B_j})\biggr].
  \end{align*}
  By the multi-binomial
  theorem, %
  \begin{align*}
    \prod_{j=1}^d(1+a_j)=\sum_{\bm{j}\in\{0,1\}^d}\prod_{l=1}^da_j^{j_l}=1+\sum_{k=1}^da_k+\sum_{k=2}^d\,\sum_{1\le
    j_1<\dots<j_k\le d}\,\prod_{l=1}^ka_{j_l},
  \end{align*}
  and so, since $\E[(-1)^{B_k}]=0$ for all $k=1,\dots,d$,
  \begin{align*}
    \E_{\bm{B}}\biggl[\,\prod_{j=1}^d(1+(1-u_j)(-1)^{B_j})\biggr]
    =1+0+\sum_{k=2}^d\,\sum_{1\le
    j_1<\dots<j_k\le d}\!\!\!\!\E_{\bm{B}}[(-1)^{B_{j_1}+\dots+B_{j_k}}]\prod_{l=1}^k(1-u_{j_l}),
  \end{align*}
  from which the form of $C^{\text{EFGM}}$ as in~\eqref{eq:EFGM:cop:classical:form} follows
  when setting $\theta_{j_1,\dots,j_k}=\E_{\bm{B}}[(-1)^{B_{j_1}+\dots+B_{j_k}}]$ for all $1\le j_1<\dots<j_k\le d$, $k=2,\dots,d$.
\end{proof}

From Proposition~\ref{prop:EFGM:cop:classical:form} it follows that EFGM copulas are absolutely continuous with density
\begin{align*}
  c^{\text{EFGM}}(\bm{u})&=\E_{\bm{B}}\biggl[\,\prod_{j=1}^d(1+(1-2u_j)(-1)^{B_j})\biggr]
  = 1+\sum_{k=2}^d\sum_{1\le j_1<\dots<j_k\le d}\!\!\!\!\theta_{j_1,\dots,j_k}\prod_{l=1}^k(1-2u_{j_l}),\quad\bm{u}\in(0,1)^d,
\end{align*}
where the last equality follows as in the proof of
Proposition~\ref{prop:EFGM:cop:classical:form}. It follows that the $2^d-d-1$
EFGM copula parameters need to fulfill (the rather complicated set of inequalities)
\begin{align*}
  1+\sum_{k=2}^d\sum_{1\le j_1<\dots<j_k\le d}\theta_{j_1,\dots,j_k}\eps_{j_1}\dots\eps_{j_k}\ge 0
\end{align*}
for all $\eps_{j_1},\dots,\eps_{j_k}\in\{-1,1\}^d$. In turn, all parameters
satisfying this condition lead to valid EFGM copula densities; see also
\cite{cambanis1977}.

\subsection{Understanding the construction of EFGM copulas through index-mixing}\label{sec:EFGM:interpret}
A first observation is that $\bm{U}$ in~\eqref{eq:EFGM:stoch:rep} can be
written as
\begin{align}
  \bm{U} = R_{(\bm{I})}:=\begin{pmatrix} R_{1,I_1}\\\vdots\\ R_{d,I_d}\end{pmatrix}\quad\text{for}\ \bm{I}=\bm{1}+\bm{B}\ \text{with}\ \bm{B}\in\mathcal{B}_d(1/2),\label{eq:stoch:rep:reint}
\end{align}
where $R$ is the associated rowwise sorted uniform matrix; see \eqref{eq:rowwise:sorted}.
We see from~\eqref{eq:stoch:rep:reint} that the random index vector $\bm{I}$
selects, with its $j$th component $I_j$, componentwise order statistics (either
the minimum or the maximum) of two iid $\U(0,1)$ random variables.
In other terms, we can view the construction of $\bm{U}\sim C^{\text{EFGM}}$ as
randomly selecting one element of each row of the rowwise sorted uniform
matrix $R$ with equal probability via $\bm{I}$.
The adverbial phrase ``with equal probability'' in the last sentence is important.
Only because $\bm{B}$ has \emph{symmetric} Bernoulli margins $\B(1/2)$ and we
thus obtain that $I_j$ randomly selects with equal probability between
$\min\{\tilde{U}_{j,1},\tilde{U}_{j,2}\}$ and
$\max\{\tilde{U}_{j,1},\tilde{U}_{j,2}\}$ do we obtain that $U_j$ indeed
has $\U(0,1)$ margins and thus that $\bm{U}$ follows a copula, the EFGM copula.  As
such, EFGM copulas are not only limited in the sense that the construction starts
from iid $\U(0,1)$ variables, but also in the sense that the set of
distributions of $\bm{I}$ is restricted. Neither of these limitations exists
for index-mixed copulas.

\subsection{Understanding the strange and limited dependence of EFGM copulas}\label{sec:EFGM:limited:dep}
EFGM copulas exhibit the strongest concordance if the components of the index
vector $\bm{I}$ are comonotone, in which case we have the stochastic
representation $\bm{I}=\bm{1}+(F_{\B(1/2)}^{-1}(U),\dots,$ $F_{\B(1/2)}^{-1}(U))$
for $U\sim\U(0,1)$, so $\bm{I}\in\{\bm{1},\bm{2}\}$ with probability $1/2$ each.
All that comonotonicity of $\bm{I}$ does is enforcing that \emph{the same}
column of the rowwise sorted uniform matrix is returned as $\bm{U}$ (so
$\bm{U}=\tilde{\bm{U}}_{(1)}$ or $\bm{U}=\tilde{\bm{U}}_{(2)}$), and each of the
two columns is picked at random with probability $1/2$. However, for each fixed
$k=1,2$, the copula of $\tilde{\bm{U}}_{(k)}$ is the copula between $d$ $k$th
order statistics of iid $\U(0,1)$ random variables, which due to the
independence between the rows is indeed the independence copula! So the copula
of $\tilde{\bm{U}}_{(1)}$ is the independence copula (the margins are
$\Beta(1,2)$), and equally the copula of $\tilde{\bm{U}}_{(2)}$ is the
independence copula (the margins are $\Beta(2,1)$). Only due to the fact that we
\emph{randomly} select each of these two column vectors whose components are
independent do we obtain a random vector $\bm{U}$ with dependent components,
and, as mentioned before, is this random vector $\bm{U}$ guaranteed to have
$\U(0,1)$ margins. It is thus not surprising that the resulting dependence
(tail, concordance) is fairly limited.

\begin{example}[Samples from EFGM copulas]
  Figure~\ref{fig:EFGM} shows samples of size 10\,000 from two EFGM copulas. The
  sample on the left-hand side was constructed with independent $1+\B(1/2)$
  components of the index vector $\bm{I}$, so the resulting EFGM copula is the
  independence copula. The sample on the right-hand side was constructed with
  comonotone $1+\B(1/2)$ components of the index vector $\bm{I}$, so the
  resulting EFGM copula is the most concordant EFGM copula there is.  While
  differences between the scatter plots can be spotted (mostly in the corners of
  $[0,1]^2$), they are rather minute, and classical log-returns of financial
  data, say, exhibit substantially more concordance; see, for example,
  \cite{hofertoldford2018a}.  We know that Spearman's rho of this maximally
  concordant EFGM copula is only $\rho_{\text{S}}=1/3$, and Kendall's tau is only
  $\tau=2/9$.
  \begin{figure}[htbp]
    \includegraphics[width=0.48\textwidth]{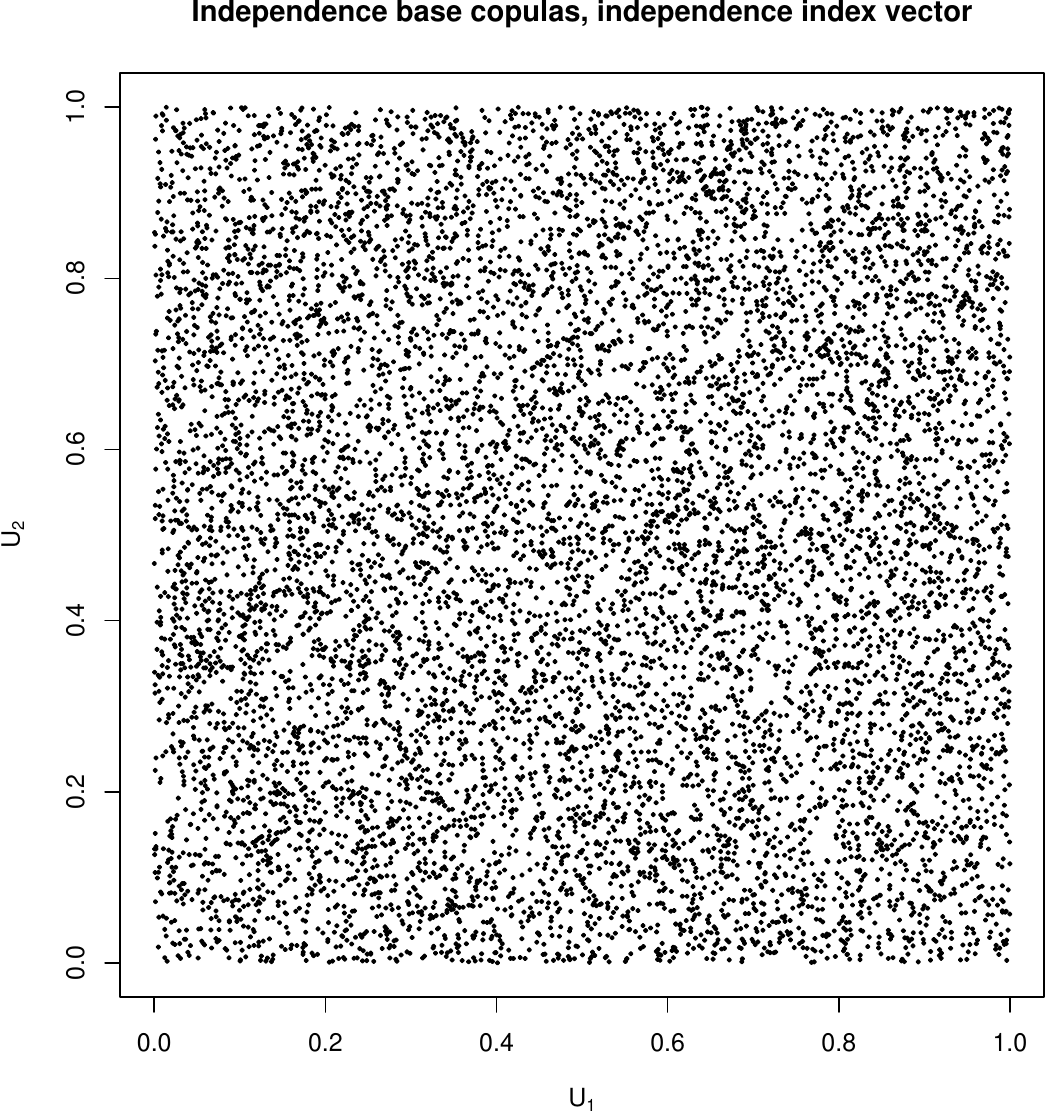}%
    \hfill
    \includegraphics[width=0.48\textwidth]{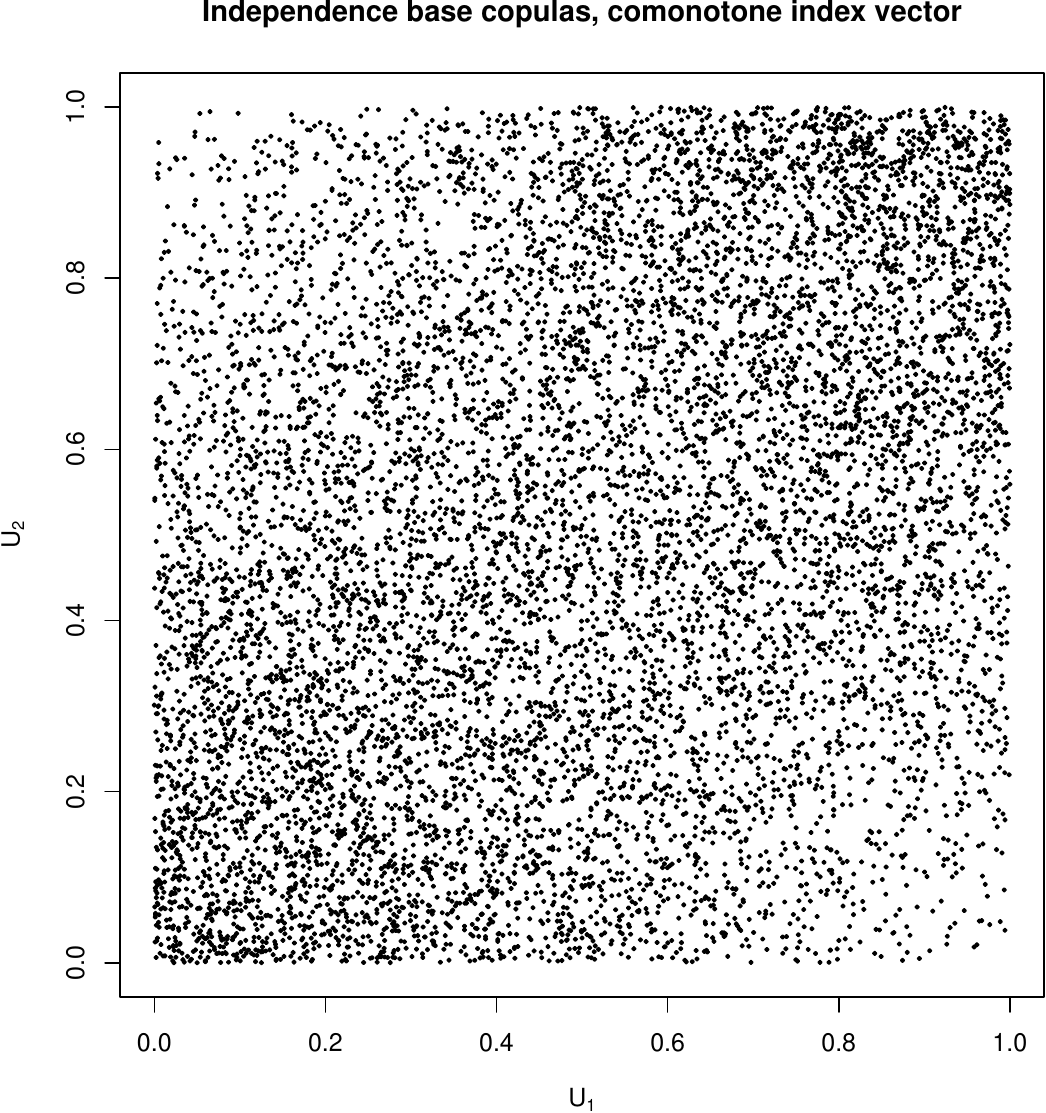}%
    \caption{Sample of size 10\,000 from EFGM copulas with independent (left) and
      comonotone (right) index vector $\bm{I}$ with $1+\B(1/2)$ margins. The
      resulting copulas are the independence copula and the most concordant EFGM
      copula, respectively.}
    \label{fig:EFGM}
  \end{figure}
\end{example}

\subsection{Limitations of sorting}\label{sec:unnatural:sort}
The rowwise sorting of the elements in $R$ is rather unnatural for two main
reasons.  First, consider the $j$th row of $R$. As mentioned before, the index
vector $\bm{I}$ selects one of the two elements in the $j$th row of the rowwise
sorted uniform matrix $R$ at random (which guarantees that $U_j\sim\U(0,1)$), so
sorting the values first does not seem to make a difference. However, this is
required to have any chance of obtaining dependent components in $\bm{U}$ since
if the rowwise sorted uniform matrix would just consist of iid $\U(0,1)$ random
variables without the sorting step, then we would always obtain $\bm{U}\sim\Pi$,
regardless of the dependence in $\bm{I}$ (be it independent or comonotone).

Second, the dependence between the elements of the $j$th row of the sorted
uniform matrix $R$ is rather a nuisance due to the fact that the $j$th
component $U_j$ of $\bm{U}$ is still required to be $\U(0,1)$ distributed for
all $j=1,\dots,d$. So the fact that the entries in each row of $R$
are sorted requires us to draw an element from each row with \emph{equal} probability.
This limits the flexibility in selecting an element from the $j$th row
of the rowwise sorted uniform matrix.

\subsection{Why not incorporating sorting into the construction of index-mixed copulas}
While Definition~\ref{def:index:mixed:copulas} purposely avoids the rowwise
sorting encountered in the EFGM copula construction (see
Section~\ref{sec:EFGM:construct} and Section~\ref{sec:EFGM:interpret}), it is in
principle possible to incorporate rowwise sorting of the copula matrix as an
additional step, which would then also imply that index-mixed copulas truly
generalize EFGM copulas. However, in order to retain the full flexibility of the
index distribution (compare with Section~\ref{sec:unnatural:sort}), it is
necessary to ensure that all entries of the copula matrix follow a $\U(0,1)$
distribution. In case of rowwise sorting this could be accomplished by applying
the $\Beta(k, K+1-k)$ distribution function to each entry in column $k$ of the
rowwise sorted copula matrix $\tilde{U}$. However, indexing $\tilde{U}$ by, say,
$\bm{I}=(k,\dots,k)$ (so selecting the $k$th column of $\tilde{U}$) makes it
harder to imagine and determine the resulting copula since the components in the
$k$th column of $\tilde{U}$ now (after rowwise sorting) can come from any of the
$K$ different base copulas, whereas without the rowwise sorting we know that the
$k$th column is distributed as the $k$th base copula $C_k$. In fact, for every
realization of a rowwise sorted copula matrix $\tilde{U}$, there is precisely
one index vector $\bm{I}$ that selects those $d$ elements from $\tilde{U}$ that
follow $C_k$.  Thus incorporating rowwise sorting into $\tilde{U}$ does not seem
to have any conceptual advantage, it rather convolutes the construction by
making it harder to grasp how dependence arises from index-mixing.  In contrast,
index-mixed copulas have several advantages when compared to EFGM copulas, such
as being able to capture the full range of concordance through dependent (as
opposed to independent) random vectors entering the construction, also $K>2$
such random vectors can enter the construction, the index vector not being
restricted to symmetric distributions, and, for example, hierarchies of
dependencies being easily incorporated by choosing hierarchical base
copulas.

\end{document}

%
%
%
%
